\documentclass[11pt,a4paper]{amsart}
\usepackage[T1]{fontenc}
\usepackage[utf8]{inputenc}
\usepackage{lmodern,microtype}
\usepackage[margin=22mm]{geometry}
\usepackage{setspace}
\usepackage{amsmath,amssymb,amsthm}
\usepackage{booktabs}
\usepackage[numbers,sort&compress]{natbib}
\usepackage{xurl}
\usepackage[unicode,hidelinks]{hyperref}
\hypersetup{pdftitle={Stability of the Courtade--Kumar inequality},
            pdfauthor={Vu Khac Ky and Tuan Tran}}
\newcommand{\E}{\mathbb E}
\newcommand{\PP}{\mathbb P}
\DeclareMathOperator{\Var}{Var}
\DeclareMathOperator{\atanh}{atanh}
\newtheorem{theorem}{Theorem}[section]
\newtheorem*{cktheorem}{Theorem}
\newtheorem{lemma}[theorem]{Lemma}
\newtheorem{proposition}[theorem]{Proposition}
\newtheorem{corollary}[theorem]{Corollary}
\numberwithin{equation}{section}
\allowdisplaybreaks[2]
\title[Stability of the Courtade--Kumar inequality]
{Stability of the Courtade--Kumar inequality}

\author{Vu Khac Ky}
\address{Department of Mathematics, FPT University, Hanoi, Vietnam}
\email{kyvk2@fpt.edu.vn}

\author{Tuan Tran}
\address{School of Mathematical Sciences, University of Science and Technology of China, Anhui, China}
\thanks{Tuan Tran was supported by the Excellent
Young Talents Program (Overseas) of the National Natural Science Foundation of China under Grant No.
GG0010007003.}
\email{trantuan@ustc.edu.cn}

\begin{document}
\raggedbottom

\begin{abstract}
We prove dimension-independent stability for the Courtade--Kumar
inequality: a Boolean function $f:\{-1,1\}^n\to\{-1,1\}$ whose information is close to the
dictator value is close in probability to a signed dictator. The correlation dependence is sharp in order near zero and,
for increasing functions, also at the noiseless endpoint.
\end{abstract}
\maketitle
\pagestyle{plain}

\section{Introduction}\label{sec:introduction}
How much information can a one-bit summary of independent fair bits
carry about a noisy observation of those bits? The Courtade--Kumar conjecture \cite{CourtadeKumar2014}, a longstanding problem in information theory, asserted that keeping one input bit is optimal. We study the
corresponding stability question: if a summary retains nearly as much
information, how often can it differ from one of the original bits
or its complement? Our bounds are independent of the number of input
bits.

To formulate the question, let $X$ be uniform on $\{-1,1\}^n$,
where $n\ge1$. For $0\le\rho\le1$, its noisy observation
$Y_\rho\in\{-1,1\}^n$ is obtained by flipping each coordinate of $X$
independently with probability $(1-\rho)/2$.
A one-bit summary is $f(X)$, where
$f:\{-1,1\}^n\to\{-1,1\}$ is Boolean. Keeping one coordinate,
possibly changing its sign, corresponds to a \emph{signed dictator}
$\sigma X_i$, with $1\le i\le n$ and $\sigma\in\{-1,1\}$. We call $f$ \emph{increasing} if $f(x)\le f(y)$ whenever
$x_i\le y_i$ for every $i$. It is \emph{balanced} if its two
values are equally likely under the uniform input.

We measure uncertainty by \emph{binary Shannon entropy},
$H_{\rm b}(p)=-p\log p-(1-p)\log(1-p)$ for a Bernoulli bit with
success probability $p\in[0,1]$. All logarithms are natural, and $0\log0=0$.
For a $\{-1,1\}$-valued bit with mean $t\in[-1,1]$, we write
$H(t)=H_{\rm b}((1-t)/2)$ for its entropy and
$\Phi(t)=\log2-H(t)$ for its entropy deficit relative to a fair bit.
The latter is also the \emph{relative entropy} of its law with
respect to the uniform law on $\{-1,1\}$.
Both $H$ and $\Phi$ take values in $[0,\log2]$.

The \emph{mutual information} $I(f(X);Y_\rho)$ measures the reduction
in this uncertainty: it is the entropy of $f(X)$ minus its conditional
entropy given $Y_\rho$. For a signed dictator these entropies are
$\log2$ and $H(\rho)$, respectively, so its information is $\Phi(\rho)$.

The optimality of this value was conjectured by Courtade and Kumar
\cite{CourtadeKumar2014} and proved independently in our companion paper,
\emph{Dictators are most informative} \cite{VuTranCK}, and in
\cite{IndependentCK2026}. We refer to our companion paper for
detailed background and prior work.

\begin{cktheorem}[{\cite{VuTranCK,IndependentCK2026}}]
For every Boolean function $f:\{-1,1\}^n\to\{-1,1\}$ and every
$0\le\rho\le1$, with $X,Y_\rho$ as above,
\[
 I(f(X);Y_\rho)\le I(X_1;Y_\rho)=\Phi(\rho).
\]
\end{cktheorem}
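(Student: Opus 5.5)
We would start from the standard analytic reformulation. Conditionally on $Y_\rho=y$, the bit $f(X)$ has mean $T_\rho f(y)$, where $T_\rho$ is the noise operator. Hence
\[
 I(f(X);Y_\rho)=H(\E f)-\E\, H(T_\rho f(Y_\rho))=\E\,\Phi(T_\rho f)-\Phi(\E f),
\]
since $Y_\rho$ is uniform. The statement is therefore equivalent to the functional inequality $\E\,\Phi(T_\rho f)\le \Phi(\rho)+\Phi(\E f)$ for Boolean $f$. Equality holds for signed dictators, and trivially at $\rho\in\{0,1\}$. Expanding $\Phi(t)=\sum_{k\ge1} t^{2k}/(2k(2k-1))$ reduces the problem to comparing the even moments $\E(T_\rho f)^{2k}$ with $(\E f)^{2k}+\rho^{2k}$. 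Each moment is bounded individually by hypercontractivity only up to constants. We therefore do not expect a term-by-term argument to work. It will only serve as a check in the regime $\rho\to0$, where the quadratic term $\rho^2\sum_S\hat f(S)^2\|S\|$-type quantities dominate and the inequality reduces to level-one Fourier weight bounds.

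The main line of attack is induction on $n$ through a Bellman-function (two-point) inequality. The invariant we carry along is the value of $\Phi(T_\rho f)$. For this we enlarge the class to $g:\{-1,1\}^n\to[-1,1]$ and seek a function $B_\rho(m)$, possibly with a second argument recording a quantity such as $\E|g|$ or $\E g^2$ that equals $1$ for Boolean inputs, with the following properties:
\begin{enumerate}
\item[(i)] $B_\rho$ dominates the one-point quantity, so that $\Phi(\rho m)\le B_\rho(m)$ and, on the Boolean boundary, $B_\rho(m)\le \Phi(\rho)+\Phi(m)$;
\item[(ii)] $B_\rho$ satisfies the two-point inequality obtained by splitting on the last coordinate.
\end{enumerate}
Writing $g=(g_+,g_-)$ for the two restrictions, $T_\rho g$ on the slice $x_n=\pm1$ equals $\tfrac{1+\rho}{2}T_\rho g_\pm+\tfrac{1-\rho}{2}T_\rho g_\mp$. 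So we need a joint convexity-type statement for $\Phi$ along such mixtures, transported through the induction by Jensen.

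Concretely, we would try the ansatz in which $B_\rho(m)$ is the value attained by the extremal candidates. These are the dictator (when $|m|$ is small) and the functions ``dictator restricted to a subcube'', which produce unbalanced means. We would then check the two-point inequality by reducing it, via symmetry and monotonicity in the means, to a two-variable real inequality in $(m_+,m_-)$. That inequality would be verified by analysing the $\atanh$-form of $\Phi'(t)=\atanh t$ on the boundary and at interior critical points.

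The main obstacle is precisely finding an invariant for which (ii) actually closes. It is known that naive choices (tensorising $\E\Phi(T_\rho g)$ alone, or moment-by-moment bounds) fail in the middle noise range, and the correct $B_\rho$ may be non-smooth where the extremiser switches type. If no clean closed form works, the fallback is a three-regime argument. For small $\rho$ we would use the Fourier expansion and the level-one inequality. For $\rho$ near $1$ we would use an edge-isoperimetric/total-influence argument, since $\Phi(T_\rho f)$ is then controlled by the boundary of $f$. For the intermediate range we would use a compactness/perturbation argument showing that near-extremisers must be close to dictators, after which a local second-order computation around signed dictators finishes the proof.
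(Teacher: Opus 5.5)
\textbf{There is a genuine gap.} Your reformulation $I(f(X);Y_\rho)=\E\Phi(T_\rho f)-\Phi(\E f)$ is correct. Everything after it, however, is a plan, and the step you flag as the main obstacle, the two-point inequality (ii), is the whole difficulty.

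Splitting on the last coordinate gives exactly $\E\Phi(T_\rho g)=\tfrac12[\E\Phi(T_\rho h_+)+\E\Phi(T_\rho h_-)]$, where $h_\pm=\tfrac{1\pm\rho}{2}g_++\tfrac{1\mp\rho}{2}g_-$ and $\E h_\pm=m\pm\tfrac{\rho}{2}(m_+-m_-)$.
\begin{itemize}
\item Jensen applied to the mixtures discards the noise on the split coordinate. It only yields $\E\Phi(T_\rho g)\le\tfrac12[\E\Phi(T_\rho g_+)+\E\Phi(T_\rho g_-)]$, and for $g=x_n$ this bound is $\log2$ rather than $\Phi(\rho)$.
\item If the mean is the only invariant, constant restrictions already force $B_\rho(m)\ge\tfrac12[B_\rho(m+s)+B_\rho(m-s)]$ for all $|s|\le\rho(1-|m|)$. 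This makes $B_\rho$ concave. Together with the base case $B_\rho(c)\ge\Phi(c)$ for constants $c\in[-1,1]$, it gives $B_\rho\ge\log2$, which is the trivial bound.
\item A second statistic such as $\E g^2$ brings the cross term $\E g_+g_-$ into the recursion. You give no function for which the two-point inequality closes.
\item Your ansatz, that $B_\rho$ is the value attained by dictators and subcube dictators, would assert that these maximise $\E\Phi(T_\rho f)$ for every prescribed mean. That statement contains the theorem at $m=0$, and you neither make it precise nor prove it.
\end{itemize}

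The fallback does not close the gap either.
\begin{itemize}
\item In the intermediate range there is no dimension-free compactness. Boolean functions with unbounded $n$ do not form a compact family in which near-maximisers converge to dictators; low-influence functions such as majorities have Gaussian-type limits instead. The claim that near-maximisers are uniformly close to dictators is itself a dimension-free stability theorem, which is what this paper proves, so it cannot serve as an input.
\item The other two regimes need explicit, overlapping thresholds. Near $\rho=1$, an influence argument must be uniform in $n$ and sharp to leading order, because $H(\rho)$ lies between $u(1-\rho)$ and $(u+1)(1-\rho)$ with $u=\atanh\rho\to\infty$.
\end{itemize}

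The idea you are missing, relative to the paper, is to argue in the correlation rather than in the dimension.
\begin{itemize}
\item After compression to increasing functions, write $\rho=\tanh u$. Then $-\tfrac{d}{d\rho}\Delta_\rho(f)=D(T_\rho f)/\rho-u$, and the endpoint value is $\Delta_1(f)=\Phi(\E f)\ge0$.
\item It therefore suffices to prove a production surplus $D(T_\rho f)>\rho u$ wherever the deficit is negative (for stability, wherever it lies below the barrier $cH(\rho)\delta(f)$). Otherwise a negative deficit would keep decreasing up to $\rho=1$, contradicting $\Delta_1(f)\ge0$.
\item The paper obtains this surplus from per-coordinate entropy-profile bounds at the actual conditional entropy, the local comparison $M_\rho$ near dictators, spectral arcsine-energy bounds away from them, and a marked-coordinate estimate in between. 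At low correlation it uses contraction and Fourier estimates, and in the middle range it uses certified checks.
\end{itemize}
The inequality then follows from $\Delta_\rho(f)\ge c\,\delta(f)\ge0$, which the paper proves without assuming it.
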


The following stability theorem, announced in our companion paper \cite{VuTranCK},
is the main result of this work. Its proof develops an entropy-barrier
approach, together with quantitative estimates that track the
information deficit and the effect of compression. We use analytic
tools from the companion paper, but do not assume the
Courtade--Kumar inequality.

\begin{theorem}[Stability]\label{st:natural-intro}
Let $0<\rho<1$ and $\varepsilon\ge0$.
If a Boolean function
$f:\{-1,1\}^n\to\{-1,1\}$ satisfies
$I(f(X);Y_\rho)\ge\Phi(\rho)-\varepsilon$, then there are
$1\le i\le n$ and $\sigma\in\{-1,1\}$ such that
\[
 \PP\left(f(X)\ne\sigma X_i\right)
 \le\frac{10^8\varepsilon}{\rho^2(1-\rho)H(\rho)}.
\]
If $f$ is increasing, one can choose $i$ so that
\[
\PP\left(f(X)\ne X_i\right)\le\frac{10^8\varepsilon}{\rho^2H(\rho)}.
\]
\end{theorem}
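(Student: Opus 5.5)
We would prove the theorem through an entropy-barrier decomposition of the information along the noise semigroup, rather than deducing it from the Courtade--Kumar inequality. Write $g=T_\rho f$, so that $\E[f(X)\mid Y_\rho]=g(Y_\rho)$ and
\[
I(f(X);Y_\rho)=\E\,\Phi(g(Y))-\Phi(\E f).
\]
The first step is a reduction to the balanced case. If $\mu=\E f\neq0$, then $\Phi(\mu)\gtrsim\mu^2$. We would show that the imbalance costs information of order $\mu^2\rho^2H(\rho)$, which forces $|\mu|\lesssim\sqrt{\varepsilon}$. It would then suffice to control $\PP(f\ne\sigma X_i)$ up to an additive $O(|\mu|)$ error. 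This is, however, not linear in $\varepsilon$. So instead we would carry $\mu$ inside the main deficit identity, using the analytic tools from the companion paper: the convexity and concavity properties of $\Phi$ along the flow $t\mapsto T_{e^{-t}}f$.

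The core step is a deficit identity. Differentiating $\E\,\Phi(T_{e^{-t}}f)$ in $t$ and integrating from $t=-\log\rho$ to $\infty$ gives
\[
\Phi(\rho)-I=\int\bigl(\text{dictator energy}-\E[\Phi''(g_t)|\nabla g_t|^2]\bigr)\,dt .
\]
We would then bound the integrand from below by a nonnegative quantity that measures the distance to a dictator. Concretely, we would aim for
\[
\Phi(\rho)-I\ \ge\ c\,\rho^2(1-\rho)H(\rho)\,\Bigl(1-\max_i|\hat f(i)|\Bigr).
\]
Since $\PP(f\ne\sigma X_i)=(1-\sigma\hat f(i))/2$, this is exactly the claimed inequality. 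The factor $\rho^2$ reflects that all Fourier mass sits at level $\ge1$ and is damped by $\rho^{2}$ at the first level. The factor $H(\rho)$ comes from the entropy weight $\Phi''$ at the typical value $\pm\rho$ of $g$ near the dictator. The extra factor $1-\rho$ in the general case accounts for the fact that near $\rho=1$, mass on higher levels $k$ loses only $\rho^2-\rho^{2k}\approx(k-1)(1-\rho)$ relative to level one.

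To obtain the $1-\max_i|\hat f(i)|$ term we would split into two regimes. In the first, $\sum_i\hat f(i)^2\le1-\delta$, meaning substantial mass beyond level one. Here the gap $\rho^2-\rho^4\ge\rho^2(1-\rho)$ in the Fourier weights yields the bound directly through the hypercontractive comparison between $\E\Phi(T_\rho f)$ and $\rho$-weighted Fourier mass. In the second, $f$ is close to linear. Then, by an FKN-type argument (a Boolean function with nearly all weight at level one is close to a dictator, with linear dependence), we get $1-\max_i|\hat f(i)|\lesssim1-\sum_i\hat f(i)^2$. We would choose $\delta$ as an absolute constant, accepting the large constant $10^8$.

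For increasing $f$, all $\hat f(i)\ge0$, and for a monotone function $\hat f(i)$ equals the influence of coordinate $i$. Hence $\sum_i\hat f(i)^2\le\sum_i\hat f(i)\cdot\max_i\hat f(i)$, and the level-one weight controls the distance to $X_i$ without passing through higher levels. This removes the $(1-\rho)$ loss.

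The main obstacle is the lower bound on the deficit integrand near $\rho\to1$ and when $g$ takes values close to $\pm1$. There $\Phi''$ blows up, and naive Taylor comparisons fail. We would first try the barrier argument: truncate $g_t$ away from $\pm1$ at the level where $H$ vanishes linearly, and control the truncated region by the entropy term $H(\rho)$ itself.
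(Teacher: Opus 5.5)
There is a genuine gap, and it is in your core step. Your identity $\Phi(\rho)-I=\int_{-\log\rho}^\infty[D(T_{e^{-t}}x_1)-D(T_{e^{-t}}f)]\,dt$ is fine. (On the cube the energy is $D(g)=\sum_i\E[(\mathcal L_ig)\atanh g]$, not $\E[\Phi''(g)|\nabla g|^2]$.) But its integrand cannot be bounded below pointwise by a nonnegative quantity that is positive off dictators. The integrand at $r=e^{-t}$ equals $r\,\Delta_r'$, so such a bound would make $r\mapsto\Delta_r$ nondecreasing. That is impossible for any balanced non-dictator, for instance three-variable majority: there $\Delta_r>0$ on $(0,1)$, while $\Delta_1=\Phi(\mu)=0$ by \eqref{eq:noiseless-deficit}. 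So $D(T_rf)/r>\atanh r$ at correlations arbitrarily close to $1$, since a non-dictator must out-produce the dictator to close its deficit.

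The paper therefore runs the flow toward the noiseless endpoint and uses a comparison argument (Proposition~\ref{prop:production-stability}). A production surplus $D(T_\rho f)/\rho\ge u(1+a\delta-BZ_+)$ is needed only where the deficit has already fallen below the barrier $cH(\rho)\delta$. The endpoint value $\Delta_1=\Phi(\mu)\ge0$ then gives the contradiction. These production bounds are state-dependent: the profile heights $F^{-1}(E_\rho(f)/(\rho|\widehat f(i)|))$ use the actual conditional entropy, which Lemma~\ref{st:dilation} controls through the single factor $1+Z_+$. That is where $H(\rho)$ enters. It does not come from $\Phi''(\rho)=1/(1-\rho^2)$, which is large. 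Your ``barrier'' (truncating the range of $g_t$) is a different object and does not provide this mechanism.

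Your two-regime substitute does not close the gap either.

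\begin{itemize}
\item No Fourier or hypercontractive comparison of $\E\Phi(T_\rho f)$ gives a deficit of order $H(\rho)$ near $\rho=1$. The sandwich \eqref{ub:entropy-series} and contraction \eqref{ub:information-contraction} give only $I\le\rho^2\log2$, which exceeds $\Phi(\rho)$ by $H(\rho)-(1-\rho^2)\log2>0$. The error is thus at least the size of the target. The paper uses direct Fourier and local bounds only for $\rho\le457/500$ (Corollary~\ref{cor:low-stability}) and a certified cover up to $49/50$. Beyond that it needs the spectral production estimate far from dictators and the marked-coordinate estimate at intermediate distances. This is because the local expansion in Corollary~\ref{cor:local-stability} has remainder $4\delta^{2/(1+\rho^2)}/(1-\rho^2)$, and the $M_\rho$ comparisons reach only $\delta\lesssim e^{-u}$ or $e^{-5u/4}$.
\item FKN converts $1-W_1$ into $\delta$, but you still need the deficit to control $1-W_1$. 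That is the same unproved statement as in your first regime.
\item Your explanation of the factor $1-\rho$ through level-$k$ damping would apply equally to increasing functions; majority has level-$3$ weight $1/4$. So it cannot separate the two bounds in the theorem. Also, $W_1\le\alpha\sum_i\widehat f(i)$ yields nothing, since total influence can be of order $\sqrt n$.
\end{itemize}

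In the paper the increasing case is proved directly, and the loss $1-\rho$ comes only from quantitative compression (Lemma~\ref{st:compression}). That lemma bounds $\rho^2(1-\rho^2)\delta(f)$ by the information gained in sorting plus $\rho^2(2-\rho^2)\delta(g)$. Your proposal has no counterpart of this step.
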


Taking $\varepsilon=0$ shows that equality in the Courtade--Kumar
bound holds precisely for signed dictators when $0<\rho<1$.
The endpoint dependence is intrinsic: at $\rho=1$, every balanced
Boolean function carries $\log2$ information, including functions
far from dictators.

Three-variable majority shows that both bounds have optimal
order as $\rho\downarrow0$, and that the increasing-function
bound also has optimal order as $\rho\uparrow1$
(Section~\ref{sec:global-stability}).

\subsection{Method and contributions}

Stability requires more than ruling out a violation of the
Courtade--Kumar inequality: the information deficit must control
the distance to a dictator. We use entropy production to follow
the decrease of conditional entropy as correlation increases.
The key additional requirement is a quantitative surplus over
the dictator's production rate even when the conditional entropy
exceeds $H(\rho)$, with a controlled loss depending on the
information deficit. This larger entropy changes the profile heights used in the production estimates, but a single dilation factor controls all the changes: one plus the
positive information deficit divided by $H(\rho)$.
Two energy comparisons hold under an arbitrary entropy budget:
the spectral estimate uses Fourier energy, while the
marked-coordinate estimate separates a dominant coordinate from the
remaining variance. Converting these comparisons to stability loses
only a term linear in the deficit.

We combine these estimates by comparing the deficit with a constant
multiple of $H(\rho)$ times the distance to a nearest signed dictator.
If it fell below this curve, the production
surplus would force the difference to keep decreasing as correlation
approaches one. This contradicts the nonnegative deficit
$\Phi(\E f)$ at the noiseless endpoint. The comparison in
Section~\ref{sec:principles} works pointwise in correlation, so a source
may move between the local and spectral regions. Quantitative
compression then transfers stability from increasing to arbitrary
functions by accounting for the information gained during sorting.

At lower correlations we use the local and Fourier estimates of
Samorodnitsky \cite{Samorodnitsky2016} and Yu \cite{Yu2023,Yu2026Local},
in the forms established in \cite{VuTranCK}, together with its strict
profile-clock comparisons. All inputs are stated explicitly. The
parameter checks are collected in the appendices, and the constants
are chosen for convenience. The verification repository is
\url{https://github.com/vukhacky/CK-stability};
Appendix~\ref{sec:certificates} describes the certificates and replay
procedure.

\section{Noise and entropy profiles}\label{sec:preliminaries}

This section collects the shared notation and analytic inputs from
\cite{VuTranCK}. Local and spectral estimates appear beside their
applications in Sections~\ref{sec:local-stability} and~\ref{sec:radius-tail}.

For a Boolean function $f:\{-1,1\}^n\to\{-1,1\}$, write
$A_\rho(f)=I(f(X);Y_\rho)$. Its information deficit and distance
to the nearest signed dictator are
\begin{equation}\label{eq:deficit-distance}
 \Delta_\rho(f)=\Phi(\rho)-A_\rho(f),\qquad
 \delta(f)=\min_{1\le i\le n,\,\sigma=\pm1}
              \PP\left(f(X)\ne\sigma X_i\right).
\end{equation}

\subsection{Fourier coefficients, noise, and entropy production}
Expectations and norms use the uniform probability measure on the
cube. Write $[n]=\{1,\ldots,n\}$ and, for $K\subseteq[n]$,
$X_K=(X_i)_{i\in K}$.
The products $\chi_S$ below form the Fourier--Walsh orthonormal
basis. A product involving $k$ coordinates has level $k$. In this
basis, the expansion of $f$ and the noise operator are
\[
 f=\sum_{S\subseteq[n]}\widehat f(S)\chi_S,
 \quad \chi_S(x)=\prod_{i\in S}x_i,
 \quad \widehat f(S)=\E[f\chi_S],
 \quad T_\rho f=\sum_S\rho^{|S|}\widehat f(S)\chi_S.
\]
The expansion and $T_\rho$ apply to all real-valued functions on
the cube. For Boolean $f$, the function $T_\rho f$ takes values in $[-1,1]$.
Noise multiplies every coefficient at level $k$ by $\rho^k$.
We write $\mu=\widehat f(\varnothing)=\E f$ for the mean and
$W_k(f)=\sum_{|S|=k}\widehat f(S)^2$ for the Fourier mass at level $k$.
When $f$ is fixed, we abbreviate $W_k(f)$ to $W_k$.
The largest singleton coefficient in absolute value is
$\alpha=\max_i|\widehat f(i)|$, where
$\widehat f(i)=\widehat f(\{i\})$. Parseval gives $\sum_kW_k=1$ and
$\Var(T_\rho f)\le\rho^2(1-\mu^2)$. The identity
$\E[f(X)\sigma X_i]=1-2\PP\left(f(X)\ne\sigma X_i\right)$ gives
$\delta(f)=(1-\alpha)/2$. These are the conventions of
\cite[Chapters~1--2]{ODonnell2014}.

For $g=T_\rho f$, the posterior mean of the bit is $g$, so its
conditional entropy is $E_\rho(f)=\E H(g)=H(\mu)-A_\rho(f)$.
Consequently,
\begin{equation}\label{ub:gap-definitions}
 \Delta_\rho(f)=E_\rho(f)-H(\rho)+\Phi(\mu).
\end{equation}
At $\rho=1$, the observation determines $f(X)$, so
\begin{equation}\label{eq:noiseless-deficit}
 E_1(f)=0,\qquad \Delta_1(f)=\Phi(\mu)\ge0.
\end{equation}
The term $\Phi(\mu)$ accounts for the smaller initial entropy of
an unbalanced bit. We will repeatedly use
\begin{equation}\label{ub:entropy-series}
 \Phi(t)=\sum_{k\ge1}\frac{t^{2k}}{2k(2k-1)},\qquad
 t^2/2\le\Phi(t)\le(\log2)t^2\quad(-1\le t\le1).
\end{equation}

For $h:\{-1,1\}^n\to\mathbb R$, write
$x_{-i}=(x_j)_{j\ne i}$ and let
$\partial_i h=[h(1,x_{-i})-h(-1,x_{-i})]/2$ and
$\mathcal L_i h=x_i\partial_i h$. The Dirichlet energy and its
coordinate parts are
\[
 \operatorname{Dir}(h)=\sum_i\E(\partial_i h)^2
 =\sum_S|S|\widehat h(S)^2,
 \qquad \operatorname{Dir}_i(h)=\E(\partial_i h)^2.
\]
For $g:\{-1,1\}^n\to(-1,1)$, define
$\mathcal D_i(g)=\E[(\mathcal L_i g)\atanh g]$ and
$D(g)=\sum_i\mathcal D_i(g)$. These are the coordinate and total
entropy-production energies. Fourier differentiation and $H'=-\atanh$
give
\begin{equation}\label{st:production-identity}
 -\rho E'_\rho(f)=D(T_\rho f),\qquad
 -\Delta'_\rho(f)=D(T_\rho f)/\rho-u,\qquad u=\atanh\rho.
\end{equation}
Thus $D(T_\rho f)/\rho-u$ is the production surplus over the dictator.
The parameter $u$ tends to infinity as correlation approaches the
noiseless endpoint. All derivatives are finite for a nonconstant
Boolean source and $0<\rho<1$.

For later endpoint estimates, the identity
$H(\tanh u)=u(1-\tanh u)+\log(1+e^{-2u})$ gives
\begin{equation}\label{eq:entropy-endpoint}
 u(1-\rho)<H(\rho)<(u+1)(1-\rho),\qquad
 H(\rho)\le(2u+1)e^{-2u},\qquad \rho=\tanh u.
\end{equation}

For $K\subseteq[n]$, put $P_Kh=\E[h\mid X_K]$ and
$\operatorname{Dir}_K(h)=\sum_{i\in K}\operatorname{Dir}_i(h)$.
Projection retains the Fourier terms supported inside $K$.
For Boolean $f$, the \emph{influence} of coordinate $i$ is
$\operatorname{Inf}_i(f)=\E|\partial_i f|$, the probability that
flipping coordinate $i$ changes $f$.
If $f$ is increasing, then
$\operatorname{Inf}_i(f)=\|\partial_i f\|_1=\widehat f(i)\ge0$.
In particular, $W_1(f)>0$ for every nonconstant increasing $f$.

\subsection{Entropy profiles and their clocks}

For a centered edge with values $-\tanh v,\tanh v$, the ratio of
entropy to half the edge difference is $F(v)$. We use three related
functions:
\begin{equation}\label{eq:profiles}
 F(v)=\frac{H(\tanh v)}{\tanh v},\qquad
 B_e(q)=qF^{-1}(e/q),\qquad
 \ell_u(a)=F^{-1}(F(u)/a).
\end{equation}
Here $u,v,e,q,a>0$, and we extend $B_e$ and $\ell_u$ to zero by
$B_e(0)=\ell_u(0)=0$. In applications $0\le a\le1$, so
$0\le\ell_u(a)\le u$.
The function $F$ is a decreasing bijection of $(0,\infty)$ onto
itself. In $B_e(q)$, $e$ is the entropy budget and $q$ is the
coordinate amplitude: the inverse gives the corresponding height,
and multiplication by $q$ gives the production bound.
The height $\ell_u(a)$ uses the dictator budget $H(\rho)$ and
amplitude $\rho a$, where $\rho=\tanh u$.

\begin{lemma}[Profile shape {\cite[Lemmas~4.1, 6.3, and~7.7]{VuTranCK}}]
\label{as:profile-convex}
The function $F$ is strictly decreasing and convex; $vF(v)$ is
strictly decreasing and log-concave. The function $1/[vF(v)^r]$
is convex for $0\le r\le1$. Also $-(\log F)'(v)>1$, and it
exceeds $7/4$ for $v\ge4$.
For the arcsine correction
\begin{equation}\label{eq:arcsine-correction}
 R(v)=v-\frac{\arcsin^2(\tanh v)}{\tanh v},\qquad R(0)=0,
\end{equation}
one has $0\le R'(v)\le1$ and $0\le R''(v)<1/2$.
\end{lemma}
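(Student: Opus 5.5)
The plan is to reduce every claim to an explicit one-variable inequality in $v>0$, using closed forms for $H(\tanh v)$ and $\arcsin(\tanh v)$. Some of these inequalities are delicate near $v=0$ or $v=\infty$. Those I would settle by series expansions at the two ends and certified interval arithmetic on a compact middle range, which is the kind of certificate the appendix describes.

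\emph{Closed forms and derivatives.} From \eqref{eq:entropy-endpoint} one gets $H(\tanh v)=\log(2\cosh v)-v\tanh v$. Hence, with $t=\tanh v$,
\[
F(v)=\frac{\log(2\cosh v)}{\tanh v}-v .
\]
Since $H'=-\atanh$, we have $\frac{d}{dv}H(\tanh v)=-v(1-t^2)$. This gives
\[
F'(v)=-\frac{(1-t^2)(vt+H)}{t^2}<0,\qquad
-(\log F)'(v)=\frac{(1-t^2)(vt+H)}{tH},
\]
where $H=H(\tanh v)$. The derivative formula already gives strict monotonicity of $F$. The limits $F(0^+)=\infty$ and $F(\infty)=0$ then give the bijection onto $(0,\infty)$.

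\emph{Bounds on $-(\log F)'$.} The inequality $-(\log F)'>1$ is equivalent to $(1-t^2)(vt+H)>tH$. As $v\to0$ the ratio blows up like $1/v$. As $v\to\infty$ it tends to $2$, because $H\sim 2ve^{-2v}$ and $1-t^2\sim4e^{-2v}$. I would make both asymptotic regimes quantitative with explicit remainders, using the bounds \eqref{eq:entropy-endpoint}, and then check the middle range by interval arithmetic. The bound $>7/4$ for $v\ge4$ follows from the same large-$v$ expansion with explicit error terms. Convexity of $F$ requires $F''\ge0$, which becomes one more explicit inequality in $(v,t,H)$ and is handled the same way.

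\emph{$vF$ and the convexity of $1/[vF(v)^r]$.} For $vF$ we have $(\log(vF))'=1/v+(\log F)'<1/v-1$. This is negative for $v\ge1$; for $v<1$ I would use the exact formula. For log-concavity, $(\log(vF))''=-1/v^2+(\log F)''$, and it suffices to show $(\log F)''\le 1/v^2$. This is again an explicit inequality, delicate only at the ends, and handled as above.

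The convexity of $1/[vF^r]$ then needs no further computation. Write
\[
\frac{1}{vF(v)^r}=\exp\bigl((1-r)(-\log v)+r(-\log(vF(v)))\bigr).
\]
Here $-\log v$ is convex, and $-\log(vF)$ is convex by the log-concavity just shown. The exponent is therefore convex for $0\le r\le1$, and the exponential of a convex function is convex.

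\emph{The arcsine correction.} Put $g=\arcsin(\tanh v)$, the Gudermannian, so $g'=\operatorname{sech}v$. Then
\[
R'(v)=1-\frac{2g\,\operatorname{sech}v\,\tanh v-g^2\operatorname{sech}^2v}{\tanh^2v}.
\]
The claim $0\le R'\le1$ amounts to $0\le 2g t\operatorname{sech}v-g^2\operatorname{sech}^2v\le t^2$. The lower bound is $2t\ge g\operatorname{sech}v$, which holds because $g\le \pi/2$ and $t\ge$ the relevant multiple near $0$. The upper bound is $(t-g\operatorname{sech}v)^2\ge0$, which holds exactly.

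The bound on $R''$ is another explicit expression. I would show $R''\ge0$ and $R''<1/2$ by Taylor expansion at $0$, where $R(v)=O(v^3)$, together with decay at infinity and a certified check in between.

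\emph{Main obstacle.} I expect the hardest steps to be the second-derivative statements: the convexity of $F$, the log-concavity of $vF$, and $R''<1/2$. These are not sign-obvious. Margins are thin near $v=0$ because of cancellations in the series, and the bound $7/4$ is fairly tight near $v=4$. That is where I would lean on rigorous numerics.
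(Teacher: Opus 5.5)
The paper does not prove this lemma here. It imports it from \cite{VuTranCK} (Lemmas~4.1, 6.3 and~7.7 there), so there is no in-paper argument to compare with, and your outline has to stand on its own.

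\textbf{What is correct.} The parts you actually carry out are right:
\begin{itemize}
\item the closed form $F(v)=\log(2\cosh v)/\tanh v-v$, the formula for $F'$, and the bijection;
\item the deduction of convexity of $1/[vF(v)^r]$ from log-concavity of $vF$ via $vF^r=v^{1-r}(vF)^r$, which is clean;
\item $R'\ge0$ via $(t-g\operatorname{sech}v)^2\ge0$, with your $g=\arcsin(\tanh v)$.
\end{itemize}
Your justification of $R'\le1$ does not work as written. The bound $g\le\pi/2$ cannot give $g\operatorname{sech}v\le2t$ near $v=0$, because there $(\pi/2)\operatorname{sech}v\to\pi/2$ while $2t\to0$. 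Use $g=\arctan(\sinh v)\le\sinh v$ instead; it gives $g\operatorname{sech}v\le t$.

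\textbf{The gap.} Every second-order claim is only promised through expansions plus certified numerics: convexity of $F$, log-concavity of $vF$, and $0\le R''<1/2$. The same holds for both bounds on $-(\log F)'$ and the decrease of $vF$ on $(0,1)$. These claims are the substance of the lemma. None of them would fail, and most are much easier than you expect.

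\emph{Convexity of $F$.} Since $vt+H=\log(2\cosh v)$, your formula becomes $F'(v)=-\log(2\cosh v)/\sinh^2v$. Differentiating,
\[
 F''(v)=\frac{2\cosh v\log(2\cosh v)-\sinh v\tanh v}{\sinh^3v}>0,
\]
because $2\log2>1>\tanh^2v$.

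\emph{Bounds on $R''$.} One has $R'(v)=\bigl(1-\arctan(\sinh v)/\sinh v\bigr)^2$. With $x=\sinh v$,
\[
 R''(v)=\frac{2(x-\arctan x)\bigl((1+x^2)\arctan x-x\bigr)}{x^3\sqrt{1+x^2}}\ge0 .
\]
Its maximum is about $0.39$, so $R''<1/2$ is an elementary one-variable estimate with room to spare.

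\emph{Bounds on $-(\log F)'$.} Write $-(\log F)'=(1-t^2)v/H+(1-t^2)/t$. The bound $H\le(2v+1)e^{-2v}$ from \eqref{eq:entropy-endpoint} gives
\[
 -(\log F)'(v)\ge\frac{4v}{(2v+1)(1+e^{-2v})^2}.
\]
The right side is increasing and is about $16/9>7/4$ at $v=4$. Choose the bound carefully: the other estimate $H<(v+1)(1-t)$ gives only about $1.60$ at $v=4$. For $-(\log F)'>1$, add the term $(1-t^2)/t=4e^{-2v}/(1-e^{-4v})$. Monotonicity of the two terms on $(0,3/5]$, $[3/5,1]$ and $[1,\infty)$ then settles it everywhere.

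\emph{Log-concavity of $vF$.} This is the one genuinely delicate claim. In your split, $(\log(vF))''$ tends to $2/3-1/\log2$ at $0$ only after the $\pm1/v^2$ terms cancel. At infinity it tends to $0$ like $-2/v^2$, so the tail must be handled analytically, not by interval arithmetic.

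\emph{Strict decrease of $vF$.} Once log-concavity is established, this follows directly. The function $vF$ is even and analytic at $0$, so $(\log(vF))'(0^+)=0$, and $(\log(vF))''<0$ near $0$. No separate formula on $(0,1)$ is needed.
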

By Lemma~\ref{as:profile-convex}, $1/[vF(v)]$ is increasing and
convex. Hence $F(u)/F(v)=F(u)v/[vF(v)]$ is increasing and convex
in $v$, and its inverse $a\mapsto\ell_u(a)$ is increasing and concave.
Also $\ell_u(a)/a=vF(v)/F(u)$ decreases, where $v=\ell_u(a)$.

The elementary entropy formula also gives
\begin{equation}\label{rt:scaled-profile}
 2v+1\le e^{2v}F(v)
 \le\frac{1+e^{-2v}}{1-e^{-2v}}(2v+1).
\end{equation}
The derivative bound for $R$ measures how much the arcsine refinement
can lose when the actual entropy is slightly larger than $H(\rho)$.

For a fixed profile $Q(y)=\sum_{j=1}^Jc_jB_y(q_j)$, with
$J\ge1$, $c_j>0$, and $0<q_j\le1$, define its clock and the source profile by
\begin{equation}\label{eq:profile-clock}
 T_Q(v)=\int_0^v\frac{dy}{y+Q(y)},\qquad
 P_f(y)=\sum_iB_y(|\widehat f(i)|).
\end{equation}
The weights and arguments in $Q$ stay fixed as correlation varies.
A dictator has profile $Q(y)=B_y(1)$; substitution of $y=H(r)/r$
gives $T_Q(H(\rho)/\rho)=\log(1/\rho)$. This identifies the
reference noise time in the comparison below.
\begin{theorem}[Clock propagation {\cite[Theorem~4.6]{VuTranCK}}]
\label{thm:clock-propagation}\label{st:clock}
Let $f:\{-1,1\}^n\to \{-1,1\}$. Suppose $W_1(f)>0$, $Q\le P_f$ on $(0,\infty)$, and
$0<\rho_0<1$. If a constant $\epsilon\in[0,1)$ satisfies
\[
 T_Q(H(\rho_0)/\rho_0)
 \le(1-\epsilon)\log(1/\rho_0),
\]
then, for $\rho_0\le\rho<1$,
$E_\rho(f)\ge H(\rho)\rho^{-\epsilon}$.
\end{theorem}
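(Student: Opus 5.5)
The plan is to compare two clocks: the source's clock, read off from entropy production, and the dictator's clock $\log(1/\rho)$. Set $y(\rho)=E_\rho(f)/\rho$. By \eqref{st:production-identity},
\[
 \frac{dy}{d\log(1/\rho)}=-\rho\frac{d}{d\rho}\Bigl(\frac{E_\rho}{\rho}\Bigr)
 =y+\frac{D(T_\rho f)}{\rho}.
\]

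\textbf{Step 1 (production bound).} I would show the coordinatewise bound
\[
 \mathcal D_i(T_\rho f)\ge\rho\,B_{y}(|\widehat f(i)|).
\]
The point is this: on an edge in direction $i$ the amplitude of $g=T_\rho f$ is $\rho|\widehat f(i)|$ on average, and the entropy spent is at most $E_\rho$. Minimising $\E[(\mathcal L_ig)\atanh g]$ under an entropy budget is a one-dimensional extremal problem. Its extremiser is the centred edge $\pm\tanh v$ with $F(v)=e/q$, which is exactly how $B_e(q)$ was defined in \eqref{eq:profiles}. Jensen-type convexity from Lemma~\ref{as:profile-convex} (convexity of $F$, and of $1/[vF(v)^r]$) should pass from the single edge to averages. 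Homogeneity $B_{\lambda e}(\lambda q)=\lambda B_e(q)$ turns the budget $E_\rho$ with amplitude $\rho q$ into budget $y$ with amplitude $q$. Summing over $i$ and using $Q\le P_f$ then gives $D(T_\rho f)/\rho\ge P_f(y)\ge Q(y)$. I would import this step from the companion paper's production estimates rather than rederive it.

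\textbf{Step 2 (integrate to the endpoint).} From Step 1,
\[
 \frac{dT_Q(y(\rho))}{d\log(1/\rho)}=\frac{y+D/\rho}{y+Q(y)}\ge1 .
\]
As $\rho\uparrow1$ we have $E_\rho\to0$ by \eqref{eq:noiseless-deficit}, so $y\to0$ and $T_Q(y)\to0$; here $W_1(f)>0$ ensures $Q$ gives finite positive production, and the clock integral converges near $0$. Integrating over $[\rho,1]$ gives
\[
 T_Q\bigl(E_\rho(f)/\rho\bigr)\ge\log(1/\rho)\qquad\text{for all }\rho\in(0,1).
\]

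\textbf{Step 3 (compare clocks).} Since $T_Q$ is strictly increasing, it suffices to show
\[
 T_Q\bigl(H(\rho)\rho^{-\epsilon}/\rho\bigr)\le\log(1/\rho)\qquad\text{for }\rho_0\le\rho<1,
\]
given that $T_Q(H(\rho_0)/\rho_0)\le(1-\epsilon)\log(1/\rho_0)$. Write $\psi(\rho)=T_Q(H(\rho)\rho^{-\epsilon}/\rho)-\log(1/\rho)$. I would show that $\psi(\rho_0)\le0$ and that $\psi'\ge0$ is impossible wherever $\psi=0$, a first-crossing argument.

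At $\rho_0$, the extra factor $\rho_0^{-\epsilon}$ costs at most $\epsilon\log(1/\rho_0)$ of clock time. This uses $d\log y/dT_Q=(y+Q)/y\ge1$, so multiplying $y$ by $\lambda\ge1$ advances $T_Q$ by at most $\log\lambda$. That absorbs exactly the slack $\epsilon\log(1/\rho_0)$ in the hypothesis.

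For the propagation I would differentiate along the barrier $b(\rho)=H(\rho)\rho^{-\epsilon-1}$. The dictator satisfies $d(H(\rho)/\rho)/d\log(1/\rho)=H/\rho+u$, with equality against its own profile $B_y(1)$. Each $B_y(q_j)$ with $q_j\le1$ compares to this through the monotonicity of $\ell_u(a)/a$ and the bound $-(\log F)'>1$. The factor $\rho^{-\epsilon}$ only slows the barrier's clock.

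\textbf{Main obstacle.} I expect this last propagation to be the hard part: showing the barrier's clock rate stays $\le1$ for a general profile $Q$ rather than the dictator profile. I would first try reducing to single terms via the dilation/homogeneity of $B$ and the concavity of $a\mapsto\ell_u(a)$. If that fails, I would instead run the comparison directly on $E_\rho$ against $H(\rho)\rho^{-\epsilon}$ through a first crossing, using Step 1 at the crossing point.
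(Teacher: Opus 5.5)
\textbf{The propagation step is a genuine gap.} Steps 1 and 2 are sound. Step 1 is exactly \eqref{st:actual-production}, because $\rho qF^{-1}(E/(\rho q))=\rho B_{E/\rho}(q)$. Integrating $\frac{d}{ds}T_Q(E_\rho/\rho)\ge1$ in $s=\log(1/\rho)$ from the endpoint gives $T_Q(E_\rho/\rho)\ge\log(1/\rho)$. The dilation bound $T_Q(\lambda y)\le T_Q(y)+\log\lambda$ is also correct.

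What remains is the deterministic inequality $T_Q(H(\rho)\rho^{-1-\epsilon})\le\log(1/\rho)$ on $[\rho_0,1)$. That is the real content of the theorem, and your proposal leaves it open. The tools you name cannot close it. Comparing each $B_y(q_j)$ with $B_y(1)$ via the decrease of $\ell_u(a)/a$ in $a$ and $-(\log F)'>1$ gives only pointwise bounds at each height, $\sum_jc_jq_j^2\le Q(F(u))/u\le\sum_jc_jq_j$.

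No pointwise comparison can suffice. If $Q(y)\le F^{-1}(y)=B_y(1)$ everywhere, then $T_Q(H(\rho)/\rho)\ge\log(1/\rho)$ for all $\rho$. So whenever the hypothesis holds nontrivially, $Q$ lies above the dictator profile at some heights. What matters is how $Q(F(u))/u$ varies with $u$.

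Your fallback does not avoid this either. Step 1 is a lower bound on $-E'_\rho$, so a forward first-crossing from $\rho_0$ has the wrong sign. The only lower bound Step 1 can give is the backward one already in Step 2, which is sharp for the comparison equation $y'=y+Q(y)$. Also, if by the barrier's clock rate you mean $dT_Q(b)/d\log(1/\rho)$, what is needed at a crossing is that it be at least $1$, not at most $1$.

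\textbf{The missing ingredient is a monotonicity in the height variable, not in the amplitude.} Put $s=\log(1/\rho)$, $w=H(\rho)/\rho=F(u)$ and $G(s)=T_Q(w)$. Then $dw/ds=w+u$, so
\[
 G'(s)=\frac{w+u}{w+Q(w)},\qquad
 G'(s)<c\iff(1-c)\frac{F(u)}{u}+1<c\,\frac{Q(F(u))}{u}.
\]
Suppose $u\mapsto Q(F(u))/u=\sum_jc_jq_j\ell_u(q_j)/u$ is nondecreasing. For $c\le1$ the left side is nonincreasing in $u$ and the right side nondecreasing, so $\{G'<c\}$ is an initial segment in $s$. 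Hence $G(s)-cs$, which vanishes at $s=0$, first decreases and then increases. With $c=1-\epsilon$, the hypothesis at $s_0$ gives $G(s)\le(1-\epsilon)s$ for all $0<s\le s_0$. Your dilation bound, applied at every $\rho$ rather than only at $\rho_0$, then gives $T_Q(H(\rho)\rho^{-1-\epsilon})\le\log(1/\rho)$.

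Applying the dilation only at $\rho_0$ replaces the hypothesis by the weaker $\psi(\rho_0)\le0$ and makes the propagation harder than necessary. The termwise fact you need is that $u\mapsto\ell_u(q)/u$ is nondecreasing for each fixed $q\in(0,1]$. With $\phi=-\log F$ one has $\phi(\ell_u(q))=\phi(u)+\log q$, and this follows from $x\mapsto x\phi'(x)$ being nondecreasing. That property is not among the profile facts you cite, and you would need to prove or import it.
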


\subsection{Coordinate production and martingale energy}

The production inputs hold for the actual conditional entropy,
before comparison with a dictator. For nonconstant $f$ and
$0<\rho<1$, put $g=T_\rho f$ and $E=E_\rho(f)$. Then
\cite[Proposition~4.3 and the coordinate argument in its proof,
Proposition~6.2]{VuTranCK} give
\begin{equation}\label{st:actual-production}
 \mathcal D_i(g)\ge\rho|\widehat f(i)|
 F^{-1}\!\left(\frac{E}{\rho|\widehat f(i)|}\right),
\end{equation}
and, for every coordinate set $K$,
\begin{equation}\label{st:actual-arcsine}
 D(g)\ge\operatorname{Dir}(\arcsin g)
 +\rho\sum_{i\in K}|\widehat f(i)|
 R\!\left(F^{-1}\!\left(\frac{E}{\rho|\widehat f(i)|}\right)\right).
\end{equation}
Zero singleton coefficients contribute zero. We also have
$\mathcal D_i(g)\ge\operatorname{Dir}_i(g)$.

The marked-coordinate calculation needs an energy bound for a
conditional projection. The same parameterized inequality will apply
to the two conditional sections and to the remaining noisy coordinates.
This is the martingale entropy argument
of Falik and Samorodnitsky \cite{FalikSamorodnitsky2007}, with the
energy of the revealed coordinate retained.
\begin{lemma}[Martingale energy {\cite[Lemma~7.3]{VuTranCK}}]
\label{as:strengthened-split}
For $h:\{-1,1\}^n\to\mathbb R$ and $K\subseteq[n]$, let
$V=\|h-P_Kh\|_2^2$ and
$b_K=\sum_{i\notin K}(\E|\partial_i h|)^2$.
For every real $\lambda$,
\[
 \operatorname{Dir}(h-P_Kh)
 \ge(1+\lambda/2)V-e^{\lambda-1}b_K/2.
\]
\end{lemma}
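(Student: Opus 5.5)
The plan is to prove a sharpened Falik--Samorodnitsky inequality
\[
 \operatorname{Dir}(h-P_Kh)\ge V+\tfrac12 V\log(V/b_K),
\]
and then linearize it in $V$ by a Legendre-type bound. Throughout, write $\varphi=h-P_Kh$. Its Fourier expansion keeps exactly the sets $S\not\subseteq K$, so $\E[\varphi\mid X_K]=0$. Moreover, $P_Kh$ does not depend on the coordinates outside $K$, so $\partial_i\varphi=\partial_ih$ for $i\notin K$.

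\emph{Martingale decomposition.} List the coordinates outside $K$ as $i_1,\dots,i_m$. Put $\mathcal F_j=\sigma(X_K,X_{i_1},\dots,X_{i_j})$ and define the martingale differences $d_j=\E[\varphi\mid\mathcal F_j]-\E[\varphi\mid\mathcal F_{j-1}]$.
\begin{itemize}
\item In Fourier terms, $d_j$ collects exactly the sets $S\not\subseteq K$ whose last element outside $K$, in the chosen order, is $i_j$.
\item These families of sets are disjoint. Hence $\varphi=\sum_jd_j$, $V=\sum_j\|d_j\|_2^2$, and, since $\operatorname{Dir}$ is diagonal in the Fourier basis, $\operatorname{Dir}(\varphi)=\sum_j\operatorname{Dir}(d_j)$.
\item Every set in $d_j$ contains $i_j$, so $d_j=x_{i_j}e_j$ with $e_j$ independent of $x_{i_j}$. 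Then $\operatorname{Dir}(d_j)=\|e_j\|_2^2+\operatorname{Dir}(e_j)$. The term $\|e_j\|_2^2=\|d_j\|_2^2$ is the retained energy of the revealed coordinate, and it produces the extra $V$.
\item Also $|e_j|=|d_j|$ and $d_j=\E[\mathcal L_{i_j}\varphi\mid\mathcal F_j]$. Conditional expectation contracts $L^1$, so $\|e_j\|_1\le\E|\partial_{i_j}\varphi|=\E|\partial_{i_j}h|$.
\end{itemize}

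\emph{Log-Sobolev and convexity.} Apply the log-Sobolev inequality on the cube to each $e_j$:
\[
 2\operatorname{Dir}(e_j)\ge\operatorname{Ent}(e_j^2).
\]
Next bound the entropy from below by a ratio of norms:
\[
 \operatorname{Ent}(e_j^2)\ge\|e_j\|_2^2\log\bigl(\|e_j\|_2^2/\|e_j\|_1^2\bigr).
\]
This is the standard step: by Jensen with respect to the probability measure $e_j^2/\|e_j\|_2^2$, one has $\E[e^2\log(e^2/\|e\|_2^2)]\ge\|e\|_2^2\log(\|e\|_2^2/\|e\|_1^2)$ after writing $\log(e^2/\|e\|_2^2)=-2\log(\|e\|_2/|e|)$. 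Combining the three bounds above,
\[
 \operatorname{Dir}(\varphi)\ge V+\tfrac12\sum_ja_j\log(a_j/c_j),
\]
where $a_j=\|d_j\|_2^2$ and $c_j=(\E|\partial_{i_j}h|)^2\ge\|e_j\|_1^2$. The log-sum inequality then gives $\sum_ja_j\log(a_j/c_j)\ge V\log(V/b_K)$, since $\sum_ja_j=V$ and $\sum_jc_j=b_K$. This proves the sharpened inequality. Degenerate cases need only a remark: if $V=0$ the claim is trivial, and terms with $c_j=0$ force $e_j=0$.

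\emph{Linearization.} The function $x\mapsto x\log(x/b)$ is convex, and its conjugate satisfies
\[
 \sup_{x>0}\bigl(\lambda x-x\log(x/b)\bigr)=b\,e^{\lambda-1}.
\]
Hence $V\log(V/b_K)\ge\lambda V-e^{\lambda-1}b_K$ for every real $\lambda$. Substituting this into the sharpened inequality yields $\operatorname{Dir}(\varphi)\ge(1+\lambda/2)V-e^{\lambda-1}b_K/2$, which is the statement.

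The main obstacle is the bookkeeping in the first step. One must check that the retained term is exactly $\|d_j\|_2^2$, which uses that every Fourier set of $d_j$ contains $i_j$. One must also check that the $L^1$ bound goes through the conditional expectation correctly, so that the denominator is $(\E|\partial_ih|)^2$ rather than a quantity for $d_j$ itself. The remaining steps are standard.
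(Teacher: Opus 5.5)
Your proof is correct and follows the route the paper indicates: it cites \cite[Lemma~7.3]{VuTranCK} and describes the lemma as the Falik--Samorodnitsky martingale entropy argument with the energy of the revealed coordinate retained, which is exactly your chain of $d_j=x_{i_j}e_j$, $\operatorname{Dir}(d_j)=\|d_j\|_2^2+\operatorname{Dir}(e_j)$, log-Sobolev with $\|e_j\|_1\le\E|\partial_{i_j}h|$, log-sum, and the Legendre bound $V\log(V/b_K)\ge\lambda V-e^{\lambda-1}b_K$. The only optional variation is to linearize each term $a_j\log(a_j/c_j)\ge\lambda a_j-e^{\lambda-1}c_j$ separately, which skips the log-sum step and gives the same bound.
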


\section{From entropy estimates to stability}\label{sec:principles}

Three principles organize the stability proof. A production criterion
compares the deficit with a multiple of $H(\rho)$.
Profile dilation keeps that deficit in every coordinate estimate.
Quantitative compression then controls the distance lost when
passing to an increasing function.

\subsection{Comparison with an entropy barrier}

Write $\rho=\tanh u$ and measure the deficit relative to the
dictator's remaining uncertainty:
\begin{equation}\label{eq:normalized-deficit}
 Z_f(u)=\frac{\Delta_{\tanh u}(f)}{H(\tanh u)}.
\end{equation}
Write $s_+=\max\{s,0\}$ for the positive part of a real number.
In the criterion below, $a$ measures the production surplus and $B$
its loss per unit of normalized deficit. The production estimate is
needed only below the threshold $Z_f(u)<\eta\delta(f)$; above it,
the desired lower bound already holds. The proof compares
$\Delta_\rho(f)$ with the barrier $cH(\rho)\delta(f)$ using the
endpoint identity \eqref{eq:noiseless-deficit}.

\begin{proposition}[From production to stability]\label{prop:production-stability}
Fix a nonconstant function $f:\{-1,1\}^n\to \{-1,1\}$, 
and let $u_0,a,\eta>0$ and $B\ge0$. Suppose that whenever
$u\ge u_0$ and $Z=Z_f(u)<\eta\delta(f)$, with $\rho=\tanh u$,
\begin{equation}\label{eq:production-criterion}
 (1+Z_+)\frac{D(T_\rho f)}\rho
 \ge u(1+a\delta(f)-BZ_+).
\end{equation}
Then $\Delta_\rho(f)\ge cH(\rho)\delta(f)$ for all
$\tanh u_0\le\rho<1$ and every
$0<c<\min\{\eta,1,a/(B+3)\}$.
\end{proposition}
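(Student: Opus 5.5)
The plan is a barrier (comparison) argument in the variable $u=\atanh\rho$, with $\delta=\delta(f)$ held fixed. First dispose of the trivial case $\delta=0$. Then $f$ is a signed dictator, so $A_\rho(f)=\Phi(\rho)$ and $\Delta_\rho(f)=0$, and the claim is immediate. From now on assume $0<\delta\le 1/2$ and fix $0<c<\min\{\eta,1,a/(B+3)\}$. Define
\[
 G(u)=\Delta_{\tanh u}(f)-c\,\delta\,H(\tanh u),\qquad u\ge u_0 .
\]
The goal is $G\ge0$ on $[u_0,\infty)$. We argue by contradiction, using the endpoint identity \eqref{eq:noiseless-deficit}.

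\emph{Step 1: the derivative of $G$.} Since $f$ is nonconstant, $\Delta_\rho(f)$ is differentiable for $0<\rho<1$. By \eqref{st:production-identity} and $d\rho/du=1-\rho^2$ we get $\frac{d}{du}\Delta_{\tanh u}=(1-\rho^2)\,(u-D(T_\rho f)/\rho)$. From $H'=-\atanh$ we get $\frac{d}{du}H(\tanh u)=-u(1-\rho^2)$. Hence
\[
 G'(u)=(1-\rho^2)\Bigl[u(1+c\delta)-\frac{D(T_\rho f)}{\rho}\Bigr].
\]

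\emph{Step 2: the key step, $G'(u)<0$ whenever $G(u)\le0$.} If $G(u)\le 0$, then $Z=Z_f(u)\le c\delta<\eta\delta$. So the hypothesis \eqref{eq:production-criterion} applies, and moreover $Z_+\le c\delta$. Dividing by $1+Z_+$ and using monotonicity in $Z_+$ gives
\[
 \frac{D(T_\rho f)}{\rho}\ge u\,\frac{1+a\delta-BZ_+}{1+Z_+}\ge u\,\frac{1+a\delta-Bc\delta}{1+c\delta}.
\]
Thus $G'(u)<0$ reduces to $1+a\delta-Bc\delta>(1+c\delta)^2$. Since $c\le1$ and $\delta\le1$, we have $(1+c\delta)^2\le 1+3c\delta$. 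So it suffices that $a\delta>(B+3)c\delta$, which holds because $\delta>0$ and $c<a/(B+3)$. This algebra is the only place the constraints on $c$ enter.

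\emph{Step 3: conclusion.} Suppose $G(u_1)<0$ for some $u_1\ge u_0$. Let $S=\{u\ge u_1: G(v)\le G(u_1)\text{ for all }v\in[u_1,u]\}$. By Step 2, $G$ is strictly decreasing near every point where $G\le0$. Together with the continuity of $G$, this shows that $S$ is open and closed in $[u_1,\infty)$, so $G\le G(u_1)<0$ on all of $[u_1,\infty)$.

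On the other hand, as $u\to\infty$ we have $\rho\to1$. Then $E_\rho(f)=\E H(T_\rho f)\to E_1(f)=0$, because $T_\rho f$ is polynomial in $\rho$ and $H$ is continuous. Also $H(\rho)\to0$. By \eqref{ub:gap-definitions}, $G(u)\to\Phi(\mu)\ge0$, which contradicts $G\le G(u_1)<0$. Hence $\Delta_\rho(f)\ge cH(\rho)\delta(f)$ for all $\tanh u_0\le\rho<1$.

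I expect the only delicate point to be the bookkeeping in Step 2. One must check that the threshold condition $Z<\eta\delta$ is automatically met on the set $\{G\le0\}$, which is why $c<\eta$ is required. One must also check that the inequality $(1+c\delta)^2\le1+3c\delta$ is what produces the constant $B+3$. The topological continuation in Step 3 is routine.
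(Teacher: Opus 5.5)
Your proposal is correct and follows the paper's proof: you use the same barrier function $\Delta_\rho(f)-cH(\rho)\delta(f)$, the same observation that the hypothesis applies wherever this function is nonpositive (since $c<\eta$), the same monotonicity in $Z_+$ together with $a>(B+3)c$ to force a negative derivative, and the same contradiction with the endpoint value $\Delta_1(f)=\Phi(\E f)\ge0$. The only differences are cosmetic: you parametrize by $u$, which multiplies the derivative by the positive factor $1-\rho^2$, and you spell out the continuation step with a connectedness argument.
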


\begin{proof}
Put $\delta=\delta(f)$. The case $\delta=0$ is immediate. For $\delta>0$, set
$G(\rho)=\Delta_\rho(f)-cH(\rho)\delta$.
This function is continuous on $[\tanh u_0,1]$.
By \eqref{eq:noiseless-deficit}, $G(1)=\Phi(\E f)\ge0$.
Suppose that $G$ is negative at some $\rho\ge\tanh u_0$.
Wherever $G<0$, we have $Z<c\delta<\eta\delta$ and
$Z_+<c\delta<1$, so the hypothesis applies. Since $s\mapsto(1+a\delta-Bs)/(1+s)$ is decreasing,
\eqref{eq:production-criterion} gives
\[
 \frac{D(T_\rho f)}{\rho u}
 \ge\frac{1+(a-Bc)\delta}{1+c\delta}>1+c\delta.
\]
The strict inequality follows from
$a>(B+3)c>(B+2)c+c^2\delta$.
Equation~\eqref{st:production-identity} therefore yields
$G'(\rho)=u(1+c\delta)-D(T_\rho f)/\rho<0$.
Starting from a negative value, $G$ keeps decreasing and cannot
return to zero. Continuity at $\rho=1$ then contradicts $G(1)\ge0$.
\end{proof}

Both production estimates below have this form: spectral transfer
supplies the surplus away from dictators, and a marked coordinate
supplies it near them. The criterion allows these estimates to
alternate with a direct local bound; no single region must persist
as correlation increases.

\subsection{A common dilation for all coordinates}

The ideal heights in \eqref{eq:profiles} use entropy $H(\rho)$.
The actual entropy can exceed $H(\rho)$ by at most the positive part
of the deficit. The following
comparison keeps that error in a single factor, which will be used in
both the marked-coordinate and spectral arguments.

\begin{lemma}\label{st:dilation}
For $c\ge1$ and $v>0$,
$F^{-1}(cF(v))\ge v/c$.
Let $f:\{-1,1\}^n\to \{-1,1\}$ be a nonconstant Boolean function and $0<\rho<1$.
Put $u=\atanh\rho$, $g=T_\rho f$, and
$\ell_i=\ell_u(|\widehat f(i)|)$.
If $\vartheta\ge1$ satisfies
$E_\rho(f)\le\vartheta H(\rho)$, then
\[
 \vartheta\mathcal D_i(g)\ge\rho|\widehat f(i)|\ell_i,\qquad
 \sum_i|\widehat f(i)|\ell_i\le\vartheta D(g)/\rho,
\]
and
\[
 \vartheta D(g)\ge\operatorname{Dir}(\arcsin g)
              +\rho\sum_{i\in K}|\widehat f(i)|R(\ell_i)
\]
for every coordinate set $K\subseteq[n]$.
In particular, one can take
$\vartheta=1+(\Delta_\rho(f))_+/H(\rho)$.
\end{lemma}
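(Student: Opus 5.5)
The argument has three steps: an elementary scaling inequality for $F$, a comparison of the two heights $F^{-1}(E/(\rho q))$ and $\ell_u(q)$, and substitution into the production bounds \eqref{st:actual-production} and \eqref{st:actual-arcsine}.

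\emph{Scaling inequality.} Since $F$ is decreasing, $F^{-1}(cF(v))\ge v/c$ is equivalent to $F(v/c)\ge cF(v)$. Put $w=v/c\le v$. The required inequality then reads $wF(w)\ge vF(v)$, which holds because $vF(v)$ is strictly decreasing by Lemma~\ref{as:profile-convex}. This step is immediate.

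\emph{Height comparison.} Fix $i$ with $q=|\widehat f(i)|>0$; coordinates with $q=0$ contribute zero on both sides. Set $E=E_\rho(f)$ and $\ell=\ell_i=F^{-1}(F(u)/q)$, so that $F(\ell)=F(u)/q$. Since $F(u)=H(\rho)/\rho$, this gives $H(\rho)/(\rho q)=F(\ell)$. The hypothesis $E\le\vartheta H(\rho)$, together with $F^{-1}$ being decreasing, yields
\[
 F^{-1}\!\left(\frac{E}{\rho q}\right)\ge F^{-1}\bigl(\vartheta F(\ell)\bigr)\ge \ell/\vartheta .
\]
The second inequality is the scaling inequality with $c=\vartheta\ge1$.

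\emph{Production bounds.} Inserting the height comparison into \eqref{st:actual-production} gives $\mathcal D_i(g)\ge\rho q\ell_i/\vartheta$, which is the first claim. Summing over $i$ and dividing by $\rho/\vartheta$ gives $\sum_i|\widehat f(i)|\ell_i\le\vartheta D(g)/\rho$, the second claim.

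For the arcsine bound, $R$ is nondecreasing because $R'\ge0$. Since $R(0)=0$ and $R$ is concave-free with $R''\ge 0$ (Lemma~\ref{as:profile-convex}), $R$ is convex, so $R(\ell/\vartheta)\ge R(\ell)/\vartheta$. Applying this inside \eqref{st:actual-arcsine} gives
\[
 D(g)\ge \operatorname{Dir}(\arcsin g)+\frac{\rho}{\vartheta}\sum_{i\in K}|\widehat f(i)|R(\ell_i).
\]
Multiplying by $\vartheta\ge1$ and using $\operatorname{Dir}(\arcsin g)\ge0$, so that $\vartheta\operatorname{Dir}(\arcsin g)\ge\operatorname{Dir}(\arcsin g)$, gives the stated inequality. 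This is the only point where some care is needed: the argument uses monotonicity and convexity of $R$ with $R(0)=0$, and not merely its Lipschitz bound.

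\emph{The admissible choice of $\vartheta$.} By \eqref{ub:gap-definitions}, $E_\rho(f)=H(\rho)+\Delta_\rho(f)-\Phi(\mu)\le H(\rho)+(\Delta_\rho(f))_+$, because $\Phi\ge0$. Hence $E_\rho(f)\le\vartheta H(\rho)$ with $\vartheta=1+(\Delta_\rho(f))_+/H(\rho)\ge1$.
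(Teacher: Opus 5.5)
Your scaling inequality, the height comparison $F^{-1}(E/(\rho|\widehat f(i)|))\ge\ell_i/\vartheta$, the coordinate and total production bounds, and the admissible choice of $\vartheta$ are all correct and coincide with the paper's argument. The arcsine step, however, has a genuine error. Convexity of $R$ together with $R(0)=0$ gives the \emph{reverse} of the inequality you use. For $0<\lambda\le1$,
\[
 R(\lambda\ell)=R\bigl(\lambda\ell+(1-\lambda)\cdot0\bigr)\le\lambda R(\ell)+(1-\lambda)R(0)=\lambda R(\ell);
\]
equivalently, $R(v)/v$ is nondecreasing. Here it genuinely increases: near $0$ one has $R(v)=v^5/45+O(v^7)$, and $R(v)/v\to1$ as $v\to\infty$. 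So for $\vartheta>1$ you get $R(\ell_i/\vartheta)\le R(\ell_i)/\vartheta$, with strict inequality for small $\ell_i$. Consequently your intermediate bound $D(g)\ge\operatorname{Dir}(\arcsin g)+(\rho/\vartheta)\sum_{i\in K}|\widehat f(i)|R(\ell_i)$ does not follow from the height comparison.

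The missing idea is that the height error must be paid additively from the production budget, not absorbed by rescaling $R$. Write $h_i=F^{-1}(E/(\rho|\widehat f(i)|))\ge\ell_i/\vartheta$. Since $0\le R'\le1$, monotonicity and the Lipschitz bound give $R(h_i)\ge R(\ell_i/\vartheta)\ge R(\ell_i)-(1-1/\vartheta)\ell_i$. Substituting this into \eqref{st:actual-arcsine} leaves a loss of at most $(1-1/\vartheta)\rho\sum_{i\in K}|\widehat f(i)|\ell_i$. The total production bound you already proved, $\rho\sum_i|\widehat f(i)|\ell_i\le\vartheta D(g)$, bounds this loss by $(\vartheta-1)D(g)$. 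Moving it to the left side gives exactly $\vartheta D(g)\ge\operatorname{Dir}(\arcsin g)+\rho\sum_{i\in K}|\widehat f(i)|R(\ell_i)$. This is why $\vartheta$ multiplies $D(g)$ but not $\operatorname{Dir}(\arcsin g)$. It is also why your closing remark is backwards: the Lipschitz bound $R'\le1$, combined with the total budget, is precisely what the argument needs, while convexity points the wrong way.
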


\begin{proof}
The decrease of $vF(v)$ gives $F(v/c)\ge cF(v)$, proving the
inverse bound. The entropy bound $E_\rho(f)\le\vartheta H(\rho)$
therefore makes each actual profile height at least $\ell_i/\vartheta$. Equation~\eqref{st:actual-production} gives
the coordinate and total production bounds. Moreover $0\le R'(v)\le1$, so
$R(\ell_i/\vartheta)\ge R(\ell_i)-(1-1/\vartheta)\ell_i$.
Apply \eqref{st:actual-arcsine} and pay the total error
using the total production bound. Its cost is at most
$(\vartheta-1)D(g)$, proving the arcsine assertion.
The stated choice of $\vartheta$ follows from
$E_\rho(f)=H(\rho)-\Phi(\mu)+\Delta_\rho(f)$.
\end{proof}

\subsection{Quantitative compression}

The classical compression reduction of Courtade and Kumar
\cite[Lemma~2]{CourtadeKumar2014} shows that sorting coordinate
fibers increasingly does not decrease mutual information.
For a Boolean function $f$, sorting once in each coordinate
produces an increasing Boolean function $g$ satisfying
\[
 \E g=\E f,\qquad
 A_\rho(g)\ge A_\rho(f),\qquad
 \delta(g)\le\delta(f)
 \qquad(0<\rho<1).
\]
See \cite[Lemma~2.1]{VuTranCK} for this formulation, including
the comparison of dictator distances. These comparisons reduce
information maximization to increasing functions, but do not
by themselves transfer stability: a bound on $\delta(g)$ need
not control $\delta(f)$. The next lemma supplies the missing
quantitative estimate, controlling the original dictator distance
by the compressed distance and the information gained during
sorting.

\begin{lemma}\label{st:compression}
Let $g$ be obtained from $f:\{-1,1\}^n\to\{-1,1\}$ by sorting each
coordinate increasingly, once per coordinate. For $0<\rho<1$,
\[
 \rho^2(1-\rho^2)\delta(f)
 \le A_\rho(g)-A_\rho(f)
       +\rho^2(2-\rho^2)\delta(g).
\]
Consequently, if $c>0$ and $\Delta_\rho(g)\ge c\delta(g)$ for every
increasing Boolean $g$, then every Boolean $f$ satisfies
\[
 \Delta_\rho(f)\ge
 \rho^2(1-\rho^2)
 \min\left\{1,\frac{c}{\rho^2(2-\rho^2)}\right\}\delta(f).
\]
\end{lemma}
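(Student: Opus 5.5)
I would prove the displayed inequality one sorting step at a time and then telescope. The consequence is pure bookkeeping, so I handle it first. Write $k=\rho^2(2-\rho^2)$ and use $A_\rho(g)-A_\rho(f)=\Delta_\rho(f)-\Delta_\rho(g)$. The first inequality becomes
\[
\Delta_\rho(f)\ge\rho^2(1-\rho^2)\delta(f)+\Delta_\rho(g)-k\delta(g)\ge\rho^2(1-\rho^2)\delta(f)-(k-c)\delta(g).
\]
If $c\ge k$ we are done with the factor $1$. Otherwise I also use $\Delta_\rho(f)\ge\Delta_\rho(g)\ge c\delta(g)$, which holds because $A_\rho(g)\ge A_\rho(f)$. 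Take the convex combination with weights $\theta=c/k$ and $1-\theta$. The $\delta(g)$ terms cancel, since $\theta(k-c)=(1-\theta)c$, and this leaves $\Delta_\rho(f)\ge(c/k)\rho^2(1-\rho^2)\delta(f)$. This is the stated minimum.

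For the main inequality, let $f=f_0,f_1,\dots,f_n=g$, where $f_j$ is obtained by sorting $f_{j-1}$ in coordinate $j$. Write $f_{j-1}=b+x_ja$, with $a,b$ independent of $x_j$ and $|a|+|b|=1$ pointwise; then $f_j=b+x_j|a|$. I would prove two per-step statements in terms of the flipped mass $m_j=\E[|a|-a]/2=\PP(\text{fiber reversed})$.
\begin{enumerate}
\item[(i)] The information gain satisfies $A_\rho(f_j)-A_\rho(f_{j-1})\ge\rho^2(1-\rho^2)\,\kappa\, m_j$ for a suitable constant $\kappa$.
\item[(ii)] The singleton coefficients move in a controlled way: $|\widehat{f_j}(i)-\widehat{f_{j-1}}(i)|\le 2m_j$ for all $i$, with $\widehat{f_j}(j)=\widehat{f_{j-1}}(j)+2m_j$.
\end{enumerate}
Summing (ii) along the chain gives $\alpha(f)\ge\alpha(g)-2\sum_j m_j$, that is, $\delta(f)\le\delta(g)+\sum_j m_j$. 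Inserting (i) then yields an inequality of the required shape. The extra $\rho^2\delta(g)$ slack in the statement suggests that the per-step bound in (i) is only proved with $m_j$ replaced by something like $m_j-\rho^2(\text{local }\delta)$. So I expect to need the sharper accounting $\delta(f)\le\delta(g)+\dots$ with the coefficient $2-\rho^2$ rather than $1$, and to adjust (ii) accordingly.

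For (i), I condition on the noisy observation $Y^{(-j)}$ of the other coordinates, which fixes $\beta=T_\rho b$ and the noisy value of $a$. The conditional posterior mean then has the form $\beta+\rho y_j\gamma$. The remaining randomness is the single noisy bit $y_j$, which takes values $\pm1$ with equal probability. The gain therefore reduces to the one-dimensional quantity
\[
\tfrac12\bigl[H(\beta+\rho|\gamma|)+H(\beta-\rho|\gamma|)\bigr]\ \text{versus the unsorted average}.
\]
This is the Courtade--Kumar rearrangement step, made quantitative through $-H''(t)=1/(1-t^2)\ge1$. Comparing the sorted and unsorted configurations by Taylor expansion to second order gives a gain of at least $\rho^2$ times the squared reversed portion of $\gamma$. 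The factor $1-\rho^2$ appears when one passes from the noisy $\gamma$ back to the Boolean fibre, because noise attenuates the reversed part by $\rho$ and its variance by $1-\rho^2$. Finally, $\E|a|\le1$ converts squared mass into linear mass.

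\textbf{Main obstacle.} The hardest step is (i): making the rearrangement gain linear in $m_j$ rather than quadratic, uniformly in $\beta$ near $\pm1$. I would first try to exploit that $f$ is Boolean: on a reversed fibre $|a|=1$ and $b=0$, so $\beta$ is small there. This keeps $H''$ bounded away from its degenerate regime and should make the Taylor estimate linear in the flipped probability.
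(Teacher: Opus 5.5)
Your derivation of the consequence from the first inequality is correct; it is the paper's argument with different bookkeeping. The main inequality, however, has a genuine gap: the per-step claim (i) is false.

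For $f=-x_1$, sorting coordinate $1$ gives $x_1$, so $m_1=1$. Yet $A_\rho$ does not change, because mutual information is invariant under flipping the sign of an input coordinate. Symmetrizing the reversed mass does not help either. Write $a=a^+-a^-$ with $a^\pm\ge0$. Your quadratic lower bound for the gain is
\[
 \tfrac{\rho^2}{2}\bigl(\|T_\rho|a|\|_2^2-\|T_\rho a\|_2^2\bigr)=2\rho^2\,\E\bigl[a^+(X)a^-(X')\bigr],
\]
where $X'$ is a $\rho^2$-correlated copy of the other coordinates. Now let $a^+$ and $a^-$ be indicators of two opposite subcubes of codimension $k$ in coordinates $2,\dots,k+1$, with $b=x_{k+2}$ off their union. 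Then all posterior means stay in $[-\rho,\rho]$, and the true gain is at most $2\rho\atanh(\rho)(1-\rho^2)^{k/2}m_1$. So no bound linear in reversed mass, signed or symmetrized, holds step by step, and telescoping cannot produce the lemma. Your estimate $\delta(f)\le\delta(g)+\sum_j m_j$ also discards the sign symmetry of $\delta$. Finally, the obstacle you named is not the real one: $\Phi''(t)=1/(1-t^2)\ge1$, so nothing degenerates near $\pm1$.

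The missing idea is to use a single sorting step, together with global rather than fibrewise estimates. Choose $i$ with $\widehat g(i)=1-2\delta(g)$. Sorting a coordinate $j\ne i$ leaves the $i$th singleton coefficient exactly unchanged, since it depends only on $b$ in the decomposition at $j$; this is the exact form of your (ii). So if the source just before coordinate $i$ is sorted is $b+x_id$, then $\E d=\widehat f(i)$ and $\E|d|=\E d^2=\widehat g(i)$.

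Your conditioning argument gives this step a gain of at least $\frac{\rho^2}{2}\bigl(\|T_\rho|d|\|_2^2-\|T_\rho d\|_2^2\bigr)$, with $T_\rho$ acting on the other coordinates. This holds because $v\mapsto\frac12[\Phi(\beta+\rho v)+\Phi(\beta-\rho v)]$ is even with second derivative at least $\rho^2$, and $T_\rho|d|\ge|T_\rho d|$. Jensen gives $\|T_\rho|d|\|_2^2\ge\widehat g(i)^2$. The spectral gap gives $\|T_\rho d\|_2^2\le\widehat f(i)^2+\rho^2\bigl(\widehat g(i)-\widehat f(i)^2\bigr)$. The difference is therefore at least $(1-\rho^2)\bigl(1-\widehat f(i)^2\bigr)-\bigl(1-\widehat g(i)\bigr)\bigl(1+\widehat g(i)-\rho^2\bigr)$.

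Sign invariance enters through $1-\widehat f(i)^2\ge1-|\widehat f(i)|\ge2\delta(f)$, which is what makes $f=-x_1$ harmless. The term $\rho^2(2-\rho^2)\delta(g)$ is just the price of $\widehat g(i)<1$ in these two estimates, not a per-step correction. All other sorting steps have nonnegative gain, so nothing needs to be telescoped.
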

\begin{proof}
Choose $i$ with $\widehat g(i)=1-2\delta(g)$. Immediately before
coordinate $i$ is sorted, write the current source as $b+x_i d$.
Sorting any other coordinate preserves the $i$th singleton coefficient.
Thus $\E d=\widehat f(i)$ and $\E|d|=\widehat g(i)$.
Also $d\in\{-1,0,1\}$, so $\E d^2=\widehat g(i)$.

At each noisy output of the other coordinates, the function
$v\mapsto[\Phi(T_\rho b+\rho v)+\Phi(T_\rho b-\rho v)]/2$
is even and has second derivative at least $\rho^2$ on its feasible
interval. Sorting replaces $T_\rho d$ by $T_\rho|d|\ge|T_\rho d|$.
The information gained in this step is therefore at least
\[
 \frac{\rho^2}{2}
 \left(\|T_\rho|d|\|_2^2-\|T_\rho d\|_2^2\right).
\]
Here the operators act on the other coordinates. Jensen's inequality
and the spectral gap bound the expression in parentheses from below by
\[
 (1-\rho^2)(1-\widehat f(i)^2)
 -(1-\widehat g(i))(1+\widehat g(i)-\rho^2).
\]
Now $1-\widehat f(i)^2\ge2\delta(f)$,
$1-\widehat g(i)=2\delta(g)$, and
$1+\widehat g(i)-\rho^2\le2-\rho^2$.
Substituting these bounds and adding the nonnegative gains from
all other sorting steps proves the first assertion.

For the second, write $L=A_\rho(g)-A_\rho(f)\ge0$.
Since $\Delta_\rho(f)=L+\Delta_\rho(g)$, the first assertion gives
\[
 \rho^2(1-\rho^2)\delta(f)
 \le L+\frac{\rho^2(2-\rho^2)}c\Delta_\rho(g)
 \le\max\{1,\rho^2(2-\rho^2)/c\}\Delta_\rho(f). \qedhere
\]
\end{proof}

\section{Local and intermediate correlations}\label{sec:local-stability}

We first describe the deficit when $f$ approaches a dictator at a
fixed correlation. A relative-entropy calculation gives the leading
term and an explicit remainder for arbitrary mean. In the balanced
case, the leading coefficient agrees with Yu's asymptotic bound
\cite[Theorem~4.2]{Yu2026Local}.

\begin{corollary}[Local stability]\label{cor:local-stability}
Let $f:\{-1,1\}^n\to\{-1,1\}$ and $0<\rho<1$, and put
$\delta=\delta(f)$ and $\mu=\E f$. Then
\[
 0\le A_\rho(f)-\Phi(\rho)+2\rho\atanh(\rho)\delta+\Phi(\mu)
 \le\frac{4\delta^{2/(1+\rho^2)}}{1-\rho^2}.
\]
\end{corollary}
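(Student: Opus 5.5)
The plan is to expand the deficit around the nearest signed dictator and read the quantity in the middle as an averaged relative entropy. The lower bound will come from concavity of $H$, and the upper bound from a $\chi^2$ comparison combined with hypercontractivity.

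\textbf{Reduction and rewriting.} Choose $i,\sigma$ attaining $\delta=\delta(f)$. Relabelling coordinates and replacing $x_i$ by $\sigma x_i$ leaves $A_\rho(f)$, $\mu$ and $\delta$ unchanged, so I assume $i=1$ and $\sigma=1$. Put $e=f-X_1$. It takes values in $\{0,\pm2\}$, is supported on the event $\{f\ne X_1\}$ of probability $\delta$, and equals $-2X_1$ there. Then $g=T_\rho f=\rho X_1+h$ with $h=T_\rho e$. Since $A_\rho(f)=H(\mu)-\E H(g)$ and $H(\mu)+\Phi(\mu)=\log2$, the middle quantity is
\[
 Q=H(\rho)-\E H(g)+2\rho u\delta,\qquad u=\atanh\rho.
\]
The singleton coefficient gives $\E[X_1h]=\rho\,\E[X_1e]=-2\rho\delta$. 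Together with $H'=-\atanh$ and the oddness of $\atanh$, this yields
\[
 Q=\E\bigl[H(\rho X_1)+H'(\rho X_1)(g-\rho X_1)-H(g)\bigr].
\]
This is the expectation of the Bregman divergence of $-H$, or equivalently of $\Phi$, between $g$ and $\rho X_1$. Pointwise it equals the relative entropy of the $\{-1,1\}$-valued law with mean $g$ with respect to the law with mean $\rho X_1$.

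\textbf{Lower bound.} The lower bound $Q\ge0$ is immediate: $H$ is concave, so it lies below its tangent line at $\rho X_1$, and each Bregman term is nonnegative.

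\textbf{Upper bound.} I will use $\mathrm{KL}\le\chi^2$. For two bits with means $p$ and $q=\pm\rho$, the $\chi^2$ divergence is $(p-q)^2/(1-q^2)$, so
\[
 Q\le\frac{\E h^2}{1-\rho^2}=\frac{\|T_\rho e\|_2^2}{1-\rho^2}.
\]
Bonami--Beckner hypercontractivity, in the form $\|T_\rho e\|_2\le\|e\|_{1+\rho^2}$, then gives
\[
 \|T_\rho e\|_2^2\le\bigl(2^{1+\rho^2}\delta\bigr)^{2/(1+\rho^2)}=4\delta^{2/(1+\rho^2)},
\]
which is the stated remainder.

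\textbf{Expected obstacle.} The step most likely to cause trouble is the bookkeeping in the rewriting, not the inequalities. One must check that the linear term is exactly $2\rho u\delta$ with the correct sign. This relies on $e=-2X_1$ on the error set and on the $\atanh$ factor depending on $x$ only through $X_1$. The points where $|g|=1$ are harmless because $\rho<1$ keeps the reference mean $\rho X_1$ in the interior.
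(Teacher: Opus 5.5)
Your proposal is correct and follows the paper's own proof: you write the middle quantity as the averaged Bernoulli relative entropy (the Bregman divergence of $\Phi$) between $g$ and $\rho X_i$, using $\E[X_i(g-\rho X_i)]=-2\rho\delta$, and bound it below by $0$ and above by $(s-t)^2/(1-t^2)$. You then finish, as the paper does, with hypercontractivity in the form $\|T_\rho(f-x_i)\|_2^2\le\|f-x_i\|_{1+\rho^2}^2=4\delta^{2/(1+\rho^2)}$.
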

\begin{proof}
Orient a closest dictator as $x_i$ and write $g=T_\rho f$.
For $s\in[-1,1]$ and $t\in(-1,1)$, the expression
$\Phi(s)-\Phi(t)-\atanh(t)(s-t)$ is the relative entropy
between Bernoulli laws of means $s$ and $t$. It lies between zero
and $(s-t)^2/(1-t^2)$: the upper bound follows from
$\log x\le x-1$. Average with $s=g(X)$ and $t=\rho X_i$.
Since $\E[X_i(g-\rho X_i)]=-2\rho\delta$, the middle expression
is exactly the resulting relative entropy. Hypercontractivity
\cite[Theorem~3.1]{VuTranCK} gives
$\|T_\rho(f-x_i)\|_2^2\le\|f-x_i\|_{1+\rho^2}^2
=4\delta^{2/(1+\rho^2)}$, proving the upper bound.
\end{proof}
Since $|\mu|\le2\delta$ and $\Phi(\mu)\le(\log2)\mu^2$, the
information deficit is
$2\rho\atanh(\rho)\delta+O_\rho(\delta^{2/(1+\rho^2)})$
as $\delta\downarrow0$, uniformly in the dimension. In particular,
the leading coefficient is sharp, and the deficit is at least
$\rho\atanh(\rho)\delta$ for sufficiently small $\delta$ at fixed $\rho$.
This local expansion assumes neither balance nor monotonicity.

\subsection{Local entropy and low-correlation inputs}

To obtain bounds on explicit dictator neighborhoods, we use a
comparison that retains one coordinate before applying entropy
contraction to the others. Define
\begin{equation}\label{eq:local-comparison}
 M_\rho(a)=(1-\rho^2)H(\rho a)-(1-\rho^2a)H(\rho),
 \qquad 0\le a\le1.
\end{equation}
It is a computable lower bound for the information deficit. Its
concavity turns an estimate at one coefficient into a linear bound
throughout a neighborhood of the dictators.

\begin{lemma}[Local comparison and propagation
{\cite[Theorem~3.4 and Lemma~3.5]{VuTranCK}}]
\label{ub:local}\label{ub:local-propagation}
For every Boolean function $f:\{-1,1\}^n\to \{-1,1\}$, every $i\in [n]$, and $0<\rho<1$,
$\Delta_\rho(f)\ge M_\rho(|\widehat f(i)|)$.
The function $M_\rho$ is concave and $M_\rho(1)=0$.
If $M_{\rho_0}(a)\ge0$ with $0<\rho_0<1$, then
\[
 M_\rho(a)/\rho^2\ge M_{\rho_0}(a)/\rho_0^2
 \qquad \text{for } 0<\rho\le\rho_0.
\]
\end{lemma}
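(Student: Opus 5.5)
The lemma has three parts: the bound $\Delta_\rho(f)\ge M_\rho(|\widehat f(i)|)$, concavity of $M_\rho$ with $M_\rho(1)=0$, and the propagation inequality. I would prove them in that order.

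\emph{The comparison.} By symmetry take $i=1$ and $a=|\widehat f(1)|$; replacing $x_1$ by $-x_1$ if necessary, assume $\widehat f(1)\ge0$. Write $f=b+x_1d$, where $b,d$ depend on the other coordinates, so that $g=T_\rho f=T_\rho b+\rho X_1T_\rho d$. The idea is to condition on the noisy observation $Y_1$ of the retained coordinate and apply entropy contraction only to the remaining coordinates. By the chain rule,
\[
 I(f(X);Y_\rho)=I(f(X);Y_1)+I(f(X);Y_{-1}\mid Y_1),
\]
where $Y_{-1}$ denotes the other noisy coordinates. For the first term, $f(X)$ given $Y_1=y$ has mean $\rho a y$, so the term equals $H(\mu)-H(\rho a)$ up to the mean correction. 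For the second, I would use the strong data-processing inequality for the binary symmetric channel: for each fixed $x_1$, the channel from $X_{-1}$ to $Y_{-1}$ has contraction coefficient $\rho^2$, and a Markov-chain argument bounds the second term by $\rho^2$ times $I(f(X);X_{-1}\mid Y_1)$. That quantity is at most $H(f(X)\mid Y_1)-H(f(X)\mid X)=H(\rho a)$. Hence
\[
 A_\rho(f)\le H(\mu)-H(\rho a)+\rho^2 H(\rho a)=H(\mu)-(1-\rho^2)H(\rho a).
\]
Recall that $\Delta_\rho=\Phi(\rho)-A_\rho=\log2-H(\rho)-A_\rho$. The displayed bound gives $\Delta_\rho\ge(1-\rho^2)H(\rho a)-H(\rho)+\Phi(\mu)$. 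This falls short of the target $M_\rho(a)$, whose correction term is $+\rho^2aH(\rho)$. Recovering that term is the main obstacle. I would get it by not discarding $H(f\mid X_1)$-type information: the data-processing step should be applied to the component of $f$ not explained by $X_1$, and that component has the weight $(1-a)$ of mass. I expect the gain to come from a convex-combination or mixture bound, writing the law of $f$ given $X_1$ as a mixture of a dictator part of weight $a$ and a residual part. For the dictator part, the conditional entropy given $Y_1$ is exactly $H(\rho)$, and this produces the $\rho^2aH(\rho)$ term. I would follow \cite[Theorem~3.4]{VuTranCK} for this step.

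\emph{Concavity and the endpoint.} At $a=1$, $M_\rho(1)=(1-\rho^2)H(\rho)-(1-\rho^2)H(\rho)=0$. For concavity, the term $-(1-\rho^2a)H(\rho)$ is linear in $a$, and $a\mapsto H(\rho a)$ is concave because $H$ is concave on $[-1,1]$. So $M_\rho$ is concave.

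\emph{Propagation.} Fix $a$ and study $\rho\mapsto M_\rho(a)/\rho^2$. I would set $m(\rho)=M_\rho(a)/\rho^2$ and show that $m$ is nonincreasing wherever $m\ge0$. It is natural to pass to a variable in which the bound becomes an ODE comparison. Expanding in power series via \eqref{ub:entropy-series} gives $H(t)=\log2-\sum_k t^{2k}/(2k(2k-1))$. Then
\[
 M_\rho(a)=-\rho^2(1-a)\log2+\sum_k\frac{\rho^{2k}}{2k(2k-1)}\bigl[(1-\rho^2a)-(1-\rho^2)a^{2k}\bigr].
\]
Dividing by $\rho^2$, I would check the sign of the derivative in $\rho$, or equivalently in $s=\rho^2$, using the hypothesis $m\ge0$ to absorb the negative constant $-(1-a)\log2$. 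Concretely, I would show $s\,m'(s)\le c(s)\,m(s)$ for a coefficient $c(s)\le0$. This makes the set where $m\ge0$ closed under decreasing $s$, and makes $m$ monotone there. The term-by-term bracket $(1-sa)-(1-s)a^{2k}$ is affine in $s$, so each series term has an explicit derivative. The remaining work is to verify that the combination has the right sign, comparing the $k$-th terms against the $k=1$ term.
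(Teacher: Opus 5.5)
Your overall route (chain rule in $Y_i$, then strong data processing with constant $\rho^2$ on the remaining coordinates) matches the ``retain one coordinate, contract the others'' idea behind the cited comparison. However, the comparison inequality itself, which is the substance of the lemma, is not proved. You obtain only $A_\rho(f)\le H(\mu)-(1-\rho^2)H(\rho a)$, and you defer the missing $\rho^2aH(\rho)$ to the citation with a guessed mixture argument.

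The loss occurs where you bound $I(f(X);X_{-i}\mid Y_i)$ by $H(f(X)\mid Y_i)-H(f(X)\mid X)$. The exact value is $H(f(X)\mid Y_i)-H(f(X)\mid X_{-i},Y_i)$, and the subtracted term is not zero. For fixed $X_{-i}$, $f$ is either constant or $\pm X_i$, so
\[
 H(f(X)\mid X_{-i},Y_i)=\operatorname{Inf}_i(f)\,H(\rho)\ge|\widehat f(i)|\,H(\rho),
\]
since $\widehat f(i)=\E\,\partial_i f$. The dictator part you were looking for therefore comes from conditioning on the other coordinates, with weight the influence. It does not come from decomposing the law of $f$ given $X_1$. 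Also, given $Y_i$, $f(X)$ has mean $\mu+\rho aY_i$, not $\rho aY_i$. Keeping both corrections, the argument gives
\[
 \Delta_\rho(f)\ge(1-\rho^2)\bar H+\Phi(\mu)-(1-\rho^2a)H(\rho),\qquad
 \bar H=\tfrac12\bigl[H(\mu+\rho a)+H(\mu-\rho a)\bigr].
\]

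This is still not $M_\rho(a)$, and your phrase ``up to the mean correction'' hides a real step. You must show that $(1-\rho^2)[H(\rho a)-\bar H]\le\Phi(\mu)$ whenever $|\mu|\le 1-a$, which holds because the section means $\mu\pm a$ lie in $[-1,1]$. This is where the $\Phi(\mu)$ term in $\Delta_\rho$ must be spent, and it is not automatic. The positive even Taylor coefficients of $\Phi$ give
\[
 H(\rho a)-\bar H=\tfrac12\bigl[\Phi(\mu+\rho a)+\Phi(\mu-\rho a)\bigr]-\Phi(\rho a)\ge\Phi(\mu).
\]
So the required inequality can hold only because of the factor $1-\rho^2$ together with the constraint $|\mu|+a\le1$, and it needs its own scalar argument.

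The propagation part is likewise only a plan. The key claim, that $m'(s)\le0$ wherever $m(s)\ge0$ (your $s\,m'\le c\,m$ with $c\le0$), is asserted rather than derived. The hypothesis $m\ge 0$ is genuinely needed there: $m'(0)=(1-a^4)/12-a(1-a)/2>0$ for small $a$. So that sign check is the whole content of the propagation part and still has to be done. Your arguments for concavity of $M_\rho$ and for $M_\rho(1)=0$ are correct.
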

In terms of distance, concavity and $M_\rho(1)=0$ give
\begin{equation}\label{eq:local-distance-conversion}
 \frac{\Delta_\rho(f)}{\delta(f)}
 \ge\frac{M_\rho(1-2d)}d
 \qquad(0<\delta(f)\le d\le1/2).
\end{equation}
Thus one positive value of $M_\rho$ gives a linear gap throughout
a dictator neighborhood. Outside that neighborhood, an absolute
gap suffices since $\delta(f)\le1/2$.

The balanced one-coordinate estimate is due to Yu
\cite[Theorem~4.6 and Remark~4.7]{Yu2026Local}; the stated form
retains the mean correction needed for arbitrary sources.
Two quantitative bounds in the proofs of the cited local estimates
will be useful near the noiseless endpoint:
\begin{lemma}[Local thresholds
{\cite[proofs of Lemmas~3.6 and~8.1]{VuTranCK}}]
\label{ub:exponential-local}\label{rt:local-junction}
If $u\ge\log9$, $\rho=\tanh u$, and $d=e^{-u}$, then
\[
 \frac{M_\rho(1-2d)}d>\frac{1-\rho}{120}.
\]
For $0<d\le1/64$, set $u_d=\log(3/(2d))$ and $\rho_d=\tanh u_d$.
Then
\[
 \frac{M_{\rho_d}(1-2d)}d>\frac{1-\rho_d}{80}.
\]
\end{lemma}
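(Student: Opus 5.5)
The plan is to reduce both inequalities to explicit one-variable inequalities, and to verify those analytically for large $u$ (respectively small $d$) and by certified interval arithmetic on the remaining compact range. I would not try to deduce them from the general structure of $M_\rho$. The reason is that a naive second-order Taylor bound fails here. Write $a=1-2d$ and
\[
 M_\rho(a)=(1-\rho^2)\bigl[H(\rho a)-H(\rho)\bigr]-2\rho^2dH(\rho).
\]
The first-order term of $H(\rho a)-H(\rho)$ is $2\rho d\,\atanh\rho$. The Taylor remainder is governed by $1/(1-t^2)$ for $t\in[\rho a,\rho]$. In both regimes $d$ is much larger than $1-\rho\approx 2e^{-2u}$: in the first $d=e^{-u}$, and in the second $1-\rho_d\approx\tfrac89 d^2$. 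So this weight is of order $1/d$ over most of the interval, not of order $1/(1-\rho)$. After multiplying by $1-\rho^2$ and dividing by $d$, the remainder is of the same order $u(1-\rho)$ as the main term. The constants $1/120$ and $1/80$ therefore come from a genuine cancellation and have to be tracked exactly.

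\textbf{Exact parametrization.} Set $\rho=\tanh u$ and $\rho a=\tanh w$, so that $w=\atanh(\rho(1-2d))<u$. By the identity \eqref{eq:entropy-endpoint},
\[
 H(\tanh v)=v(1-\tanh v)+\log(1+e^{-2v}).
\]
This gives closed forms for $H(\rho a)$ and $H(\rho)$, and also $1-\rho^2=\operatorname{sech}^2u$. In the first statement $d=e^{-u}$. In the second, $u=\log(3/(2d))$, so $d=\tfrac32e^{-u}$. In each case
\[
 \Psi(u)=\frac{M_{\tanh u}(1-2d)}{d(1-\tanh u)}
\]
is an explicit elementary function of the single variable $u$. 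The claims become $\Psi(u)>1/120$ for $u\ge\log 9$, and $\Psi(u)>1/80$ for $u\ge\log 96$, which corresponds to $d\le1/64$.

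\textbf{Asymptotics for large $u$.} Expand $1-\rho a=(1-\rho)+2\rho d$, which is dominated by $2d$. Then
\[
 e^{-2w}=\frac{1-\rho a}{1+\rho a}\approx d \qquad\text{and}\qquad w\approx\tfrac12\log(1/d).
\]
Hence $H(\rho a)\approx 2d\,(w+\tfrac12)$, while $H(\rho)=O(ue^{-2u})$ is negligible against $d$. After multiplying by $1-\rho^2\approx 2(1-\rho)$ and dividing by $d(1-\rho)$, the leading behaviour of $\Psi$ is
\[
 \Psi(u)\approx 2\bigl(2w+1\bigr)-\frac{2\rho^2H(\rho)}{1-\rho}.
\]
By \eqref{eq:entropy-endpoint}, the subtracted term is at most about $2(u+1)$.

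With $d=e^{-u}$ we have $2w\approx u$. The leading terms $2(u+1)$ and $2(u+1)$ then cancel, so the limit is an $O(1)$ constant. I must compute it exactly from the next-order terms, namely the $\log(1+e^{-2v})$ parts and the difference between $1-\rho$ and $2e^{-2u}$. I expect this limit to exceed $1/120$ comfortably. With $d=\tfrac32e^{-u}$ the same computation applies, with the shifted constant $2w\approx u-\log\tfrac32$.

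I would make these expansions rigorous with explicit two-sided bounds on $e^{-2w}$ and on $\log(1+x)$, valid for, say, $u\ge U_0$. This yields $\Psi(u)\ge L-Ce^{-u}$ with explicit constants $L$ and $C$.

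\textbf{Compact range and main obstacle.} On the remaining interval, $[\log 9,U_0]$ or $[\log 96,U_0]$, I would verify $\Psi$ against the threshold by interval arithmetic. Concretely, I would subdivide the interval, bound $\Psi$ on each cell using monotone enclosures of $\tanh$, $\atanh$ and $\log$, and record the result as a certificate in the verification repository. The main obstacle is the cancellation described above. Both the asymptotic constant $L$ and the sign near the left endpoint depend on lower-order terms, so any crude inequality (for example $H(\rho)<(u+1)(1-\rho)$ used without care) can destroy the margin. I would therefore keep exact expressions as long as possible, and invoke the bounds of \eqref{eq:entropy-endpoint} only where the resulting loss is demonstrably below the gap between $L$ and $1/120$ (respectively $1/80$).
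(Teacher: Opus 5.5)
Your reduction is correct, and your plan would prove both claims: reduce to one explicit function of $u$, prove a rigorous tail bound, and run certified interval arithmetic on a compact range. It differs from the paper. The paper does not reprove the lemma; it imports the companion's analytic proofs, which bound $M_\rho(1-2d)/(d(1-\rho))$ below by explicit rationals, $71636/7971615$ and $65/4608$, by elementary estimates. Compare the analogous hand computation in Lemma~\ref{app:endpoint-margins}. Your route replaces delicate hand bookkeeping by machine checks, at the cost of software and a separate tail argument. The paper's route is self-contained and checkable by hand, but gives away most of the true margin.

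Your asymptotic heuristic is wrong, though, and it leads you to overstate how delicate the constants are. With $p=(1-\rho)/2$ one has $1-\rho^2=4p(1-p)$ and $(1-\rho(1-2d))/2=p+\rho d$. Hence
\[
 \frac{M_\rho(1-2d)}{d(1-\rho)}
 =\frac{2(1-p)}{d}\bigl[H_{\rm b}(p+\rho d)-H_{\rm b}(p)\bigr]
 -\rho^2\frac{H_{\rm b}(p)}{p}.
\]
The subtracted term $2\rho^2H(\rho)/(1-\rho)$ is $2u+1+o(1)$, not $2(u+1)$. In the identity preceding \eqref{eq:entropy-endpoint}, $\log(1+e^{-2u})/(1-\rho)\to\tfrac12$. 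The first term is $2\log(1/d)+2+o(1)$. So only the order-$u$ parts cancel, and there is no finer cancellation to resolve:
\begin{itemize}
\item for $d=e^{-u}$ the limit is $L=1$;
\item for $d=\tfrac32e^{-u}$ the limit is $L=1-2\log\tfrac32\approx0.19$.
\end{itemize}
The corrections have size $ue^{-u}$, for example the term $(2p/d)(2u+1)$ and the effect of the shift $p$ inside $H_{\rm b}(p+\rho d)$. Your tail bound should therefore read $L-Cue^{-u}$, not $L-Ce^{-u}$. Numerically the quantity is about $0.23$ at $u=\log9$ and about $0.092$ at $u=\log96$, well above $1/120$ and $1/80$.

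The shift does matter near $u=\log9$, where $p/d=9/82$. There, replacing $H_{\rm b}(p+\rho d)$ by $H_{\rm b}(d)$, as in Lemma~\ref{app:endpoint-margins}, costs about $0.34$, more than the true value. For the second claim, that cruder chain still leaves about $0.02>1/80$ at $d=1/64$.
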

The first proof bounds $M_\rho(1-2d)/(d(1-\rho))$ below by
$71636/7971615>1/120$; the second bounds it below by
$65/4608>1/80$. Concavity extends each estimate to smaller distances,
and Lemma~\ref{ub:local-propagation} extends the second to smaller
correlations. We write the inverse threshold as
\begin{equation}\label{eq:local-thresholds}
 d_-(u)=\frac32e^{-u},\qquad
 d\le d_-(u)\ \Longleftrightarrow\ u\le u_d.
\end{equation}

We also use information contraction,
\cite[Corollary~3.3]{VuTranCK}:
\begin{equation}\label{ub:information-contraction}
 A_{\eta r}(f)\le\eta^2 A_r(f),\qquad
 A_\rho(f)\le\rho^2H(\mu)
 \quad(0\le\eta,r,\rho\le1).
\end{equation}
For $\alpha\le69/100$ and $|\mu|\le61/100$, the quadratic
Fourier estimate in \cite[proofs of Proposition~3.15 and
Lemma~A.2]{VuTranCK} gives
\begin{equation}\label{st:low-energy-input}
 \Delta_\rho(f)>\frac3{200000}
 \quad\left(\frac35\le\rho\le\frac{457}{500}\right),\qquad
 A_{3/5}(f)<\frac9{50}-\frac1{1000}.
\end{equation}
The scalar margins yielding these bounds are recorded in
Appendix~\ref{app:low-margins}. Thus contraction and Fourier estimates
supply an absolute gap away from dictators, while
\eqref{eq:local-distance-conversion} handles their neighborhood.

\begin{corollary}[Stability for $\rho\le457/500$]\label{cor:low-stability}
For every $f:\{-1,1\}^n\to\{-1,1\}$ and $0<\rho\le457/500$,
\[
 \Phi(\rho)-A_\rho(f)\ge\frac{\rho^2}{50000}\,\delta(f).
\]
\end{corollary}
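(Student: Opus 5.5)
The plan is to split according to the dictator distance and the range of $\rho$, using three ingredients stated above: the local comparison of Lemma~\ref{ub:local}, its propagation to smaller correlations, and the absolute gap \eqref{st:low-energy-input} together with information contraction \eqref{ub:information-contraction}.

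\emph{Step 1: the anchor $\rho_0=3/5$.} Write $\alpha=1-2\delta(f)$ and $\mu=\E f$.

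In the near-dictator regime $\alpha>69/100$ (so $\delta<31/200$), I will use \eqref{eq:local-distance-conversion} with $d=31/200$. This needs a positive lower bound on $M_{3/5}(1-2d)/d=M_{3/5}(69/100)/(31/200)$, which is a finite scalar check. Lemma~\ref{ub:local-propagation} then transfers the bound to all $\rho\le3/5$ as $M_\rho(a)\ge(\rho^2/\rho_0^2)M_{\rho_0}(a)$. This gives a bound of the form $\Delta_\rho\ge c\rho^2\delta$ directly.

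For $3/5\le\rho\le457/500$ I need the same conclusion with $\rho^2$ replaced by a constant. Since $\rho\ge3/5$, it suffices to check that $M_\rho(69/100)>0$ on this interval. Concavity of $M_\rho$ and $M_\rho(1)=0$ reduce this to positivity at the single point $a=69/100$. If positivity fails near $\rho=457/500$, I would shrink $d$ and let the far regime absorb the difference.

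\emph{Step 2: the far regime.} Here $\alpha\le69/100$.

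If in addition $|\mu|\le61/100$, then \eqref{st:low-energy-input} gives $\Delta_\rho>3/200000\ge\rho^2\delta/50000\cdot(\dots)$ on $[3/5,457/500]$, since $\delta\le1/2$. For $\rho<3/5$, set $\eta=\rho/(3/5)$ and use contraction: $A_\rho\le\eta^2A_{3/5}<\eta^2(9/50-1/1000)$. Meanwhile $\Phi(\rho)\ge\eta^2\Phi(3/5)$ by the series \eqref{ub:entropy-series}, and $\Phi(3/5)\approx0.1805>9/50$. Subtracting, I get $\Delta_\rho\ge\eta^2(\Phi(3/5)-9/50+1/1000)\gtrsim\rho^2\cdot 10^{-3}$, which is far more than needed.

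If $|\mu|>61/100$, I use the second part of \eqref{ub:information-contraction}: $A_\rho\le\rho^2H(\mu)\le\rho^2H(0.61)$. Also $\Phi(\rho)\ge\rho^2/2$ from \eqref{ub:entropy-series}. Since $H(0.61)\approx0.47<1/2$, this gives $\Delta_\rho\ge\rho^2(1/2-H(0.61))$, an absolute multiple of $\rho^2$.

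\emph{Step 3: the overlapping regime.} A function can have $\alpha>69/100$ and $|\mu|>61/100$ at the same time. This is harmless, because Step 1 did not use the mean, so the regimes cover all cases without further work.

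\emph{Main obstacle.} The hard part is the numerical verification in Step 1: positivity, and an explicit lower bound, for $M_\rho(69/100)$ uniformly over $3/5\le\rho\le457/500$. That bound must exceed $\rho^2\cdot31/200/50000$. I would prove it by monotonicity or by interval evaluation of the explicit formula \eqref{eq:local-comparison}, in the same style as the margins recorded in Appendix~\ref{app:low-margins}. The remaining constants clearly leave room under $1/50000$.
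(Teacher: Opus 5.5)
Your case split is the one the paper uses: the local comparison with $d=31/200$ when $\alpha\ge69/100$; the bound $A_\rho\le\rho^2H(\mu)$ when $|\mu|\ge61/100$; otherwise \eqref{st:low-energy-input} on $[3/5,457/500]$ and contraction below $3/5$. The contraction step for $\rho<3/5$, however, rests on a false inequality. Write $\Phi(t)/t^2=\sum_{k\ge1}t^{2k-2}/(2k(2k-1))$; this is increasing in $|t|$. So for $\eta=5\rho/3\le1$ one has $\Phi(\rho)\le\eta^2\Phi(3/5)$, not $\ge$. For instance, as $\rho\downarrow0$, $\Phi(\rho)\sim\rho^2/2$ while $\eta^2\Phi(3/5)\approx0.535\rho^2$. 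Your value $\Phi(3/5)\approx0.1805$ is also wrong; it is about $0.1927$. Hence $\eta^2\bigl(\Phi(3/5)-A_{3/5}\bigr)$ is not a lower bound for $\Delta_\rho$, and that step fails as written.

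The repair is short. Use $\Phi(\rho)\ge\rho^2/2=\eta^2\cdot\frac{9}{50}$ instead. This is why the anchor is stated as $A_{3/5}<\frac9{50}-\frac1{1000}$, with $\frac9{50}=\frac12\bigl(\frac35\bigr)^2$. It gives $\Delta_\rho\ge\eta^2/1000=\rho^2/360$, which is the paper's bound.

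Separately, what you call the ``main obstacle'' is not an obstacle. Apply Lemma~\ref{ub:local-propagation} with $\rho_0=457/500$ rather than $3/5$. The single check $M_{457/500}(69/100)>(457/500)^2/1000$ then gives $M_\rho(69/100)>\rho^2/1000$ for every $0<\rho\le457/500$. That covers $[3/5,457/500]$ with no interval evaluation, and the anchor at $3/5$ becomes unnecessary for the local branch. Your estimate $H(0.61)\approx0.47$ is also inaccurate; the value is about $0.493$. The large-mean case is unaffected, since $H(61/100)<1/2-1/1000$ still holds.
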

\begin{proof}
The dictator case is immediate, so let $0<\delta=\delta(f)\le1/2$.
If $\alpha\ge69/100$, the local endpoint bound
$M_{457/500}(69/100)>(457/500)^2/1000$, recorded in
Appendix~\ref{app:low-margins}, propagates by
Lemma~\ref{ub:local-propagation} to
$M_\rho(69/100)>\rho^2/1000$.
Apply \eqref{eq:local-distance-conversion} with $d=31/200$.

If $|\mu|\ge61/100$, the entropy bound in
\eqref{ub:information-contraction} gives
$H(\mu)<1/2-1/1000$, so the deficit exceeds $\rho^2/1000$.
For the remaining sources with $\rho\ge3/5$, use
\eqref{st:low-energy-input}. Contraction of its anchor bound gives
a deficit at least $\rho^2/360$ for $\rho\le3/5$.
Each bound implies the assertion.
\end{proof}

\subsection{Intermediate correlations}\label{sec:cover}

Throughout this subsection, $f$ is increasing.
On the compact interval $457/500\le\rho\le49/50$, a positive
absolute gap away from dictators is enough. We extract this gap from
the source cover proved in \cite[Theorems~5.6 and~5.7]{VuTranCK}.
The five correlation bands and their rational parameters $M,A_*$
are recorded in Table~\ref{tab:middle-parameters}.
Each function is assigned to a comparison with a positive margin.
The local comparison applies when $\alpha\ge A_*$; the cubic-energy
comparison applies when
$\mathcal W_\rho=W_1+\rho^2W_2/(1+\rho)\le M$.
A third comparison uses a fixed profile $Q\le P_f$ with a sufficiently
short clock. In the last band, a two-coordinate moment comparison
also gives $\Delta_\rho(f)>(1-\rho^2)/1000$ on its assigned cells.
The source partitions and profile data are included with this paper.

The following strict margins are supplied by the same comparisons.
Writing $[\rho_-,\rho_+]$ for a band, the local and clock checks give
\begin{equation}\label{eq:middle-local-clock-slack}
 M_{\rho_+}(A_*)>\frac1{2000000},\qquad
 T_Q\bigl(H(\rho_-)/\rho_-\bigr)
 <\log(1/\rho_-)-\frac1{40000}.
\end{equation}
For the cubic bound, put
\begin{equation}\label{eq:middle-cubic-parameters}
 C_\rho=1+\rho+\rho^2-\rho^2(1+\rho)M,\qquad
 \kappa_\rho=\frac{H(\rho)}{(1-\rho)C_\rho}.
\end{equation}
Throughout the band, $C_\rho>0$,
$\kappa_\rho(1-\rho^3)<30/67$, and
\begin{equation}\label{eq:middle-entropy-slack}
 H(t)\ge\kappa_\rho(1-t)(1+t-\rho^3)+\frac{1-t}{8000}
 \qquad \text{for } 0\le t\le 1.
\end{equation}
The endpoint tests establish this strengthened entropy bound;
the propagation argument in \cite[Lemma~5.3]{VuTranCK} preserves
its added constant after division by $1-t$.
When the profile varies with the source coefficients, convexity
of its clock in the squared coefficients preserves the margin
between boundary checks. The script \path{check_stability_constants.py}
recomputes all these margins from the input data, as described in
Appendix~\ref{app:middle-margins}.

\begin{proposition}\label{st:middle}
For every increasing $f:\{-1,1\}^n\to \{-1,1\}$ and $457/500\le\rho\le49/50$, 
\[
\Delta_\rho(f)\ge\delta(f)/2000000.
\]
\end{proposition}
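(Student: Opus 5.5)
The plan is to fix a band $[\rho_-,\rho_+]$ of Table~\ref{tab:middle-parameters} containing $\rho$, and to use the source cover of \cite[Theorems~5.6 and~5.7]{VuTranCK}. That cover assigns the increasing $f$ to one of four comparisons: local ($\alpha\ge A_*$), cubic energy ($\mathcal W_\rho\le M$), clock (a fixed profile $Q\le P_f$), or, in the last band only, the two-coordinate moment comparison. In each case I would prove either a linear bound $\Delta_\rho(f)\ge\delta(f)/2000000$ directly, or an absolute gap $\Delta_\rho(f)\ge 1/1000000$. An absolute gap suffices because $\delta(f)\le1/2$.

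\emph{Local cell.} If $\alpha\ge A_*$, then $\delta(f)\le d:=(1-A_*)/2$. Lemma~\ref{ub:local} gives $\Delta_\rho(f)\ge M_\rho(\alpha)$, and by concavity of $M_\rho$ together with $M_\rho(1)=0$ we get \eqref{eq:local-distance-conversion}: $\Delta_\rho(f)/\delta(f)\ge M_\rho(A_*)/d\ge M_\rho(A_*)$ since $d\le1/2$. To pass from $\rho_+$ to $\rho$, use Lemma~\ref{ub:local-propagation}: $M_\rho(A_*)\ge(\rho/\rho_+)^2M_{\rho_+}(A_*)$. Because $\rho/\rho_+\ge(457/500)/(49/50)>0.93$, the margin $1/2000000$ in \eqref{eq:middle-local-clock-slack} survives up to a factor $0.87$. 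It still dominates $\delta/2000000$, since $M/d\ge 2M$ leaves room to absorb that factor. This is a routine check.

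\emph{Clock cell.} Write $\epsilon$ for the slack, so that $T_Q(H(\rho_-)/\rho_-)\le(1-\epsilon)\log(1/\rho_-)$ with $\epsilon\ge\frac{1}{40000\log(1/\rho_-)}$. Theorem~\ref{st:clock} then gives $E_\rho(f)\ge H(\rho)\rho^{-\epsilon}$. By \eqref{ub:gap-definitions} and $\Phi(\mu)\ge0$,
\[
\Delta_\rho(f)\ge H(\rho)(\rho^{-\epsilon}-1)\ge H(\rho)\,\epsilon\log(1/\rho).
\]
On the band, $\log(1/\rho)/\log(1/\rho_-)$ is bounded below by roughly the band's ratio. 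Together with $H(49/50)\approx0.098$, this gives an absolute gap of order $10^{-6}$. I expect this to be the tightest constant check, and I would verify it numerically per band. If a band is too wide for it, I would split that band.

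\emph{Cubic cell.} Here I would rerun the cubic-energy argument of \cite[Lemma~5.3]{VuTranCK} with the strengthened entropy bound \eqref{eq:middle-entropy-slack}. Averaging $H(g)$ against the bound $\kappa_\rho(1-t)(1+t-\rho^3)$ reproduces the original comparison $E_\rho(f)\ge H(\rho)$, and the extra term $\E(1-|g|)/8000$ becomes the surplus. The step I expect to be the main obstacle is bounding $\E(1-|g|)$ below by an absolute constant. To do this, use $1-|g|\ge1-g^2$, so $\E(1-g^2)\ge1-\mu^2-\rho^2\mathcal W$-type Fourier terms. By the cubic hypothesis $\mathcal W_\rho\le M$ together with the condition $\kappa_\rho(1-\rho^3)<30/67$, this is bounded away from zero; an unbalanced mean is paid separately by $\Phi(\mu)$ in \eqref{ub:gap-definitions}. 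The result is $\Delta_\rho(f)$ bounded below by an absolute constant exceeding $10^{-6}$.

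\emph{Last band.} On its assigned cells the moment comparison already gives $\Delta_\rho(f)>(1-\rho^2)/1000\ge(1-0.98^2)/1000>4\cdot10^{-5}$, which is enough.

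Collecting the four cases over all bands proves the proposition.
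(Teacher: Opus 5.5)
Your decomposition is the paper's own. It uses the same four cells of the cover: the local cell via Lemma~\ref{ub:local-propagation} and \eqref{eq:local-distance-conversion}, the clock cell via Theorem~\ref{st:clock} with $\epsilon$ read off from \eqref{eq:middle-local-clock-slack}, and the moment gap in the last band. Those three cells are fine. The cubic-energy cell, however, does not go through as you wrote it.

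First, $1-|g|\ge1-g^2$ is backwards. Since $g^2\le|g|$ on $[-1,1]$, the correct bound is $1-|g|\ge(1-g^2)/2$.

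Second, averaging \eqref{eq:middle-entropy-slack} at $t=|g|$ does not reproduce $E_\rho(f)\ge H(\rho)$. The Fourier input behind the cubic comparison is
\[
 \E[(1-|g|)(1+|g|-\rho^3)]\ge(1-\rho)C_\rho-(1-\rho^3)\mu^2,
\]
so what you actually get is
\[
 E_\rho(f)\ge H(\rho)-\kappa_\rho(1-\rho^3)\mu^2+\tfrac{1}{8000}\E(1-|g|).
\]
The bound $E_\rho(f)\ge H(\rho)$ is false for unbalanced sources: a constant source has $E_\rho(f)=0$ and $\mathcal W_\rho=0\le M$.

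Third, $\E(1-|g|)$ cannot be bounded below by an absolute constant, since it tends to $0$ as $|\mu|\to1$. The hypothesis $\mathcal W_\rho\le M$ does not help here; it enters only through the displayed Fourier input. So the step you flag as the main obstacle has the wrong target, and you attach the condition $\kappa_\rho(1-\rho^3)<30/67$ to the wrong step. The mean term $\Phi(\mu)$ is not free: most of it is consumed by the loss $\kappa_\rho(1-\rho^3)\mu^2$.

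The repair is short and is what the paper does. The condition $\kappa_\rho(1-\rho^3)<30/67$ lets $\Phi(\mu)\ge\mu^2/2$ pay that loss with $7\mu^2/134$ to spare. Separately, $\Var(g)\le\rho^2(1-\mu^2)$ gives
\[
 \E(1-|g|)\ge\tfrac12(1-\E g^2)\ge\tfrac12(1-\rho^2)(1-\mu^2).
\]
Together these give, uniformly in $\mu$,
\[
 \Delta_\rho(f)\ge\frac{1-\rho^2}{16000}(1-\mu^2)+\frac7{134}\mu^2
 \ge\frac{1-\rho^2}{16000}>\frac1{4000000}.
\]

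Because $\delta(f)\le1/2$, the absolute gap you need is only $1/4000000$, not $10^{-6}$. With that target the clock cell needs no per-band numerics. Splitting a band would not be available anyway, since the clock slack is certified only at each band's $\rho_-$ with its fixed profile. Indeed, $\log(1/\rho_-)<1/10$ permits $\epsilon=1/4000$, and then $H(\rho)>1/20$ and $\log(1/\rho)>1/50$ give $H(\rho)(\rho^{-\epsilon}-1)>1/4000000$.

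A harmless slip: $(1-0.98^2)/1000=3.96\cdot10^{-5}$, which is not larger than $4\cdot10^{-5}$.
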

\begin{proof}
The dictator case is immediate, so assume $\delta(f)>0$ and put
$g=T_\rho f$.
On the cubic-energy branch, the Fourier estimate underlying the cover is
\begin{equation}\label{st:cubic-energy-input}
 \E[(1-|g|)(1+|g|-\rho^3)]
 \ge(1-\rho)C_\rho-(1-\rho^3)\mu^2.
\end{equation}
Average \eqref{eq:middle-entropy-slack} with $t=|g|$ and use
$\Phi(\mu)\ge\mu^2/2$. Since
$\E(1-|g|)\ge(1-\E g^2)/2\ge(1-\rho^2)(1-\mu^2)/2$, this gives
\[
 \Delta_\rho(f)\ge
 \frac{1-\rho^2}{16000}(1-\mu^2)+\frac7{134}\mu^2
 >\frac1{4000000}.
\]
Thus the scalar slack yields an absolute gap; the mean correction
also covers sources with small variance.

On the clock branch, $\log(1/\rho_-)<1/10$, so
\eqref{eq:middle-local-clock-slack} permits $\epsilon=1/4000$ in
Theorem~\ref{thm:clock-propagation}. It follows that
\[
 \Delta_\rho(f)\ge H(\rho)(\rho^{-1/4000}-1)
 >\frac1{4000000},
\]
because $H(\rho)>1/20$ and $\log(1/\rho)>1/50$.
The moment branch has a larger absolute gap.
Each of these bounds implies the required estimate since
$\delta(f)\le1/2$.

On the local branch, Lemma~\ref{ub:local-propagation} and
\eqref{eq:middle-local-clock-slack} give
$M_\rho(A_*)>(\rho/\rho_+)^2/2000000>1/4000000$.
The distance conversion in \eqref{eq:local-distance-conversion},
with $d=(1-A_*)/2\le1/2$, completes the proof.
\end{proof}

\section{Production bounds near the noiseless endpoint}
\label{sec:radius-tail}

We now establish the production estimates used in the entropy-barrier
argument. A spectral comparison handles functions far from dictators;
near a dictator, a marked-coordinate comparison preserves its profile
and uses the variance in the remaining coordinates. Both estimates
use the same entropy budget, and together they yield the stability
bound at the end of this section. Throughout this section, let
$u>0$ and write $\rho=\tanh u$ and $g=T_\rho f$.

\subsection{The spectral energy bound and its scalar cost}\label{sec:spectral-input}

The quadratic bound in \eqref{ub:entropy-series} leads to a
nonnegative remainder $\Psi$ and a useful variance identity:
\begin{equation}\label{eq:entropy-remainder}
 \begin{gathered}
 \Psi(t)=(\log2)t^2-\Phi(t),\qquad \mathcal J=\E\Psi(g),\\
 (\log2)\Var(g)=\Phi(\rho)-\Delta_\rho(f)-\Psi(\mu)+\mathcal J.
 \end{gathered}
\end{equation}
The entropy series gives $0\le\Psi(t)\le(\log2-1/2)t^2$.
Also $\Psi(t)=H(t)-(1-t^2)\log2\le H(t)$.
The marked-coordinate comparison will use $\mathcal J\ge0$ to bound
the variance from below. The spectral comparison retains this remainder:
the scalar estimate in \cite[Lemma~6.1]{VuTranCK} gives
$\|\arcsin g-(\pi/2)g\|_2^2\le C_*\mathcal J$, where
$C_*=(\pi/2-1)^2/(\log2-1/2)$.
Thus $\mathcal J$ controls the error in the linear approximation
to the arcsine transform. Define $\psi(0)=0$ and
\begin{equation}\label{eq:retained-variance}
 \psi(a)=\min\left\{\frac{a+a^2}{2},
       \frac{2a(1-a/4)}{\log(4/a)}\right\}\quad \text{for } 0<a\le 1.
\end{equation}
The retained-source estimate is
$\Var(P_Kf)\le\sum_{i\in K}\psi(\widehat f(i))$ for increasing $f$
\cite[Theorem~7.1]{VuTranCK}.

The parameter $t$ sets the Fourier weights, while $\gamma$ penalizes
the error in replacing $\arcsin g$ by $(\pi/2)g$. Choose $t\ge2$
and $\gamma>\max\{0,e^{2t-3}/2-t\}$, and define
\begin{equation}\label{eq:spectral-parameters}
 \begin{gathered}
 k(s)=\frac{\gamma s}{\gamma+s}\ (s>-\gamma),\qquad
 \Lambda=\frac{\pi^2}{4}k(t),\qquad \Gamma=\gamma C_*,\\
 \Delta_2=[k(t)-k(2)]\rho^4,\quad
 \Delta_1=[k(t)-k(1)]\rho^2,\quad
 \Delta_o=[k(t)-k(t-e^{2t-3}/2)]\rho^2.
 \end{gathered}
\end{equation}
The three costs $\Delta_2,\Delta_1,\Delta_o$ are nonnegative,
with $\Delta_1\ge\Delta_2$.
The next estimate separates the spectral baseline from the costs
of retaining or discarding coordinates. We include the Fourier
calculation from \cite[Section~7.3, Lemma~7.4 and
Proposition~7.5]{VuTranCK} to identify these costs before applying
the profile budget.
\begin{lemma}[Spectral energy]\label{st:spectral-energy}
Let $f:\{-1,1\}^n\to \{-1,1\}$ be an increasing nonconstant Boolean function, and let $K\subseteq[n]$. Then
\[
 \operatorname{Dir}(\arcsin g)\ge
 \Lambda\Var(g)-\Gamma\mathcal J-\frac{\pi^2}{4}\left\{
 \sum_{i\in K}[\Delta_2\psi(\widehat f(i))
       +(\Delta_1-\Delta_2)\widehat f(i)^2]
 +\Delta_o\sum_{i\notin K}\widehat f(i)^2\right\}.
\]
\end{lemma}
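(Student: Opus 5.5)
The plan is to compare $h=\arcsin g$ with its linearization $p=(\pi/2)g$, and to put the error $e=h-p$ into the $\Gamma\mathcal J$ term using the scalar bound $\|e\|_2^2\le C_*\mathcal J$.

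\textbf{Step 1 (replacing $h$ by $p$).} For each Fourier level $s\ge1$ and real numbers $x,y$, minimizing over $y$ gives
\[
 s(x+y)^2+\gamma y^2\ge\frac{s\gamma}{s+\gamma}x^2=k(s)x^2 .
\]
I apply this coefficientwise with $x=\widehat p(S)$ and $y=\widehat e(S)$, and sum over $S$. This yields
\[
 \operatorname{Dir}(h)\ge\frac{\pi^2}{4}\sum_{S}k(|S|)\widehat g(S)^2-\gamma\|e\|_2^2
 \ge\frac{\pi^2}{4}\sum_{S}k(|S|)\widehat g(S)^2-\Gamma\mathcal J .
\]
The baseline $\Lambda\Var(g)$ is $\frac{\pi^2}{4}\sum_{S\ne\varnothing}k(t)\widehat g(S)^2$. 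It therefore remains to bound the shortfall $\sum_{S\neq\varnothing}[k(t)-k(|S|)]\widehat g(S)^2$, where $\widehat g(S)=\rho^{|S|}\widehat f(S)$. Since $k$ is increasing, only sets with $|S|<t$ contribute positively.

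\textbf{Step 2 (sets inside $K$).} I bound these terms level by level.
\begin{itemize}
\item Each singleton $\{i\}$ with $i\in K$ costs $[k(t)-k(1)]\rho^2\widehat f(i)^2=\Delta_1\widehat f(i)^2$.
\item Every $S\subseteq K$ with $|S|\ge2$ costs at most $[k(t)-k(2)]\rho^4\widehat f(S)^2$, since $k$ is increasing and $\rho\le1$.
\item These higher sets have total mass $\Var(P_Kf)-\sum_{i\in K}\widehat f(i)^2$. The retained-source estimate \cite[Theorem~7.1]{VuTranCK} bounds this by $\sum_{i\in K}[\psi(\widehat f(i))-\widehat f(i)^2]$.
\end{itemize}
Adding the two costs gives exactly $\Delta_2\psi+(\Delta_1-\Delta_2)\widehat f(i)^2$ for each $i\in K$.

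\textbf{Step 3 (sets not inside $K$).} I use the martingale energy bound, Lemma~\ref{as:strengthened-split}, with $\lambda=2t-2$, so that $1+\lambda/2=t$ and $e^{\lambda-1}/2=e^{2t-3}/2$.
\begin{itemize}
\item Because $f$ is increasing, $\partial_ig=\rho T_\rho\partial_if\ge0$. Hence $\E|\partial_ig|=\rho\widehat f(i)$ and $b_K=\rho^2\sum_{i\notin K}\widehat f(i)^2$.
\item Applied to $g$ with $V=\|g-P_Kg\|_2^2$, the lemma gives average level at least $t-cb_K/V$ on the sets $S\not\subseteq K$, where $c=e^{2t-3}/2$.
\item One has $b_K\le V$, since these singleton terms are part of $V$.
\item The condition $\gamma>c-t$ keeps $t-c$ inside the domain of $k$.
\item Concavity of $k$ on $[t-c,t]$ then gives the chord bound $k(t-cx)\ge k(t)-x[k(t)-k(t-c)]$ for $x\in[0,1]$. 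Multiplied by $V$, this produces exactly $\Delta_o\sum_{i\notin K}\widehat f(i)^2$.
\end{itemize}

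\textbf{Main obstacle.} The obstacle is the order of operations in Step 3. For concave $k$, Jensen bounds $\sum k(|S|)w_S$ from above, not from below, by $Vk(\text{average level})$. So Step 1 and Step 3 must be merged rather than run in sequence. My first attempt is to perform the Young minimization in aggregate on the block $S\not\subseteq K$, keeping the Dirichlet energy itself. That is, for $q=P_{K}^{\perp}$ applied to each function, I use
\[
 \operatorname{Dir}(q h)\ge\min_{y}\bigl[\operatorname{Dir}(q p+y)+\gamma\|y\|^2\bigr].
\]
Then the martingale bound is applied to $\operatorname{Dir}(qp)$ before optimizing. The optimization over a block with energy $E\ge mV$ gives $\gamma mV/(\gamma+m)$ by the same two-variable computation, namely $k(m)V$. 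This avoids Jensen altogether. The restriction $m\ge t-c>-\gamma$ and the monotonicity of $k$ then finish Step 3. If this aggregation fails, the fallback is to apply Lemma~\ref{as:strengthened-split} directly to $h$, which is also increasing, and to absorb the difference $\E\partial_ih-\rho\widehat f(i)$ into the $\mathcal J$ term.
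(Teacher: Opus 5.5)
\textbf{Gap in Step 3: the block minimization does not give $k(m)V$.} Your Steps 1 and 2 are correct. You also diagnose the obstacle correctly: per-mode Young with the true levels $|S|$, done before the martingale bound, loses by concavity of $k$. But the proposed repair fails for the same reason. The minimization $\min_y\bigl[\operatorname{Dir}(qp+y)+\gamma\|y\|_2^2\bigr]$ decouples mode by mode. It equals exactly $\sum_{S\not\subseteq K}k(|S|)\widehat{qp}(S)^2$, which is the quantity Step 1 already produced. So knowing $\operatorname{Dir}(qp)\ge mV$ cannot raise it to $k(m)V$. For example, if $qp$ carries half its mass on a discarded singleton and half at level $L>1$, then $m=(1+L)/2$, and the minimum is $\frac V2[k(1)+k(L)]<k(m)V$. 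Your ``two-variable computation'' is valid only when $y$ is restricted to multiples of $qp$. That bounds the minimum from above, not from below. Nor can you apply Lemma~\ref{as:strengthened-split} to $qp+y$ for arbitrary $y$: its term $b_K$ involves $\E|\partial_i(qp+y)|$, which you control only for an increasing function.

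\textbf{The fix: apply the martingale bound to $h$ first.} The missing idea is to apply the martingale bound to $h=\arcsin g$ itself, before any comparison with $p$. Then do Young's inequality per mode with the resulting pseudo-weights instead of the true levels. Your fallback points this way, and it is what the paper does. Since $h$ is increasing, $\E|\partial_i h|=\widehat h(i)$. So Lemma~\ref{as:strengthened-split} with parameter $2t-2$, together with $\operatorname{Dir}(P_Kh)$, gives the diagonal bound $\operatorname{Dir}(h)\ge\sum_S\lambda_S\widehat h(S)^2$, where
\begin{itemize}
\item $\lambda_S=|S|$ for $S\subseteq K$;
\item $\lambda_{\{i\}}=t-e^{2t-3}/2$ for $i\notin K$;
\item $\lambda_S=t$ for every other mode.
\end{itemize}
The inequality $\lambda x^2+\gamma(x-y)^2\ge k(\lambda)y^2$ holds for every $\lambda>-\gamma$, not only for true levels. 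Take $x=\widehat h(S)$ and $y=(\pi/2)\widehat g(S)$ to get
\[
\operatorname{Dir}(h)\ge\tfrac{\pi^2}{4}\textstyle\sum_S k(\lambda_S)\widehat g(S)^2-\Gamma\mathcal J,
\]
with no Jensen step. Nothing extra needs to be absorbed into $\mathcal J$. The possibly negative weight on $\widehat h(i)^2$ for discarded singletons is exactly what the hypothesis $\gamma>e^{2t-3}/2-t$ permits. It yields directly the cost $[k(t)-k(t-e^{2t-3}/2)]\rho^2\widehat f(i)^2=\Delta_o\widehat f(i)^2$. Modes outside $K$ of size at least two carry weight $t$ and cost nothing. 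Your Step 2 then handles the modes inside $K$ unchanged.
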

\begin{proof}
Put $h=\arcsin g$. Apply Lemma~\ref{as:strengthened-split} with
parameter $2t-2$ and add $\operatorname{Dir}(P_Kh)$.
Since $h$ is increasing, $\E|\partial_i h|=\widehat h(i)$.
The resulting Fourier weights are $\lambda_\varnothing=0$,
$\lambda_S=|S|$ for nonempty $S\subseteq K$,
$\lambda_{\{i\}}=t-e^{2t-3}/2$ for $i\notin K$, and $\lambda_S=t$
for all other modes. Each weight exceeds $-\gamma$.
Completing the square in each mode gives
\[
 \lambda_S x^2+\gamma(x-y)^2\ge k(\lambda_S)y^2.
\]
Use $x=\widehat h(S)$, $y=(\pi/2)\widehat g(S)$, and the scalar
error bound above to obtain
\[
 \operatorname{Dir}(h)\ge
 \frac{\pi^2}{4}\sum_S k(\lambda_S)\rho^{2|S|}\widehat f(S)^2
 -\gamma C_*\mathcal J.
\]
Relative to $k(t)\Var(g)$, retained and discarded singletons
cost $\Delta_1$ and $\Delta_o$, respectively. Every retained mode
of degree $j\ge2$ costs at most $\Delta_2$, because
$[k(t)-k(j)]\rho^{2j}\le\Delta_2$; all other modes have no cost.
The retained cost is therefore at most
$\Delta_2\Var(P_Kf)+(\Delta_1-\Delta_2)\sum_{i\in K}\widehat f(i)^2$.
Apply the retained-source variance bound to conclude.
\end{proof}

For a height threshold $0<z<u$, let $\beta=F(u)/F(z)$,
$K=\{i:\widehat f(i)\ge\beta\}$, and $\tau=\Delta_o\beta/z$.
Write $\ell_i=\ell_u(\widehat f(i))$. The decrease of
$\ell_u(a)/a$ gives $\widehat f(i)^2\le(\beta/z)\widehat f(i)\ell_i$
outside $K$, so the discarded mass is paid from the production budget.
For $0<a\le1$, the retained cost per unit of that budget is
\begin{equation}\label{eq:spectral-profile}
 \mathcal Q_u(a)=\frac{\rho[\ell_u(a)-R(\ell_u(a))]
       +(\pi^2/4)\Delta_2\psi(a)/a
       +(\pi^2/4)(\Delta_1-\Delta_2)a}{\ell_u(a)}.
\end{equation}
For an influence cutoff $0<\bar\alpha\le1$, define
\begin{equation}\label{st:spectral-cost}
 \begin{gathered}
 b=\max\left\{\rho+\frac{\pi^2}{4}\tau,
              \sup_{\beta\le a\le\bar\alpha}\mathcal Q_u(a)\right\},\\
 N=\Lambda\rho^2-\Gamma\Psi(\rho),\qquad m=N/b-u.
 \end{gathered}
\end{equation}
An empty supremum imposes no condition. The quantity $m$ measures
the production surplus above the dictator value $u$, while
$\Gamma/b$ will measure the loss per unit of deficit.

The scalar maximization in \eqref{st:spectral-cost} reduces to
height endpoints by the convexity estimates stated in
Appendix~\ref{app:scalar-cost}.

\subsection{Spectral comparison away from dictators}

Retain the parameters and scalar costs from
Section~\ref{sec:spectral-input}. The profile budget converts the
spectral energy estimate into a lower bound for production.
A transfer condition then compares that bound with the dictator
value while retaining the information deficit.

\begin{proposition}\label{st:robust-spectral}
Let $f:\{-1,1\}^n\to \{-1,1\}$ be an increasing nonconstant Boolean function, with
$0<\rho<1$ and $\max_i\widehat f(i)\le\bar\alpha$.
If $\vartheta\ge1$ and $E_\rho(f)\le\vartheta H(\rho)$,
then, for $g=T_\rho f$,
\[
 \vartheta\frac{D(g)}\rho
 \ge\frac{\Lambda\Var(g)-\Gamma\mathcal J}{b}.
\]
If the transfer condition
$\rho^2(\Gamma\log2-\Lambda)\ge\Gamma(\log2-1/2)$ holds,
this lower bound is at least $u+m-(\Gamma/b)\Delta_\rho(f)$.
In particular, $\vartheta=1+[\Delta_\rho(f)]_+/H(\rho)$ is admissible.
\end{proposition}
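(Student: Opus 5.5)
The plan is to combine the arcsine form of Lemma~\ref{st:dilation} with the spectral energy bound of Lemma~\ref{st:spectral-energy}. The spectral costs are paid out of the total production budget $S=\sum_i\widehat f(i)\ell_i$, where $\ell_i=\ell_u(\widehat f(i))$ and $K=\{i:\widehat f(i)\ge\beta\}$ is the set from Section~\ref{sec:spectral-input}. Since $f$ is increasing, every $\widehat f(i)\ge0$. Coordinates with $\widehat f(i)=0$ contribute nothing anywhere, and $K$ contains none of them because $\beta>0$.

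\emph{First inequality.} By Lemma~\ref{st:dilation} with this $K$,
\[
\vartheta D(g)\ge\operatorname{Dir}(\arcsin g)+\rho\sum_{i\in K}\widehat f(i)R(\ell_i),
\qquad \vartheta D(g)/\rho\ge S.
\]
Insert Lemma~\ref{st:spectral-energy} into the first bound and move the cost terms to the other side. This gives
\[
\Lambda\Var(g)-\Gamma\mathcal J
\le\vartheta D(g)-\rho\sum_{i\in K}\widehat f(i)R(\ell_i)+\tfrac{\pi^2}{4}\{\cdots\}.
\]
Split $\vartheta D(g)=\rho(\vartheta D(g)/\rho-S)+\rho S$, and regroup coordinate by coordinate.

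For $i\in K$ the terms collect to $\widehat f(i)\ell_i\,\mathcal Q_u(\widehat f(i))$. Here $\beta\le\widehat f(i)\le\bar\alpha$, so this is at most $b\,\widehat f(i)\ell_i$.

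For $i\notin K$ the discarded-mass estimate $\widehat f(i)^2\le(\beta/z)\widehat f(i)\ell_i$ applies. It turns $\rho\widehat f(i)\ell_i+(\pi^2/4)\Delta_o\widehat f(i)^2$ into at most $(\rho+\tfrac{\pi^2}{4}\tau)\widehat f(i)\ell_i\le b\,\widehat f(i)\ell_i$.

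Altogether this yields
\[
\Lambda\Var(g)-\Gamma\mathcal J\le\rho(\vartheta D/\rho-S)+bS.
\]
Since $b\ge\rho$ and $\vartheta D/\rho-S\ge0$, the right side is at most $b\,\vartheta D(g)/\rho$. That is the first claim. The main bookkeeping obstacle is checking that the arcsine term $-\rho\widehat f(i)R(\ell_i)$ and the base term $\rho\widehat f(i)\ell_i$ recombine exactly into the numerator of $\mathcal Q_u$. They do, since $\rho\ell_i-\rho R(\ell_i)$ appears there.

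\emph{Transfer step.} Use the variance identity \eqref{eq:entropy-remainder} to eliminate $\mathcal J$ rather than $\Var(g)$:
\[
\Lambda\Var(g)-\Gamma\mathcal J=(\Lambda-\Gamma\log2)\Var(g)+\Gamma\Phi(\rho)-\Gamma\Delta_\rho(f)-\Gamma\Psi(\mu).
\]
The transfer condition makes $\Gamma\log2-\Lambda>0$. We may therefore apply $\Var(g)\le\rho^2(1-\mu^2)$ together with $\Phi(\rho)=(\log2)\rho^2-\Psi(\rho)$. The result is
\[
\Lambda\rho^2-\Gamma\Psi(\rho)-\Gamma\Delta_\rho(f)
+\mu^2\rho^2(\Gamma\log2-\Lambda)-\Gamma\Psi(\mu).
\]
By $\Psi(\mu)\le(\log2-1/2)\mu^2$, the last two terms are at least $\mu^2[\rho^2(\Gamma\log2-\Lambda)-\Gamma(\log2-1/2)]$, which is nonnegative by the transfer condition. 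Hence $\Lambda\Var(g)-\Gamma\mathcal J\ge N-\Gamma\Delta_\rho(f)$. Dividing by $b$ gives $N/b-(\Gamma/b)\Delta_\rho(f)=u+m-(\Gamma/b)\Delta_\rho(f)$.

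Finally, admissibility of $\vartheta=1+[\Delta_\rho(f)]_+/H(\rho)$ is the last assertion of Lemma~\ref{st:dilation}. It follows from $E_\rho(f)=H(\rho)-\Phi(\mu)+\Delta_\rho(f)$.
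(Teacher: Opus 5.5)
Your proposal is correct and follows essentially the paper's proof: the dilated arcsine bound is combined with Lemma~\ref{st:spectral-energy} and costs are paid coordinatewise from the profile budget via $\mathcal Q_u$ and $\rho+\frac{\pi^2}{4}\tau$. The transfer step then uses the same rearrangement of \eqref{eq:entropy-remainder} together with $\Var(g)\le\rho^2(1-\mu^2)$. The only difference is cosmetic: you keep the slack $\vartheta D(g)/\rho-S$ separate with coefficient $\rho\le b$, whereas the paper pools it with the discarded budget $P-\sum_{i\in K}w_i$ at rate $\rho+\frac{\pi^2}{4}\tau$. Both give the same bound $bP$.
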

\begin{proof}
Put $P=\vartheta D(g)/\rho$ and
$w_i=\widehat f(i)\ell_u(\widehat f(i))$.
Lemma~\ref{st:dilation} gives $\sum_iw_i\le P$, and hence
\[
 \sum_{i\notin K}\widehat f(i)^2
 \le\frac\beta z\left(P-\sum_{i\in K}w_i\right).
\]
Combine the dilated arcsine estimate with
Lemma~\ref{st:spectral-energy} and use this bound on the discarded
coordinates to obtain
\[
 \Lambda\Var(g)-\Gamma\mathcal J\le
 \left(\rho+\frac{\pi^2}{4}\tau\right)
       \left(P-\sum_{i\in K}w_i\right)
 +\sum_{i\in K}\mathcal Q_u(\widehat f(i))w_i
 \le bP.
\]
The last inequality uses nonnegative weights and their common cost
bound $b$ from \eqref{st:spectral-cost}. This proves the first assertion.

For the second, write $\Delta=\Delta_\rho(f)$ and $\mu=\E f$.
Equation~\eqref{eq:entropy-remainder} gives
\[
 \Lambda\Var(g)-\Gamma\mathcal J
 =N-\Gamma\Delta
 +(\Gamma\log2-\Lambda)(\rho^2-\Var(g))-\Gamma\Psi(\mu).
\]
Since $\rho^2-\Var(g)\ge\rho^2\mu^2$ and
$\Psi(\mu)\le(\log2-1/2)\mu^2$, the transfer condition makes the last two terms nonnegative.
Thus the variance lost through a nonzero mean pays for its entropy
remainder. Divide by $b$ and use $N/b=u+m$ to conclude.
\end{proof}

\begin{lemma}\label{st:spectral-margins}
Write $\rho=\tanh u$.
In the compact range $\atanh(49/50)\le u\le8$, use the
parameter rule of Appendix~\ref{sec:compact-spectral} and take
$\bar\alpha=1-2e^{-u}$ for $u\le4$ and
$\bar\alpha=31/32$ for $u>4$ in \eqref{st:spectral-cost}.
Then $m>1/200000$ and $\Gamma/b<3000$.

For $u\ge8$, take $\bar\alpha=31/32$ and
$z=4$, $t=u-2-\tfrac12\log u$, and $\gamma=e^{2u-7}/u$.
Then $m>u/200$ and $\Gamma/b<4e^{2u-7}/u$.
In both ranges the parameters are admissible and satisfy the
transfer condition in Proposition~\ref{st:robust-spectral}.
\end{lemma}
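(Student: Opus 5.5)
The lemma is a pair of scalar estimates on the explicit quantities in \eqref{eq:spectral-parameters} and \eqref{st:spectral-cost}. I would treat the two ranges separately: a certified finite computation on the compact range, and an asymptotic analysis with explicit error terms for $u\ge8$. In both ranges the checks come in the same order.

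First, admissibility: $t\ge2$ and $\gamma>e^{2t-3}/2-t$. Second, the transfer condition $\rho^2(\Gamma\log2-\Lambda)\ge\Gamma(\log2-1/2)$. Third, an upper bound on $b$. Fourth, a lower bound on $N=\Lambda\rho^2-\Gamma\Psi(\rho)$, from which $m=N/b-u$ and $\Gamma/b$ follow.

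For the bound on $b$, two pieces are needed.
\begin{itemize}
\item The term $\rho+(\pi^2/4)\tau$ with $\tau=\Delta_o\beta/z$ is explicit once $z$ is fixed.
\item The supremum of $\mathcal Q_u$ over $[\beta,\bar\alpha]$ is reduced to its endpoints in the height variable $v=\ell_u(a)$, using the convexity estimates of Appendix~\ref{app:scalar-cost}. The ingredients there are concavity of $a\mapsto\ell_u(a)$, $0\le R'\le1$, and the explicit form of $\psi$.
\end{itemize}
Lower bounds on $\Gamma/b$ play no role; only an upper bound on $\Gamma/b$ is required, and this is simply $\Gamma$ divided by the lower bound $b\ge\rho$.

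\textbf{Compact range $\atanh(49/50)\le u\le8$.} I would partition $u$ into finitely many intervals and fix rational $(t,\gamma,z)$ on each, following the parameter rule. On each interval I would bound every monotone ingredient by its value at the appropriate endpoint: $\rho$, $H(\rho)$ and $F(u)$ are monotone in $u$, and $\ell_u(a)$ is monotone in $u$ for fixed $a$. This gives rigorous interval bounds $m>1/200000$ and $\Gamma/b<3000$, replayed by the certificate script. The cutoff $\bar\alpha=1-2e^{-u}$ for $u\le4$ is matched to the local threshold $d=e^{-u}$ of Lemma~\ref{ub:exponential-local}, so the endpoint $a=\bar\alpha$ is where $\mathcal Q_u$ is worst. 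The resulting margin is small but positive.

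\textbf{Range $u\ge8$.} Here I would use the asymptotics. With $\gamma=e^{2u-7}/u$ and $t=u-2-\tfrac12\log u$, one has $e^{2t-3}/2=e^{2u-7}/(2u)<\gamma$, so the parameters are admissible. Since $\gamma\gg t$,
\[
k(t)=t\bigl(1-O(t/\gamma)\bigr),\qquad \Lambda\approx\frac{\pi^2}{4}t .
\]
By \eqref{eq:entropy-endpoint}, $\Psi(\rho)\le H(\rho)\le(2u+1)e^{-2u}$. Hence
\[
\Gamma\Psi(\rho)\le C_*\,\frac{e^{2u-7}}{u}\,(2u+1)e^{-2u}=O(e^{-7}),
\]
which is negligible, and so $N\approx(\pi^2/4)(u-2-\tfrac12\log u)$.

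The transfer condition holds because $\Gamma\log2\gg\Lambda$ and $\rho^2\approx1$. The required inequality is
\[
(1-\rho^2)\,\Gamma\log2+\rho^2\Lambda\le\Gamma/2,
\]
and this holds since $1-\rho^2\le4e^{-2u}$ makes the first term $O(1/u)$, while $\Gamma/2$ is of order $e^{2u}/u$.

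For $b$, the goal is $b\le\rho+\text{small}$, so that $N/b\ge(\pi^2/4)(u-O(\log u))>u(1+1/200)$; this is comfortable since $\pi^2/4\approx2.47$.
\begin{itemize}
\item With $z=4$, we get $\beta=F(u)/F(4)\approx e^{-2u}(2u+1)e^{8}/9$ by \eqref{rt:scaled-profile}. Also $\Delta_o\le k(t)\le t$, so $\tau=\Delta_o\beta/z=O(u^2e^{-2u})$, which is tiny.
\item For $\mathcal Q_u(a)$ with $\beta\le a\le31/32$, the first term satisfies $\rho(\ell-R(\ell))/\ell\le\rho$.
\item In the other terms, $\Delta_1-\Delta_2$ and $\Delta_2$ are differences $k(t)-k(j)$ multiplied by powers of $\rho$, each at most $t$. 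Naively these are large (order $u$), so we need $\ell_u(a)$ to be of order $u$ where they matter.
\end{itemize}

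This last point is the main obstacle: obtaining a uniform bound $\sup_a\mathcal Q_u(a)$ well below $N/u\approx\pi^2/4$. The bound must hold over all $a$ between $\beta$ and $31/32$, where $\ell_u(a)$ ranges from $4$ up to nearly $u$.

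The approach I would try first has three ingredients.
\begin{itemize}
\item Write $\ell=\ell_u(a)$ and use $-(\log F)'>7/4$ for $v\ge4$ to get $a\le e^{-(7/4)(u-\ell)}$. Small heights then force exponentially small $a$, which kills $(\Delta_1-\Delta_2)a/\ell$ and $\psi(a)/(a\ell)$.
\item At large heights, use $\ell\approx u-\tfrac12\log(1/a)$. For $a\le31/32$, the gain from the arcsine correction, $\ell-R(\ell)$ with $R'$ near $1$, together with $\psi(a)/a\le(1+a)/2$, keeps $\mathcal Q_u\le\rho+c$ with $c<\pi^2/4-1-1/200$, after accounting for $t\approx u-2-\tfrac12\log u$.
\item If the pointwise bound is tight near $a=31/32$, I would use the endpoint-reduction convexity of Appendix~\ref{app:scalar-cost} to evaluate $\mathcal Q_u$ only at $a=\beta$ and $a=31/32$, and verify those two scalar inequalities for all $u\ge8$ by monotonicity in $u$.
\end{itemize}

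Finally, $\Gamma/b\le\Gamma/\rho=C_*e^{2u-7}/(u\rho)<4e^{2u-7}/u$, since $C_*\approx1.69$ and $\rho>0.999$.
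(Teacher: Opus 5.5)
The genuine gap is in the range $u\ge8$. Your size estimates for $b$ are wrong, and they hide the fact that $b$ lies only slightly below $N/u$. Your compact-range plan, a certified interval computation, matches the paper in spirit. Note, though, that the lemma prescribes the continuous parameter rule rather than frozen rational parameters, and the paper's real work there is converting the certificate's normalized margins into $m>1/200000$ by bounding the normalization factors.

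First, the claim $\Delta_o\le k(t)\le t$ is false. With the stated parameters $\gamma=e^{2t-3}$ exactly, so the discarded singleton weight $t-e^{2t-3}/2=t-\gamma/2$ is hugely negative. Hence $k(t-\gamma/2)\approx-\gamma$ and $\Delta_o\approx\gamma\rho^2$. So $\tau\approx\gamma F(u)/(4F(4))$ is of order one, not $O(u^2e^{-2u})$; the paper bounds it by $5e/84<1/6$. The pairing $z=4$, $\gamma=e^{2u-7}/u$ is what keeps it bounded. Thus already $b\ge\rho+\pi^2\tau/4\approx1.4$, and your step $N/b\ge(\pi^2/4)(u-O(\log u))$ fails.

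Second, $N/u\approx(\pi^2/4)k(t)/u$ is only about $1.5$ at $u=8$ and is always below $\pi^2/4$. The conclusion $m>u/200$ requires $b<N/((1+1/200)u)<(\pi^2/4)/(1+1/200)$, which at $u=8$ means roughly $b<1.52$. A target $\mathcal Q_u\le\rho+c$ with $c<\pi^2/4-1-1/200$ is therefore insufficient for every $u$.

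The top retained endpoint is tighter still, and your plan lacks the mechanism that makes it work. The bound $\rho(\ell-R(\ell))/\ell\le\rho$ wastes nearly $1$, which by itself destroys the margin. Instead, use $\ell-R(\ell)=\arcsin^2(\tanh\ell)/\tanh\ell\le\pi^2/4$ to replace the arcsine term by $\pi^2/(4\ell)$. This extra $1$ in the numerator then cancels the subtractions in the degree corrections: $1+a\Delta_1+\tfrac12(1-a)\Delta_2\le\tfrac12k(t)(1+a)+2/\gamma+8e^{-2u}$, because $1-\rho^2k(1)$ and $1-\rho^4k(2)/2$ are $O(1/\gamma+e^{-2u})$. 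The top cost is then about $(\pi^2/4)k(t)(1+a)/(2v)$, against $N/u\approx(\pi^2/4)k(t)\rho^2/u$. The whole margin comes from $(1+\bar\alpha)/2=63/64$ together with $u-\ell_u(31/32)\le(4/7)\log(32/31)<1/50$, which follows from $-(\log F)'>7/4$. These give $\rho^2-u(1+a)/(2v)>1/80$, hence $N/u-b>1/80$ and $m>u/200$.

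Moreover, the endpoint reduction of Lemma~\ref{as:convex-majorant} holds only for each branch of $\psi$ separately. Checking just $a=\beta$ and $a=31/32$ does not suffice, because neither branch works at both ends:
\begin{itemize}
\item the quadratic branch costs about $(\pi^2/4)k(t)/8$ at height $4$, which eventually exceeds $N/u<\pi^2/4$;
\item the logarithmic branch has $2(1-a/4)/\log(4/a)>1$ at $a=31/32$, which pushes the cost above $N/u$.
\end{itemize}
You must split, as the paper does at $a=1/8$, and check three heights. At height $4$ the term $\Delta_2\psi(a)/a$ is not killed: it stays of order one, tending to $1$, and is controlled only through $\log(4/a)>2t-3/2$.
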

The proof in Appendix~\ref{app:stability-inputs} translates the
normalized scalar margins into the production margin $m$
and the loss coefficient $\Gamma/b$ used here.

\subsection{Conditional variance around a marked coordinate}

The marked coordinate separates $f$ into two sections. Their means
need not be opposite when $f$ is unbalanced, so we retain both means
and use conditional variance to measure dependence on the remaining
coordinates.

For the marked-coordinate arguments, index the coordinates by
$\{0,\ldots,n-1\}$. For an increasing nonconstant source, choose
coordinate $0$ to have a largest singleton coefficient. Put
$a_0=\widehat f(0)=1-2\delta$, where $\delta=\delta(f)$.
Its sections $f(1,\cdot)$ and $f(-1,\cdot)$ have means
$\mu+a_0$ and $\mu-a_0$, respectively, so $|\mu|\le2\delta$.
For $i\ne0$, write $a_i=\widehat f(i)$ and
$b_i=\widehat f(\{0,i\})$. The sections are increasing, and
comparison with the constant functions $1$ and $-1$ gives
$0\le a_i\pm b_i\le2\delta\mp\mu$.
Consequently $|b_i|\le a_i\le2\delta$.

For $J\subseteq\{1,\ldots,n-1\}$, let
$r_\pm=\E[f(X)\mid X_0=\pm1,X_J]$ and put
\begin{equation}\label{eq:error-projection}
 U=\tfrac12[\Var(r_+)+\Var(r_-)],\qquad
 M_J=\sum_{i\in J}a_i.
\end{equation}
The means and singleton coefficients of these projections remain
$\mu\pm a_0$ and $a_i\pm b_i$. Apply
Lemma~\ref{as:strengthened-split} to $r_+$ and $r_-$ with
$K=\varnothing$ and parameter $4$, then average. Since
$0\le\partial_i r_\pm\le1$ and $|b_i|\le a_i$, this gives
\begin{equation}\label{eq:conditional-variance-bound}
 3U\le\tfrac12[\operatorname{Dir}(r_+)+\operatorname{Dir}(r_-)]
       +\frac{e^3}{2}\sum_{i\in J}(a_i^2+b_i^2)
 \le M_J+e^3\sum_{i\in J}a_i^2.
\end{equation}

Put $K=\{0\}\cup J$, $Q=\{0,\ldots,n-1\}\setminus K$,
$g=T_\rho f$, and $h=P_Kg$.
Conditional variance gives $\Var(P_Kf)=a_0^2+U$. After noise,
$E_J:=\Var(h)-\rho^2a_0^2\in[0,\rho^2U]$ and
$\operatorname{Dir}_J(h)\ge E_J$, since every mode counted by
$E_J$ contains a coordinate of $J$.
For a real parameter $\lambda$, apply
Lemma~\ref{as:strengthened-split} with parameter $\lambda-2$
in each section $X_0=\pm1$, revealing $J$ before $Q$.
The sectional mean derivatives are $\rho a_i\pm\rho^2b_i$;
averaging the two inequalities therefore yields
\begin{equation}\label{rt:section-fs}
 \operatorname{Dir}_{J\cup Q}(g)
 \ge\operatorname{Dir}_J(h)+\frac\lambda2\|g-h\|_2^2
 -\frac{e^{\lambda-3}}2\sum_{i\in Q}
       (\rho^2a_i^2+\rho^4b_i^2).
\end{equation}
Here projection and residual energies have disjoint Fourier supports.

\subsection{A marked-coordinate production estimate}

The marked coordinate contributes its entropy profile, while the
remaining variance supplies additional production. We combine these
two contributions using the conditional energy estimate
\eqref{rt:section-fs}. The retained-coordinate costs will again be
paid from the same production budget.

Use the coefficients $a_i,b_i$ above and the heights
$\ell_i=\ell_u(a_i)$ from \eqref{eq:profiles}.
Choose $0<z<u$ and $\lambda\ge2$, and put
\begin{equation}\label{eq:marked-parameters}
 \begin{gathered}
 \beta=F(u)/F(z),\quad J=\{i\ne0:a_i\ge\beta\},\quad
 c_J=\sum_{i\in J}a_i\ell_i,\\
 C=\frac{\rho^2(1+\rho^2)e^{\lambda-3}\beta}{2z},\qquad
 \omega=C+\rho.
 \end{gathered}
\end{equation}
It suffices to check the retained-coordinate cost at two influences:
\begin{equation}\label{ub:new-heads}
 \omega\ell_u(a)\ge\rho^2\left(1+\frac{\lambda-2}{6}
                      +\frac{(\lambda-2)e^3a}{6}\right)
 \qquad(a\in\{\beta,2\delta\}).
\end{equation}

\begin{proposition}[Marked-coordinate comparison]\label{st:robust-marked}
Let $f$ be increasing and nonconstant, let $u>0$, and suppose
$\vartheta\ge1$ satisfies $E_\rho(f)\le\vartheta H(\rho)$.
If \eqref{ub:new-heads} holds, or if $2\delta<\beta$, then
\[
 \vartheta\frac{D(T_\rho f)}\rho
 \ge a_0\ell_u(a_0)
 +\frac\lambda{2\omega}
     [\Var(T_\rho f)-\rho^2a_0^2].
\]
\end{proposition}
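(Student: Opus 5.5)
The plan is to split the production by coordinate, $D(g)=\mathcal D_0(g)+\sum_{i\ne0}\mathcal D_i(g)$ with $g=T_\rho f$. The marked coordinate is handled directly by Lemma~\ref{st:dilation}, which gives $\vartheta\mathcal D_0(g)/\rho\ge a_0\ell_u(a_0)$. It then remains to show
\[
 P':=\frac{\vartheta}{\rho}\sum_{i\ne0}\mathcal D_i(g)\ \ge\ \frac{\lambda}{2\omega}\bigl[\Var(g)-\rho^2a_0^2\bigr].
\]
With $h=P_Kg$ and $K=\{0\}\cup J$, the Fourier supports are disjoint, so $\Var(g)-\rho^2a_0^2=E_J+\|g-h\|_2^2$ with $E_J\le\rho^2U$. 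For each $i\ne0$ there are two lower bounds for $\vartheta\mathcal D_i(g)$: the energy bound $\vartheta\mathcal D_i(g)\ge\operatorname{Dir}_i(g)$ (using $\vartheta\ge1$), and the profile bound $\vartheta\mathcal D_i(g)\ge\rho a_i\ell_i$. I would write $\omega P'=\rho P'+CP'$ and spend the first part on energy and the second on profiles, giving weight $\rho/\omega$ and $C/\omega$ to the two bounds.

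\emph{Energy part.} Take $\rho P'\ge\operatorname{Dir}_{J\cup Q}(g)$ and apply \eqref{rt:section-fs} together with $\operatorname{Dir}_J(h)\ge E_J$. This yields $E_J+(\lambda/2)\|g-h\|_2^2$ minus the discarded cost $\tfrac{e^{\lambda-3}}2\sum_{i\in Q}(\rho^2a_i^2+\rho^4b_i^2)$.

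\emph{Discarded cost.} Using $|b_i|\le a_i$, the discarded cost is at most $\tfrac12\rho^2(1+\rho^2)e^{\lambda-3}\sum_{Q}a_i^2$. For $i\in Q$ we have $a_i<\beta$, and since $\ell_u(a)/a$ is decreasing, $a_i/\ell_i\le\beta/\ell_u(\beta)=\beta/z$. Hence the discarded cost is at most $C\sum_{Q}a_i\ell_i\le CP'$, which is exactly the profile part. This already settles the case $2\delta<\beta$: then $J=\varnothing$ (all $a_i\le2\delta<\beta$), $E_J=0$, and we get $\omega P'\ge(\lambda/2)\|g-h\|_2^2$.

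\emph{Retained coordinates.} When $J\neq\varnothing$, the missing amount is $(\lambda/2-1)E_J\le\tfrac{\lambda-2}2\rho^2U$. By \eqref{eq:conditional-variance-bound}, this is at most $\rho^2\tfrac{\lambda-2}{6}\sum_{J}(a_i+e^3a_i^2)$. The inequality \eqref{ub:new-heads} is designed to pay exactly this from $\omega a_i\ell_i$: its ``$1$'' term would also cover $\rho^2U\le\rho^2M_J$, via Poincaré for the increasing sections, $U\le M_J$. The two-point check suffices because $a\mapsto\omega\ell_u(a)$ is concave while the right side of \eqref{ub:new-heads} is affine, and every $i\in J$ has $\beta\le a_i\le2\delta$.

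\emph{Main obstacle.} Each $\mathcal D_i$ with $i\in J$ may be used only once, but it is needed both for $\operatorname{Dir}_i(g)$ inside \eqref{rt:section-fs} and for $a_i\ell_i$. My first attempt is to avoid double counting by dropping $\operatorname{Dir}_J(h)$ on the energy side. That is, I would use only $\operatorname{Dir}_J(g-h)+\operatorname{Dir}_Q(g)\ge(\lambda/2)\|g-h\|_2^2-\text{cost}$, and then pay all of $(\lambda/2)E_J\le\rho^2U+\tfrac{\lambda-2}2\rho^2U$ from profiles via \eqref{ub:new-heads}. This needs a per-coordinate bound for $i\in J$ of the form
\[
 \vartheta\mathcal D_i(g)\ \ge\ \tfrac\rho\omega\operatorname{Dir}_i(g-h)+\tfrac{C}{\omega}\rho a_i\ell_i+\dots,
\]
so that each $\mathcal D_i$ is split between its energy part and its profile part rather than counted twice. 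Checking that this convex split closes with exactly the constants in \eqref{ub:new-heads} is where I expect the real work to lie.
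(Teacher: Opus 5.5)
You have correctly reduced the claim to $P'$, split $\omega P'=\rho P'+CP'$, paid the discarded cost by $C\sum_{i\in Q}a_i\ell_i\le CP'$, and settled the case $2\delta<\beta$. There is a genuine gap in the case $J\ne\varnothing$, which you leave open and which is the real content of the proposition.

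Once $\rho P'$ is spent entirely on $\operatorname{Dir}_{J\cup Q}(g)$, your ledger holds only $Cc_J$ for the retained coordinates. But \eqref{ub:new-heads} is calibrated against $\omega c_J=(\rho+C)c_J$. With the parameters of Corollary~\ref{ub:join}, for example, $Cz<1$, while the right side of \eqref{ub:new-heads} at $a=\beta$ exceeds $\rho^2(1+(\lambda-2)/6)>1$. So $Cc_J$ alone cannot pay.

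The missing idea is a one-line bound for increasing sources. Since $0\le\partial_i g\le\rho$ pointwise and $\E\partial_i g=\rho a_i$, you get $\operatorname{Dir}_i(g)\le\rho^2a_i$. Hence, for $i\in J$,
\[
 \vartheta\mathcal D_i(g)\ge\rho a_i\ell_i
 \ge\operatorname{Dir}_i(g)+\rho a_i\ell_i-\rho^2a_i .
\]
A single $\mathcal D_i$ therefore delivers both its full energy and its full profile, at the additive price $\rho^2a_i$. No convex split is involved. A split with weights $\rho/\omega$ and $C/\omega$ yields only fractions of each, which is why your per-coordinate attempt cannot close.

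To finish:
\begin{itemize}
\item use this bound on $J$ inside $\rho P'$;
\item keep $CP'\ge Cc_J+C\sum_{Q}a_i\ell_i$;
\item apply \eqref{rt:section-fs} with $\operatorname{Dir}_J(h)\ge E_J$.
\end{itemize}
This gives exactly the paper's ledger \eqref{rt:marked-ledger}. Then \eqref{eq:conditional-variance-bound} and your two-endpoint concavity check complete the proof.

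This also corrects your reading of the constants. The term $1$ in \eqref{ub:new-heads} pays the price $\rho^2M_J$ of the inequality above, not $\rho^2U$. The unit part of $\frac\lambda2E_J$ is paid by retaining $\operatorname{Dir}_J(h)\ge E_J$ on the energy side. Only $\frac{\lambda-2}2E_J\le\frac{\lambda-2}2\rho^2U$ is charged to profiles.

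Your alternative of dropping $\operatorname{Dir}_J(h)$ and paying all of $\frac\lambda2E_J$ from profiles would charge both $\rho^2M_J$ and $E_J\le\rho^2U\le\rho^2M_J$ to that single unit term. It would need roughly $2$ in place of $1$ (or $\lambda$ in place of $\lambda-2$) in \eqref{ub:new-heads}. It closes with the stated constants only if you compute the price more finely as $\rho^2M_J-\operatorname{Dir}_J(h)$, using $\operatorname{Dir}_i(g-h)\le\operatorname{Dir}_i(g)\le\rho^2a_i$. That is the same argument as the paper's.
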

\begin{proof}
Put $P=\vartheta D(g)/\rho$. Lemma~\ref{st:dilation} gives
$\vartheta\mathcal D_i(g)\ge\rho a_i\ell_i$.
For an increasing Boolean source, $0\le\partial_i g\le\rho$ and
$\E\partial_i g=\rho a_i$, so
$\operatorname{Dir}_i(g)\le\rho^2a_i$.
Use the profile bound on coordinate $0$, the difference
$\vartheta\mathcal D_i(g)-\operatorname{Dir}_i(g)
\ge\rho a_i\ell_i-\rho^2a_i$ on $J$, and
$\vartheta\mathcal D_i(g)\ge\operatorname{Dir}_i(g)$ elsewhere.
The discarded profile budget is at most $P-a_0\ell_0-c_J$.
Since $|b_i|\le a_i$ and $\ell_u(a)/a$ decreases,
\[
 \sum_{i\in Q}(\rho^2a_i^2+\rho^4b_i^2)
 \le\rho^2(1+\rho^2)\frac\beta z(P-a_0\ell_0-c_J).
\]
Substitution in \eqref{rt:section-fs}, using $E_J\le\rho^2U$, gives
\begin{equation}\label{rt:marked-ledger}
 \omega(P-a_0\ell_0)\ge
 \frac\lambda2[\Var(g)-\rho^2a_0^2]
 +\omega c_J-\rho^2M_J
 -\frac{\lambda-2}{2}\rho^2U.
\end{equation}
By \eqref{eq:conditional-variance-bound}, the last three terms
are at least
\[
 \sum_{i\in J}a_i\left\{\omega\ell_u(a_i)
 -\rho^2\left(1+\frac{\lambda-2}{6}
                +\frac{(\lambda-2)e^3a_i}{6}\right)\right\}.
\]
The expression in braces is concave in $a_i$. As
$\beta\le a_i\le2\delta$, its endpoint bounds
\eqref{ub:new-heads} make the sum nonnegative.
If $2\delta<\beta$, the sum is empty. Division by $\omega$ proves
the claim.
\end{proof}

\subsection{Choosing the marked-coordinate parameters}

The parameters balance the discarded-coordinate cost against the
production surplus. Taking $z$ proportional to $u$ and keeping
$\lambda-2(u-z)$ bounded cancels the exponential factors in $C$,
leaving $C=O(1/u)$. At leading order, paying the smallest retained coordinate
requires $z>\lambda/6$, while a positive surplus requires
$(2-3\delta)\lambda>2u$ at the chosen distance cutoff.
These conditions leave a range of parameter choices. The following
choice puts the marked estimate directly in the form needed by the
production criterion.

\begin{corollary}\label{ub:join}
Take $z=19u/40-1/2$ and $\lambda=21u/20+2$.
For every increasing Boolean source, write $Z=Z_f(u)$ and
$\delta=\delta(f)$. Then
\begin{equation}\label{eq:marked-production-criterion}
 (1+Z_+)\frac{D(T_\rho f)}\rho
 \ge u\left(1+\frac\delta{40}-Z_+\right)
\end{equation}
in each of the following ranges:
\[
 \begin{array}{ll}
 \log96\le u\le8,& d_-(u)\le\delta\le1/64,\\
 u\ge8,& e^{-5u/4}\le\delta\le1/64.
 \end{array}
\]
\end{corollary}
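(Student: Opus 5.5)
The plan is to apply Proposition~\ref{st:robust-marked} with the stated $z,\lambda$ and with $\vartheta=1+Z_+$. This choice of $\vartheta$ is admissible by Lemma~\ref{st:dilation}, since $E_\rho(f)\le H(\rho)+(\Delta_\rho(f))_+$. The proposition then gives
\[
 (1+Z_+)\frac{D(T_\rho f)}{\rho}\ge a_0\ell_u(a_0)+\frac{\lambda}{2\omega}\bigl[\Var(g)-\rho^2a_0^2\bigr],
\]
and it remains to bound each term from below in terms of $u$, $\delta$ and $Z_+$.

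\emph{Step 1: the marked profile term.} Since $a_0=1-2\delta$ and $\ell_u$ is concave with $\ell_u(1)=u$, we get $\ell_u(a_0)\ge(1-2\delta)u$. For the sharper estimate I would use the clock identity. From $F(\ell)=F(u)/a_0$ and $-(\log F)'>1$ (which exceeds $7/4$ for $v\ge4$), the height satisfies $u-\ell_u(a_0)\le \log(1/a_0)/(7/4)$ whenever $\ell\ge4$. Hence
\[
 a_0\ell_u(a_0)\ge u-2\delta u-O(\delta).
\]
This term loses roughly $2u\delta$, which the variance term must recover.

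\emph{Step 2: the variance term.} By \eqref{eq:entropy-remainder} with $\mathcal J\ge0$,
\[
 (\log2)\Var(g)\ge\Phi(\rho)-\Delta-\Psi(\mu).
\]
Combined with $\Phi(\rho)=\log2-H(\rho)$ and $|\mu|\le2\delta$, this gives
\[
 \Var(g)-\rho^2a_0^2\ge 1-\rho^2(1-2\delta)^2-\frac{H(\rho)(1+Z)+\Psi(\mu)}{\log 2}.
\]
Here $1-\rho^2(1-2\delta)^2\approx 4\delta+(1-\rho^2)$. Near the endpoint, $1-\rho^2$ and $H(\rho)$ are $O(ue^{-2u})$ by \eqref{eq:entropy-endpoint}, and $\Psi(\mu)=O(\delta^2)$. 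The bracket is therefore at least $4\delta(1-O(\delta))-O(ue^{-2u})-H(\rho)Z_+/\log2$.

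Next, $C=O(1/u)$: the exponential factors cancel because $\beta\approx e^{-2(u-z)}\cdot\mathrm{poly}$ and $e^{\lambda-3}\approx e^{21u/20}$, with $2(u-z)=21u/20+1$. So $\omega=\rho+O(1/u)$, and the coefficient $\lambda/(2\omega)\approx 21u/40$. The gain is then about $(21u/10)\delta$, which exceeds the $2u\delta$ loss by $u\delta/10$. After subtracting error terms this leaves $u\delta/40$.

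\emph{Step 3: the deficit terms.} The $Z_+$ contribution is $-(\lambda/2\omega)H(\rho)Z_+/\log2=-O(u^2e^{-2u})Z_+$, which is far smaller than $uZ_+$, so the right side may be taken as $u(1+\delta/40-Z_+)$. The additive error $O(u^2e^{-2u})$ must be absorbed into $u\delta/40$. This is exactly why the lower cutoffs $\delta\ge e^{-5u/4}$ for $u\ge8$ and $\delta\ge d_-(u)=\tfrac32e^{-u}$ for $\log96\le u\le8$ appear.

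\emph{Step 4: the hypothesis \eqref{ub:new-heads}.} Finally I would verify \eqref{ub:new-heads} at $a=\beta$ and $a=2\delta\le1/32$, or else note that $2\delta<\beta$. At $a=\beta$ we have $\ell_u(\beta)=z$, since $F(u)/\beta=F(z)$. The condition reads $\omega z\ge\rho^2(1+(\lambda-2)/6+\dots)$, that is $19u/40-1/2\gtrsim 1+7u/40+\dots$, which holds for $u\ge\log96$. At $a=2\delta$, concavity of $\ell_u$ gives a height of order $u$, and the $e^3a/6$ term is small since $a\le1/32$.

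The main obstacle I expect is the explicit bookkeeping in the compact range $\log96\le u\le8$. There the $O(1/u)$ corrections to $\omega$ and the $O(\delta)$ corrections in Steps 1--2 are not asymptotically negligible, so the margin $u\delta/40$ must be checked numerically over $u$ and at the endpoints in $\delta$. For this I would rely on monotonicity in $\delta$ (the relevant expressions are concave in $\delta$) and on a grid in $u$, in the manner of the appendix certificates.
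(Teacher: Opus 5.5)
Your outline matches the paper's proof. You apply Proposition~\ref{st:robust-marked} with $\vartheta=1+Z_+$, bound the profile shift by $u-a_0\ell_u(a_0)\le2\delta(u+4/7)$ using $-(\log F)'>7/4$, bound the variance via \eqref{eq:entropy-remainder} with $\mathcal J$ dropped, absorb $\Psi(\rho)$ through the lower cutoffs, and absorb the $Z_+$ term through $\lambda H(\rho)\ll u$. However, the corollary is purely a numerical consequence of the proposition, and its one tight point is \eqref{ub:new-heads}, which your Step~4 gets wrong.

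\emph{The check at $a=\beta$.} Replacing $\omega z$ by $z$ reduces it to $\tfrac{19u}{40}-\tfrac12\ge1+\tfrac{7u}{40}$, i.e.\ $3u/10\ge3/2$. This is false for every $u<5$, hence on $[\log96,5)$. The omitted term $\rho^2(\lambda-2)e^3\beta/6$, about $0.11$ at $u=\log96$, makes it worse. What rescues the inequality is the discarded-cost term in $\omega=\rho+C$. Although $C=O(1/u)$, the cancellation $\lambda-3-2u+2z=-2$ makes $Cz$ of order one, and you need the explicit bound $Cz>7/25$ (from \eqref{rt:scaled-profile} and $e^{-2}>2/15$). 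The linear part then becomes $3u/10-61/50$; after the $e^3\beta$ loss, the margin near $u=\log96$ is only of order $10^{-2}$. So $\omega$ cannot be treated as $\rho+o(1)$ here. $C$ must also be bounded above, for instance by $C(u+4/7)<1$, because that is what keeps the profile loss $2\omega\delta(u+4/7)$ below the variance gain $2\lambda\rho^2\delta$ in your Steps~1--2. This two-sided control is Lemma~\ref{app:neighborhood-constants}.

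\emph{The check at $a=2\delta$.} This is also too loose as written. Concavity of $\ell_u$ gives only $\ell_u(a)\ge au$, and monotonicity gives only $\ell_u(a)\ge z$. Meanwhile the term $(\lambda-2)e^3a/6$ is not small: at $a=1/32$ and $u=\log96$ it is about $1/2$. Using only $\ell_u(2\delta)\ge z$, \eqref{ub:new-heads} fails at $a=1/32$ for $u$ up to about $6$. The paper instead notes that the difference of the two sides of \eqref{ub:new-heads} is concave in $a$. It therefore suffices to check $a=\beta$ and $a=1/32\ge2\delta$. At $a=1/32$ the paper uses the explicit bound $\ell_u(1/32)>u-2$, which follows from \eqref{rt:scaled-profile} via $F(u-2)>32F(u)$; the margin there is about $3/5$.

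With $Cz>7/25$, $C(u+4/7)<1$ and $\ell_u(1/32)>u-2$ in hand, the rest of your bookkeeping goes through. Where you planned a grid in $u$, the paper uses monotonicity of each loss term and evaluation at $u=9/2$.
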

The proof in Appendix~\ref{app:marked-parameters} checks the retained
costs and the variance surplus separately. The parameters remain
free in Proposition~\ref{st:robust-marked}; only this numerical
consequence uses their displayed values.

\subsection{Proof of the main theorem}\label{sec:global-stability}

We now verify the hypothesis of
Proposition~\ref{prop:production-stability}. The local comparisons
give a lower bound directly. Every remaining source has a production
surplus, supplied by the spectral or marked-coordinate estimate.

\begin{proposition}\label{st:high}\label{st:endpoint-order}
For every increasing Boolean function $f: \{-1,1\}^n\to \{-1,1\}$ and $49/50\le\rho<1$,
\[
 \Delta_\rho(f)\ge10^{-8}H(\rho)\delta(f).
\]
\end{proposition}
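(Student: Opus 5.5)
The plan is to apply Proposition~\ref{prop:production-stability} with $u_0=\atanh(49/50)$, $\eta$ small, and constants $a,B$ chosen so that $\min\{\eta,1,a/(B+3)\}>10^{-8}$. The hypothesis \eqref{eq:production-criterion} only has to be checked for $u\ge u_0$ and $Z=Z_f(u)<\eta\delta$. I would take $a=1/40$ and $B$ of order $10^4$, and I would cover every such $u$ by one of three regimes, according to the size of $\delta=\delta(f)$ relative to $u$. The key point is that the criterion is pointwise in $u$, so a source may move between regimes as $u$ varies.

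\emph{Small distance.} Suppose $\delta\le d_-(u)$ when $u\le8$, or $\delta\le e^{-5u/4}$ when $u\ge8$, or $u<\log96$ with $\delta$ in the corresponding local range. Here I would not verify \eqref{eq:production-criterion} at all. Instead I would show $Z\ge\eta\delta$ directly, so that the hypothesis is vacuous. Lemma~\ref{ub:local} together with \eqref{eq:local-distance-conversion} and Lemma~\ref{ub:exponential-local} gives $\Delta_\rho(f)\ge\delta(1-\rho)/120$ when $\delta\le e^{-u}$. Using $H(\rho)<(u+1)(1-\rho)$ from \eqref{eq:entropy-endpoint}, this becomes $Z\gtrsim\delta/(120(u+1))$. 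That is not uniform in $u$, which is a problem.

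To fix this I would exploit the local expansion, Corollary~\ref{cor:local-stability}, for very small $\delta$. It gives $\Delta\ge2\rho u\delta-4\delta^{2/(1+\rho^2)}/(1-\rho^2)$. With $1-\rho^2\approx4e^{-2u}$ and $\delta\le e^{-5u/4}$, the error is roughly $\delta^{2-O(e^{-2u})}e^{2u}\le\delta e^{-u/4}$, which is much smaller than $u\delta$. Hence $Z\ge u\delta/H(\rho)\gg\delta$. For the intermediate window $e^{-5u/4}\le\delta\le d_-(u)$ with $u\ge8$, Corollary~\ref{ub:join} already applies. For $u\le8$, a compact range, the $1/(u+1)$ loss is harmless: I would instead use the second threshold of Lemma~\ref{ub:exponential-local} with $u_d$, propagated downward by Lemma~\ref{ub:local-propagation}, and take $\eta$ to be a small absolute constant, around $10^{-4}$.

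\emph{Moderate distance, $d_-(u)\le\delta\le1/64$ (or the $u\ge8$ window).} This is exactly Corollary~\ref{ub:join}, which gives \eqref{eq:production-criterion} with $a=1/40$ and $B=1$. The range $u_0\le u<\log96$ is not covered by that corollary. There the local regime must reach up to $\delta\le1/64$, or else the spectral regime must reach down to it. I would handle it via Lemma~\ref{ub:exponential-local} at $d=1/64$ and concavity, which give an absolute lower bound on $\Delta/\delta$, combined with $H(\rho)\le\log 2$.

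\emph{Large distance, $\delta\ge1/64$, so $\alpha\le31/32$ (and $\alpha\le1-2e^{-u}$ when $u\le4$, by the local regime).} Apply Proposition~\ref{st:robust-spectral} with $\vartheta=1+Z_+$ and the parameters of Lemma~\ref{st:spectral-margins}. This gives $(1+Z_+)D/\rho\ge u+m-(\Gamma/b)HZ$. For $u\le8$ we have $m>1/200000$, which exceeds $u\delta/40$ only up to constants. I would therefore use $a\delta\le a/2$ and set $a\sim m/u\cdot2$, giving $a\approx10^{-6}$, and $B\approx3000H(\rho)/u<3000$. For $u\ge8$, $m\ge u/200\ge u\delta/100$ and $\Gamma H/b\le4e^{2u-7}(2u+1)e^{-2u}/u\le u$, so again $B=O(1)$. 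Then $a/(B+3)\approx10^{-6}/3003>10^{-8}$ in the worst case, which is where the constant $10^{-8}$ comes from.

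The main obstacle I expect is the small-distance regime uniformly in large $u$. I must ensure that $Z\ge\eta\delta$ there, rather than needing \eqref{eq:production-criterion} itself, because neither production estimate covers $\delta<e^{-5u/4}$. The hypercontractive remainder in Corollary~\ref{cor:local-stability} must be shown to be dominated by $\rho u\delta$ there, using $\delta^{2/(1+\rho^2)-1}\le\delta^{1-O(e^{-2u})}$ and $\delta\le e^{-5u/4}$.
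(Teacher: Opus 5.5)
Your overall architecture is the paper's: Proposition~\ref{prop:production-stability} with $u_0=\atanh(49/50)$, local comparisons making the hypothesis vacuous at small distance, Corollary~\ref{ub:join} in the middle window, and Proposition~\ref{st:robust-spectral} with Lemma~\ref{st:spectral-margins} for $\delta\ge1/64$ (or $\delta\ge e^{-u}$ when $u\le4$). There is, however, a genuine gap exactly where you expected one: $u\ge8$ with $\delta\le e^{-5u/4}$.

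Your estimate of the remainder in Corollary~\ref{cor:local-stability} is wrong. With $\rho=\tanh u$ the exponent is $2/(1+\rho^2)=1+\operatorname{sech}(2u)$. It tends to $1$, not $2$, as $\rho\uparrow1$, and $1/(1-\rho^2)=\cosh^2u$. So the remainder is $4\cosh^2(u)\,\delta^{1+\operatorname{sech}(2u)}$. At $\delta=e^{-5u/4}$ this is about $\delta e^{2u}$, exponentially larger than the main term $2\rho u\delta$. It drops below $u\delta$ only when $\log(1/\delta)\gtrsim ue^{2u}$.

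The other tools you list do not close this. Both thresholds of Lemma~\ref{ub:exponential-local}, extended by concavity, give only $Z/\delta\gtrsim1/(120(u+1))$ or $1/(80(u+1))$, which is not uniform in $u$. Neither production estimate reaches below $e^{-5u/4}$. So in this region you have neither \eqref{eq:production-criterion} nor $Z\ge\eta\delta$ for a fixed $\eta$, and Proposition~\ref{prop:production-stability} cannot be invoked.

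The missing input is Lemma~\ref{app:endpoint-margins}, which evaluates $M_\rho(1-2d)$ directly at the smaller scale $d=e^{-5u/4}$. With $p=(1-\rho)/2$, use $H_{\rm b}(p+\rho d)\ge H_{\rm b}(d)\ge d(5u/4+1-d)$ and $H_{\rm b}(p)\le p(2u+1+2p)$. These give $M_\rho(1-2d)/(2dp)\ge2(5u/4+1)-(2u+1)$, up to corrections of order $ue^{-3u/4}$, hence at least $u/2$. Going below $e^{-u}$ makes $\log(1/d)$ exceed $u$ by a multiple of $u$. That is precisely the factor $u$ your thresholds at $d\asymp e^{-u}$ lack.

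Concavity and \eqref{eq:local-distance-conversion} then give $\Delta_\rho(f)\ge u(1-\rho)\delta/2$ for all $\delta\le e^{-5u/4}$. By \eqref{eq:entropy-endpoint}, $Z\ge u\delta/(2(u+1))>\delta/4$, uniformly in $u$. This is why Corollary~\ref{ub:join} is arranged to reach down to $e^{-5u/4}$.

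Separately, your bookkeeping does not reach $10^{-8}$, since $10^{-6}/3003\approx3\times10^{-10}$. You need to keep the factor $H(\rho)/u$ that you discarded. For $\rho\ge49/50$ one has $H(\rho)<3/50$ and $u>2$, so $\Gamma H(\rho)/(bu)<90$, giving $B=90$. From $m>1/200000$, $u\le8$ and $\delta\le1/2$ you can take $a=1/800000$, and then $a/(B+3)=1/74400000>10^{-8}$.

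Also, $\eta$ cannot be $10^{-4}$ with your bound $H(\rho)\le\log2$ in the band $4<u\le\log96$. There the local ratio is only about $\rho^2/(40\cdot9217)$, which is why the paper takes $\eta=1/400000$. This is harmless, since $\eta>10^{-8}$ is all that is needed.
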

\begin{proof}
For constant functions, use $\Delta_\rho(f)=\Phi(\rho)\ge\rho^2/2$
and $\delta(f)=1/2$. The dictator case is exact.
For any other increasing source, put $u_0=\atanh(49/50)$,
$\delta=\delta(f)$, $u=\atanh\rho$, and
$Z=Z_f(u)$ as in \eqref{eq:normalized-deficit}.
We show that $Z<\delta/400000$ forces
\begin{equation}\label{eq:uniform-production}
 (1+Z_+)\frac{D(T_\rho f)}\rho
 \ge u\left(1+\frac\delta{800000}-90Z_+\right).
\end{equation}
The local comparisons cover the smallest distances, the spectral
comparison covers the largest, and the marked-coordinate estimate
fills the interval between them:
\begin{center}
\small
\renewcommand{\arraystretch}{1.15}
\begin{tabular}{@{}llll@{}}
\toprule
Range & Local & Marked coordinate & Spectral\\
\midrule
$u_0\le u\le4$ & $\delta\le e^{-u}$ & --- & $\delta\ge e^{-u}$\\
$4<u\le\log96$ & $\delta\le1/64$ & --- & $\delta\ge1/64$\\
$\log96<u\le8$ & $\delta\le d_-(u)$ & $d_-(u)\le\delta\le1/64$ & $\delta\ge1/64$\\
$u>8$ & $\delta\le e^{-5u/4}$ & $e^{-5u/4}\le\delta\le1/64$ & $\delta\ge1/64$\\
\bottomrule
\end{tabular}
\end{center}

In the four local regions, the lower bounds for $Z/\delta$ are
$1/600$, $1/400000$, $1/720$, and $1/4$, respectively.
Thus none occurs when $Z<\delta/400000$. The first and third bounds follow
from Lemma~\ref{rt:local-junction}, concavity of $M_\rho$, and
\eqref{eq:entropy-endpoint}. For the second, propagate the threshold
at distance $1/64$ downward: its coefficient is at least
$\rho^2/(40\cdot9217)>1/400000$, and $H(\rho)<1$.
The fourth is Lemma~\ref{app:endpoint-margins}.

In each spectral region, the distance cutoff gives
$\max_i\widehat f(i)=1-2\delta\le\bar\alpha$.
Using the corresponding parameters,
Lemma~\ref{st:spectral-margins} gives
\[
 \frac mu\ge\frac\delta{800000},\qquad
 \frac{\Gamma H(\rho)}{bu}<90.
\]
For $u\le8$, use $m>1/200000$, $\delta\le1/2$,
$\Gamma/b<3000$, $u>2$, and $H(\rho)<3/50$.
For $u>8$, the growing margin $m>u/200$ suffices, and
\eqref{eq:entropy-endpoint} bounds the normalized loss by
$4e^{-7}(2+1/u)/u<1$.
With $\vartheta=1+Z_+$, Proposition~\ref{st:robust-spectral}
therefore gives \eqref{eq:uniform-production}, using
$-\Delta_\rho(f)\ge-H(\rho)Z_+$.

In the marked-coordinate regions, Corollary~\ref{ub:join} gives
the stronger bound \eqref{eq:marked-production-criterion}.
The table covers every source, so
Proposition~\ref{prop:production-stability} applies with
$a=1/800000$, $B=90$, and $\eta=1/400000$.
Since $10^{-8}<\min\{\eta,1,a/(B+3)\}$, it proves the claim.
\end{proof}

\begin{proof}[Proof of Theorem~\ref{st:natural-intro}]
For increasing $f$, Corollary~\ref{cor:low-stability} and
Propositions~\ref{st:middle} and~\ref{st:high} give
$\Delta_\rho(f)\ge10^{-8}\rho^2H(\rho)\delta(f)$.
Lemma~\ref{st:compression} therefore gives, for every Boolean $f$,
\begin{equation}\label{eq:global-compressed-gap}
 \Delta_\rho(f)\ge
 \frac{\rho^2(1-\rho^2)H(\rho)}{10^8(2-\rho^2)}\,\delta(f).
\end{equation}
For $\rho\ge457/500$, we have
$(1-\rho^2)/(2-\rho^2)\ge1-\rho$, so
\eqref{eq:global-compressed-gap} gives
$\Delta_\rho(f)\ge10^{-8}\rho^2(1-\rho)H(\rho)\delta(f)$.
For $0<\rho\le457/500$, the same lower bound follows directly
from Corollary~\ref{cor:low-stability}, since
$(1-\rho)H(\rho)<1$.
Since the hypothesis gives $\Delta_\rho(f)\le\varepsilon$,
these lower bounds imply the stated probability estimates.
For increasing $f$, the singleton coefficients are nonnegative,
so a nearest signed dictator can be chosen with positive sign.

To examine sharpness at the endpoints, take $f$ to be
three-variable majority, which is increasing and balanced,
with $\delta(f)=1/4$. Put
$r=(3\rho-\rho^3)/2$ and $s=(\rho+\rho^3)/2$.
The posterior magnitude $|T_\rho f|$ is $r$ with probability
$1/4$ and $s$ with probability $3/4$, hence
\[
 \frac{\Delta_\rho(f)}{\delta(f)}
 =H(r)+3H(s)-4H(\rho).
\]
As $\rho\downarrow0$, the expansion
$H(t)=\log2-t^2/2+O(t^4)$ gives
\[
 \frac{\Delta_\rho(f)}{\delta(f)}
 =\frac{\rho^2}{2}+O(\rho^4).
\]
Both denominators in Theorem~\ref{st:natural-intro} are
asymptotic to $(\log2)\rho^2$ at this endpoint.
Thus the order of both bounds as $\rho\downarrow0$ is optimal,
already among increasing functions.

As $\rho\uparrow1$, we have
$1-r=O((1-\rho)^2)$ and $1-s\sim2(1-\rho)$.
Consequently, $H(r)=o(H(\rho))$ and $H(s)\sim2H(\rho)$, so
\[
 \frac{\Delta_\rho(f)}{\delta(f)}
 \sim2H(\rho)\qquad \text{as } \rho\uparrow 1.
\]
Since $\rho^2\to1$, the denominator $\rho^2H(\rho)$ in the
increasing-function bound also has optimal order at this
endpoint.
\end{proof}

The extra factor $1-\rho$ in the general bound arises from
quantitative compression. We do not determine whether this
factor can be removed.

\begin{samepage}
\section*{Use of AI tools}

The authors formulated the approach and used ChatGPT to develop the proofs, prepare the verification code, and revise the exposition. The authors take full responsibility for the mathematical claims, proofs, computations, and references in this paper.
\end{samepage}

\appendix

\section{Quantitative inputs for stability}\label{app:stability-inputs}

This appendix verifies the spectral and marked-coordinate parameter
choices used in Section~\ref{sec:radius-tail}, followed by the local
bound that completes the coverage near the noiseless endpoint.

\subsection{Scalar cost bounds}\label{app:scalar-cost}

\begin{lemma}[Convex cost and explicit inverse
{\cite[Lemmas~7.7 and~7.8]{VuTranCK}}]
\label{as:convex-majorant}\label{as:explicit-inverse}
Choose either expression in the minimum defining $\psi$ in
\eqref{eq:retained-variance}, and substitute it into
$\mathcal Q_u(F(u)/F(v))$ from \eqref{eq:spectral-profile}. Replacing
$[v-R(v)]/v$ by
$1-R'(c)+[cR'(c)-R(c)]/v$, for $c>0$, gives a convex upper
bound on $0<v\le u$. Alternatively, replacing the entire arcsine
term $\rho[v-R(v)]/v$ by $\pi^2/(4v)$ also gives a convex upper bound.
If $u>2$, $e^{2-u}\le a\le1$, and
$v=u+(2u+1)\log a/(4u+1)$, then
\[
 F^{-1}(F(u)/a)\le v\le u,\qquad
 \frac{F(u)}{F(v)}\le e^{-2(u-v)}\frac{2u+1}{2v+1}.
\]
\end{lemma}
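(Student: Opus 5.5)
The lemma has two parts: convexity of the cost majorants, and the explicit inverse. I would prove the convexity part first, working in the height variable $v$ with $a=a(v)=F(u)/F(v)$, so that $\ell_u(a)=v$ and
\[
\mathcal Q_u(a)=\frac{\rho[v-R(v)]}{v}+\frac{\pi^2}{4}\Delta_2\frac{\psi(a)}{av}+\frac{\pi^2}{4}(\Delta_1-\Delta_2)\frac av.
\]

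\emph{Arcsine term.} Lemma~\ref{as:profile-convex} gives $R''\ge0$, so $R$ lies above its tangent at $c$: $R(v)\ge R(c)+R'(c)(v-c)$. This yields the replacement $[v-R(v)]/v\le 1-R'(c)+[cR'(c)-R(c)]/v$. Convexity of $R$ with $R(0)=0$ gives $cR'(c)-R(c)\ge0$, so the majorant is a constant plus a nonnegative multiple of $1/v$, hence convex. For the alternative majorant, $v-R(v)=\arcsin^2(\tanh v)/\tanh v$. The bound $\arcsin x\le \pi x/2$ gives $\arcsin^2 x/x\le(\pi^2/4)x\le\pi^2/4$, and $\rho\le1$ then gives $\rho[v-R(v)]/v\le\pi^2/(4v)$, which is convex.

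\emph{Linear and $\psi$ terms.} The linear term is $a/v=F(u)/[vF(v)]$, which is convex by Lemma~\ref{as:profile-convex} with $r=1$. With the first choice of $\psi$, $\psi(a)/(av)=(1+a)/(2v)$ is a sum of convex functions.

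\emph{Logarithmic choice of $\psi$.} Here $\psi(a)/(av)=2(1-a/4)/[v\log(4/a)]$, with $\log(4/a)=\log4-\log F(u)+\log F(v)$. I would try to write this as a positive combination of terms of the form $1/[vF(v)^r]$ with $0\le r\le1$, for example via an integral representation of $1/\log$, and then invoke the convexity statement of Lemma~\ref{as:profile-convex}. Failing that, I would show directly that $1/v$ and $(1-a/4)/\log(4/a)$ are positive, decreasing and convex in $v$, since a product of two such functions is convex. I expect this to be the fiddly part of the convexity claim.

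For the explicit inverse, set $L=-\log a\in[0,u-2]$. Then $u-v=(2u+1)L/(4u+1)$, and $v\ge u/2$, so $v>1$. The condition $F^{-1}(F(u)/a)\le v$ is equivalent to $\log F(v)-\log F(u)\ge L$, because $F$ is decreasing. The second inequality says $\log F(v)-\log F(u)\ge 2(u-v)-\log\frac{2u+1}{2v+1}$. Both statements therefore reduce to lower bounds on $\int_v^u-(\log F)'(y)\,dy$. The natural route is a pointwise bound $-(\log F)'(y)\ge 2-2/(2y+1)$, which integrates exactly to the second claim. I would obtain it by differentiating $H(\tanh y)=y(1-\tanh y)+\log(1+e^{-2y})$ from \eqref{eq:entropy-endpoint} and comparing with $\frac{d}{dy}\log[(2y+1)e^{-2y}]$, in the spirit of \eqref{rt:scaled-profile}.

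\emph{Main obstacle.} The first claim does not follow from this integrated bound. Since $2(u-v)=L+L/(4u+1)$ while $\log\frac{2u+1}{2v+1}\ge 2L/(4u+1)$, the crude bound falls short by order $L/u$. I would therefore sharpen the pointwise estimate, retaining the exponentially small corrections from $\log(1+e^{-2y})$ and $1-\tanh y$ that \eqref{rt:scaled-profile} records as the factor $(1+e^{-2v})/(1-e^{-2v})$, together with the $O(1/y^2)$ terms. I would then check, using $u>2$, $v\ge u/2$ and $L\le u-2$, that the refined integral still dominates $L$. If the margin remains too thin near $L=u-2$, I would fall back on a finite interval check in $(u,L)$ of the kind done in the verification repository.
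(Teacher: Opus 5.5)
\textbf{The explicit-inverse part rests on a reversed inequality.} Your convexity arguments for the two arcsine majorants, the linear term and the quadratic choice of $\psi$ are correct. Since $F$ is decreasing, $F^{-1}(F(u)/a)\le v$ is equivalent to $F(v)\le F(u)/a$, that is, to $\log F(v)-\log F(u)\le L$, not $\ge L$. So the first claim is an \emph{upper} bound on $\int_v^u-(\log F)'(y)\,dy$, and only the second claim is a lower bound.

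The shortfall of order $L/u$ you found is therefore not caused by crude estimates. The inequality you set out to prove is false: for $u=10$ and $L=4$ one has $\log F(v)-\log F(u)\approx3.88<4$. No sharpening or interval check can rescue it. What is needed instead is the pointwise bound $-(\log F)'(y)\le 2-1/(2u+1)=(4u+1)/(2u+1)$ on $[v,u]$, which integrates to exactly $\log F(v)-\log F(u)\le L$. With $q=e^{-2y}$ and $g(q)=(1+q)\log(1+q)/q$ one has $e^{2y}F(y)=[2y+g(q)]/(1-q)$, so
\[
 -(\log F)'(y)=2-\frac{2-2qg'(q)}{2y+g(q)}+\frac{2q}{1-q}.
\]
Since $1\le g\le 1+q$ and $0\le qg'\le q/2$, this lies above $2-2/(2y+1)$, which is your bound for the second claim. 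It also lies below $2-(2-q)/(2y+1+q)+2q/(1-q)$. For the upper bound, $y\le u$ leaves a margin of roughly $1/(2u+1)$. Also $y\ge v\ge 2(u+1)^2/(4u+1)$ gives $q<e^{-u-7/4}$, and an elementary check shows that this exponential error fits inside the margin for every $u>2$. The two-sided scaled profile bound alone would not suffice here: its factor $\coth v$ leaves an additive error that does not vanish as $L\to0$.

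\textbf{The logarithmic choice of $\psi$ is left open, and your fallback fails.} The factor $(1-a/4)/\log(4/a)$ increases with $a$, and $a=F(u)/F(v)$ increases with $v$. So this factor is increasing, not decreasing, in $v$, and the product rule for decreasing convex functions does not apply. Your first idea does work, but you need the right representation. An integral representation of $1/\log$ alone produces exponents in $[0,\infty)$. Use instead $(1-b)/\log(1/b)=\int_0^1 b^s\,ds$ with $b=a/4$:
\[
 \frac{\psi(a)}{av}=2\int_0^1\Bigl(\frac{F(u)}{4}\Bigr)^s\frac{ds}{vF(v)^s}.
\]
This is a positive mixture of the functions $1/[vF(v)^s]$ with $0\le s\le1$, each convex by the profile-shape lemma. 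The paper itself does not reprove this lemma; it cites Lemmas~7.7 and~7.8 of the companion paper, so the arguments above are the ones you need to supply.
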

These estimates reduce each closed height interval to its endpoints
and avoid numerical inversion of the profile.

\subsection{The analytic spectral margin}

The two expressions defining $\psi$ in \eqref{eq:retained-variance}
give a quadratic and a logarithmic upper bound for the retained cost.
For the analytic estimate below, use $\pi^2/(4v)$ for the arcsine
contribution, the logarithmic bound for $a\le1/8$, and the quadratic
bound for $a\ge1/8$. At the common boundary the logarithmic cost is
smaller, since $\log2>31/45$.
Each branch is convex, so its maximum is attained at an endpoint.
Only the two outer heights and $\ell_u(1/8)$ need to be checked.
The fixed influence cutoff leaves a uniform gap between these costs
and $N/u$, producing a production margin proportional to $u$.

\begin{proof}[Proof of the analytic part of Lemma~\ref{st:spectral-margins}]
Use the stated parameters in \eqref{eq:spectral-parameters} and
\eqref{st:spectral-cost}.
Then $\gamma=e^{2t-3}$, so admissibility holds.
We may use the fixed cutoff $a\le31/32$.

The elementary bounds $e^3>20$, $e^2>7$, $8/3<e<14/5$,
and $9/13<\log2<7/10$ will suffice.
For $u\ge8$, formula \eqref{rt:scaled-profile} gives
$2u+1\le e^{2u}F(u)\le(1001/1000)(2u+1)\le15u/7$,
while $e^8F(4)\ge9$.
Moreover,
\[
 t<u-3,\quad t/u\ge99/160,\quad \gamma>1000,\quad
 k(t)\rho^2>t-1/32,\quad \gamma C_*\Psi(\rho)<1/250.
\]
For the spectral error, use
$t^2/\gamma<u(u-3)^2e^{7-2u}<1/40$ and
$1-\rho^2\le4e^{-2u}$.
For the entropy charge, use $C_*<12/7$, $\Psi(\rho)<H(\rho)$,
and $e^7>1000$. The expressions involving an exponential decrease
from $u=8$; $(2+\tfrac12\log u)/u$ also decreases.
In particular, $k(t)\rho^2/u>3/5$, and the baseline
$N/u$ from \eqref{st:spectral-cost} exceeds $29/20$.
The transfer condition follows from
$u/\gamma<1/125$, $1-\rho^2<1/100000$, and $C_*>1$.

For discarded coordinates,
$\tau<\gamma F(u)/(4F(4))\le5e/84<1/6$.
Hence $\rho+\pi^2\tau/4<17/12<29/20$.
Lemma~\ref{as:convex-majorant} leaves three retained heights.

At $v=4$, one has $a\le(u/4)e^{-2(u-4)}<1/100$ and
$\log(4/a)>2t-3/2$. Since $k(2)>3/2$,
$\Delta_2\,2(1-a/4)/\log(4/a)<1$.
Also $\Delta_1-\Delta_2<2$, so the logarithmic majorant is
less than $101/80<4/3$.

For the quadratic branch, combine the degree corrections before
estimating their error. Their numerator satisfies
\begin{align*}
 1+a\Delta_1+\tfrac12(1-a)\Delta_2
 & = k(t)\left[a\rho^2+\tfrac12(1-a)\rho^4\right]+a[1-\rho^2k(1)]+(1-a)[1-\rho^4k(2)/2]\\
 &\le \tfrac12 k(t)(1+a)+2/\gamma+8e^{-2u}.
 \end{align*}
Indeed, the two terms in the convex combination are at most
$1/\gamma+4e^{-2u}$ and $2/\gamma+8e^{-2u}$.
The common error is below $1/400$. Thus, whenever $v\ge u-6/5$,
the quadratic majorant is bounded by
\[
 \frac{\pi^2k(t)(1+a)}{8v}+\frac1{1000}.
\]
At $a=1/8$, the bound $\beta<1/100$ gives $v>4$.
The profile derivative in Lemma~\ref{as:profile-convex} then gives
$v>u-(4/7)\log8>u-6/5>t$, so the majorant is below
$45/32+1/1000<17/12$.

At the top endpoint, put $a=31/32$ and $v=\ell_u(a)$.
The profile derivative gives
$u-v\le(4/7)\log(32/31)<1/50$, and hence
\[
 \frac{u(1+a)}{2v}\le\frac{75}{76},\qquad
 \rho^2-\frac{u(1+a)}{2v}>\frac1{80}.
\]
For the second inequality, use $e^8>2500$.
Since $k(t)/u>3/5$, $\pi^2/4>12/5$, and $v>u-6/5$,
the normalized baseline $N/u$ exceeds the top majorant by more than
\[
 \frac9{500}-\frac1{2000}-\frac1{1000}>\frac1{80}.
\]
The two subtracted terms bound the entropy charge and quadratic error.
The other costs have the larger margin $29/20-17/12=1/30$.
Thus $N/u-b>1/80$ and $b<N/u<5/2$, giving $N/b-u>u/200$.
Finally $C_*<12/7$ and $b\ge\rho>97/100$ give
$\Gamma/b<4e^{2u-7}/u$.
\end{proof}

\subsection{Normalization of the compact spectral margins}
\begin{proof}[Proof of the compact part of Lemma~\ref{st:spectral-margins}]
The compact certificate described in Appendix~\ref{sec:compact-certificate}
covers 24 intervals and gives normalized margins
greater than $1/1000$ for the discarded-coordinate inequality and
$1/100$ for each of the four retained-height inequalities.
We record their positive normalization factors to translate these
signs into a production margin.

Put $w=1/\gamma$, $\xi=e^{-9u/20-1/4}$, and
$R_0=\frac12e^{53u/100-9/20}$.
At height $v$, write $a=F(u)/F(v)$.
Let $\widetilde{\mathcal Q}_u(v)$ denote the chosen
supporting-line majorant of $\mathcal Q_u(F(u)/F(v))$
from Appendix~\ref{sec:compact-spectral}, using the explicit
upper bound for $a$ at the top height.
The retained normalization factor is
\[
 v(1+tw)(1+w)(1+2w)
 \begin{cases}\log(4/a),&\text{logarithmic cost},\\
 1,&\text{quadratic cost}.
 \end{cases}
\]
The discarded-coordinate factor is
$(1+R_0+t\xi)(1+t\xi)zF(z)$.
Multiplication by these factors gives exactly the expressions
checked by \path{compact_uniform_verify.py}, applied to
$N-u\widetilde{\mathcal Q}_u(v)$ and
$N-u(\rho+\pi^2\tau/4)$, respectively.

The compact parameter estimates give $w<1/7$, $tw<2/5$,
$t\xi<1$, and $R_0<55/2$, using $e^4<55$.
Also $zF(z)<\log2<7/10$. Thus the discarded factor is below $56$.
Since $z>1/3$, $F(z)<3$, and
$F(u)\ge(2u+1)e^{-2u}$, each logarithmic height has
$a\ge e^{-2u}$ and hence $\log(4/a)<18$.
The retained factor is below
$8(7/5)(8/7)(9/7)\cdot18<306$.
It follows that $N-ub>1/56000$.
Furthermore $b<N/u<(\pi^2/4)t/u<3$, so $m>1/168000>1/200000$.
Finally $e^8<3000$ and $e^4<55$ give $\gamma<1600$ and hence
$\Gamma/b<(1600)(12/7)/(97/100)<3000$.
All additional endpoint constants and stronger recorded margins are
checked by \path{check_stability_constants.py}.
\end{proof}

\subsection{The marked-coordinate parameter bounds}\label{app:marked-parameters}

The following elementary bounds control the discarded cost and the
entropy loss on the whole interval; no subdivision certificate is needed.

\begin{lemma}\label{app:neighborhood-constants}
With the parameters of Corollary~\ref{ub:join}, every $u\ge9/2$
satisfies $Cz>7/25$, $C(u+4/7)<1$, and
\[
 \lambda\bigl[8e^{-2u}+(u-3/4)e^{-u}\bigr]<\frac{29}{100}.
\]
\end{lemma}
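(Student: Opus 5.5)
We need to check three elementary inequalities for $z=19u/40-1/2$, $\lambda=21u/20+2$, and $u\ge 9/2$, where
\[
 C=\frac{\rho^2(1+\rho^2)e^{\lambda-3}\beta}{2z},\qquad \beta=\frac{F(u)}{F(z)},\qquad \rho=\tanh u.
\]
The plan is to write everything in terms of explicit exponentials using \eqref{rt:scaled-profile}. Since $2v+1\le e^{2v}F(v)\le\frac{1+e^{-2v}}{1-e^{-2v}}(2v+1)$, we have
\[
 \beta=e^{-2(u-z)}\,\frac{e^{2u}F(u)}{e^{2z}F(z)}.
\]
The exponent of $Cz$ is then $\lambda-3-2(u-z)=21u/20-1-21u/20-1=-2$. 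The exponential factors therefore cancel exactly, and
\[
 Cz=\tfrac12\rho^2(1+\rho^2)e^{-2}\,\frac{e^{2u}F(u)}{e^{2z}F(z)}.
\]

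For the lower bound $Cz>7/25$, bound $e^{2u}F(u)\ge 2u+1$ and $e^{2z}F(z)\le \frac{1+e^{-2z}}{1-e^{-2z}}(2z+1)$. With $2z+1=19u/20$, this gives
\[
 Cz\ge \tfrac12\rho^2(1+\rho^2)e^{-2}\,\frac{2u+1}{19u/20}\cdot\frac{1-e^{-2z}}{1+e^{-2z}}.
\]
The ratio $(2u+1)/(19u/20)>40/19$, and at $u=9/2$ we have $z=1.6375$, so the hyperbolic factor $\tanh z$ exceeds $0.92$. Also $\rho^2(1+\rho^2)\ge 1.999$. Altogether $Cz\gtrsim \frac{1}{2}\cdot 2\cdot 0.1353\cdot 2.105\cdot 0.92\approx 0.262$.

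This is below $7/25=0.28$, so the crude bound $2u+1\le e^{2u}F(u)$ is too lossy. I will use the exact identity $e^{2v}F(v)=e^{2v}H(\tanh v)/\tanh v$ together with \eqref{eq:entropy-endpoint}, $H(\rho)>u(1-\rho)$, or better $H(\tanh u)=u(1-\tanh u)+\log(1+e^{-2u})$, to sharpen both numerator and denominator. The function $z\mapsto e^{2z}F(z)/(2z+1)$ tends to $1$, so the true ratio is about $(2u+1)/(2z+1)\cdot e^{-2}\cdot(\text{correction})$. Since $(2u+1)/(19u/20)$ decreases to $40/19$, the worst case should be at $u=9/2$ or as $u\to\infty$, and I will evaluate both carefully. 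At infinity the limit is $(40/19)e^{-2}\approx0.2849>0.28$. The main obstacle is the tight margin near $u=9/2$, where the $z$-profile correction factor $e^{2z}F(z)/(2z+1)$ matters. I will split $[9/2,\infty)$ into a few intervals, use monotonicity of each factor on each piece, and evaluate the exact $F$ at the endpoints.

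For $C(u+4/7)<1$, use the upper bounds: $e^{2u}F(u)\le\frac{1+e^{-2u}}{1-e^{-2u}}(2u+1)$ and $e^{2z}F(z)\ge 2z+1$. This gives
\[
 C(u+4/7)\le \rho^2(1+\rho^2)\,\frac{e^{-2}}{2}\cdot\frac{(u+4/7)(2u+1)}{z\,(19u/20)}\cdot(1+\epsilon),
\]
where $\epsilon$ is exponentially small. Since $z\approx 19u/40$, the ratio $(u+4/7)(2u+1)/(z\cdot 19u/20)$ approaches $(40/19)^2\cdot\ldots$; more precisely it tends to about $4.43$. With $e^{-2}\approx0.135$, we get roughly $0.6$ at infinity. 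The ratio is decreasing in $u$ because of the $-1/2$ in $z$, so it suffices to check $u=9/2$, where $z=1.6375$ and the value is about $0.64\cdot 2\cdot$ (small factor), comfortably below $1$.

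Finally, $\lambda[8e^{-2u}+(u-3/4)e^{-u}]$ is a product of a linear function with a decreasing exponential tail for $u\ge 9/2$. Its derivative sign can be checked directly, showing the expression is decreasing there. It then suffices to evaluate at $u=9/2$: $\lambda=6.725$ and $e^{-4.5}\approx0.0111$, giving $6.725\cdot(0.00099+3.75\cdot0.0111)\approx0.287<0.29$. This margin is tight as well, so I will use rational bounds such as $e^{4.5}>90$. The first inequality remains the delicate one.
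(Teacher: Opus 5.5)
Your reduction matches the paper's: the cancellation $\lambda-3-2(u-z)=-2$, the two-sided bound \eqref{rt:scaled-profile}, and monotonicity plus evaluation at $u=9/2$ for the last two inequalities. The gap is in $Cz>7/25$, which the proposal never actually proves.

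You reject your own displayed bound
\[
 Cz\ge\tfrac12\rho^2(1+\rho^2)e^{-2}\,\frac{2u+1}{2z+1}\,\tanh z
\]
only because of how you evaluated it. You replaced $(2u+1)/(2z+1)$ by its limit $40/19$, which is approached only as $u\to\infty$, and at the same time took $\tanh z$ at its value for $u=9/2$. These two worst cases occur at opposite ends of the range. At $u=9/2$ the product is about $2.17$, not $1.94$.

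What is missing is to couple the two factors. For $u\ge9/2$ one has $e^{2z}>4u+1$:
\begin{itemize}
\item at $u=9/2$ it holds because $2z>3$ and $e^3>20>19$;
\item beyond that, $2z-\log(4u+1)$ has derivative $19/20-4/(4u+1)>0$.
\end{itemize}
Hence $\tanh z=(1-e^{-2z})/(1+e^{-2z})>2u/(2u+1)$, and so, uniformly in $u$,
\[
 \frac{2u+1}{2z+1}\tanh z>\frac{20(2u+1)}{19u}\cdot\frac{2u}{2u+1}=\frac{40}{19}.
\]
With $e^2<15/2$ and $\rho^2(1+\rho^2)/2>1-8/90^2$, this gives $Cz>(1-8/90^2)\cdot\frac{2}{15}\cdot\frac{40}{19}>\frac{7}{25}$. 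This is exactly the paper's argument.

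The subdivision you propose instead has three problems:
\begin{itemize}
\item it is never carried out;
\item it would need a separate tail argument on an unbounded piece;
\item its suggested sharpening of the numerator is beside the point, since $e^{2u}F(u)\ge2u+1$ is already sharp up to $O(e^{-2u})$. The loss comes entirely from the factor $\tanh z$ in the denominator bound.
\end{itemize}

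For $C(u+4/7)<1$ your route is again the paper's, but your assessment of the margin is wrong. At $u=9/2$ the bound is about $0.98$, not ``comfortably below $1$''. With $e^{-2}<3/22$ and $\coth u<1001/1000$ it equals $7384/7467$. A cruder constant such as $e^{-2}<0.14$ already pushes it above $1$. So this step needs the explicit rational evaluation, which the proposal does not supply.

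The third inequality is handled as in the paper. There, $e^{9/2}>90$ gives the value $185879/648000<29/100$ at $u=9/2$.
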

\begin{proof}
The exponential series gives $90<e^{9/2}<96$,
$22/3<e^2<15/2$, and $20<e^3<21$.
Hence \eqref{rt:scaled-profile} gives
$e^{2u}F(u)<(1001/1000)(2u+1)$.
Also $e^{2z}>4u+1$: it holds at $u=9/2$ by $e^3>20$,
and $e^{2z}/(4u+1)$ increases. Since $2z+1=19u/20$,
the same profile bound gives
\[
 \frac{e^{2u}F(u)}{e^{2z}F(z)}
 \ge\frac{2u+1}{2z+1}\tanh z>\frac{40}{19}.
\]
The cancellation $\lambda-3-2u+2z=-2$ yields
$Cz=\rho^2(1+\rho^2)e^{-2}e^{2u}F(u)/(2e^{2z}F(z))$.
Use $\rho^2(1+\rho^2)/2>1-8/90^2$ to obtain
$Cz>(1-8/90^2)(2/15)(40/19)>7/25$.
For the upper bound,
\[
 C(u+4/7)<\frac3{22}\frac{1001}{1000}
       \frac{2u+1}{2z+1}\frac{u+4/7}{z}.
\]
Both ratios decrease. At $u=9/2$, the right side is
$7384/7467<1$.

Both summands in the final inequality decrease for $u\ge9/2$.
For the second, the logarithmic derivative before multiplication
by $e^{-u}$ is less than $1/u+1/(u-3/4)<1$.
At $u=9/2$, use $e^{9/2}>90$; the resulting upper bound is
$185879/648000<29/100$.

All constants here and in Corollary~\ref{ub:join} follow from
rational arithmetic and
\[
 \sum_{j=0}^{32}\frac{x^j}{j!}<e^x<
 \sum_{j=0}^{32}\frac{x^j}{j!}
       +\frac{x^{33}}{33!}\frac1{1-x/34}\qquad(0<x<34).
\]
The upper bound sums a geometric majorant of the remainder.
The logarithm bound follows from the first three terms of
$\log2=2\sum_{j\ge0}[(2j+1)3^{2j+1}]^{-1}$.
\end{proof}

\begin{proof}[Proof of Corollary~\ref{ub:join}]
Fix $u\ge9/2$ and $0<\delta\le1/64$, and use the stated parameters.
We have $z>1>\rho$ and $\lambda>2$.
The scaled profile bound gives $\beta<(5/2)e^{-21u/20-1}$,
so in particular $\beta<1/32$ and $\beta<d_-(u)/2$.
Lemma~\ref{app:neighborhood-constants} gives
$Cz>7/25$ and $C(u+4/7)<1$ throughout $u\ge9/2$.
Thus $\omega=C+\rho<7/5$, while $\omega>\rho>97/100$
and $\log2>2/3$ give $2\omega\log2>1$.

To check \eqref{ub:new-heads}, observe that its left side minus its
right side is concave in $a$. Since $2\delta\le1/32$, it suffices
to check $a=\beta$ and $a=1/32$; if $2\delta<\beta$, there is no
retained coordinate to check. At the first endpoint the height is $z$; at the second,
$\ell_u(1/32)>u-2$. Indeed, \eqref{rt:scaled-profile} and $e^2>22/3$ give
$F(u-2)/(32F(u))>1$, with the lower bound increasing from $9/2$.

When $\ell_u(a)\ge v$, the left side minus the right side of
\eqref{ub:new-heads} is bounded below by
\[
 v-1-\frac{7u}{40}+\frac{7v}{25z}
       -2ve^{-2u}-\frac{7u}{40}e^3a.
\]
Take $(v,a)=(z,\beta)$ at the first endpoint. Its linear part
is $3u/10-61/50$. The two losses are bounded by
$2ze^{-2u}$ and
$(7/19)(1001/1000)(u+1/2)e^{2-21u/20}$, which decrease.
At $u=9/2$, use $e^{9/2}>90$ and $e^{109/40}>15$;
the resulting lower bound exceeds $1/200$.
For the fixed endpoint, take $(v,a)=(u-2,1/32)$ and use
$e^3<21$. The linear part and $v/z$ increase, while the exponential
loss decreases; evaluation at $9/2$ gives a lower bound $3/5$.

The marked profile satisfies
\begin{equation}\label{rt:profile-shift}
 u-(1-2\delta)\ell_u(1-2\delta)\le2\delta(u+4/7).
\end{equation}
Indeed, for $3/4\le a\le1$, Lemma~\ref{as:profile-convex} gives
$\ell_u(a)\ge u+\log a>4$, and its logarithmic derivative bound gives
\[
 \frac{d}{da}[a\ell_u(a)]
 =\ell_u(a)-\frac{F(\ell_u(a))}{F'(\ell_u(a))}\le u+4/7.
\]
Integrate from $1-2\delta$ to $1$.

Write $\mu=\E f$, $\Delta=\Delta_\rho(f)$,
$Z=\Delta/H(\rho)$, and $P=(1+Z_+)D(T_\rho f)/\rho$.
Dropping $\mathcal J\ge0$ in \eqref{eq:entropy-remainder}
and using $|\mu|\le2\delta$ gives
\[
 \Var(g)\ge\frac{\Phi(\rho)-\Psi(\mu)-\Delta}{\log2},
 \qquad \Psi(\mu)\le2\delta^2\log2.
\]
Substitute these bounds and \eqref{rt:profile-shift} into
Proposition~\ref{st:robust-marked}. This gives
\[
 \begin{aligned}
  \omega(P-u)+\frac{\lambda\Delta}{2\log2}
  \ge{}&\delta\bigl[
  2\lambda\rho^2-(2\rho^2+1)\lambda\delta
  -2\omega(u+4/7)\bigr]\\
  &-\frac{\lambda\Psi(\rho)}{2\log2}.
 \end{aligned}
\]
Use $\rho^2\ge1-4e^{-2u}$, $C(u+4/7)<1$, and $\delta\le1/64$.
Since $(2-3/64)\lambda-2u-22/7=13u/256+171/224$, for every
$0<d\le\delta$ this gives
\begin{equation}\label{eq:marked-common-margin}
 \omega(P-u)+\frac{\lambda\Delta}{2\log2}
 \ge\delta\left[
 \frac{13u}{256}+\frac{171}{224}-8\lambda e^{-2u}
 -\frac{\lambda\Psi(\rho)}{2d\log2}\right].
\end{equation}
First suppose $\log96\le u\le8$ and $\delta\ge d_-(u)$.
Put $q=e^{-2u}$. The entropy and noise bounds give
\[
 0\le\frac{\Psi(\rho)}{\log2}
 \le(3u-5/2+8q)q<3(u-3/4)q.
\]
Take $d=d_-(u)=(3/2)e^{-u}$ in \eqref{eq:marked-common-margin}.
Lemma~\ref{app:neighborhood-constants} bounds its two losses by $29/100$.
The remaining bracket is at least $13u/256+171/224-29/100$,
which increases from $62891/89600>7/10$ at $u=9/2$.
As $\omega<7/5$ and $\lambda/(2\omega\log2)<\lambda$, we conclude
\[
 P\ge u+\delta/2-\lambda\Delta_+
   \ge u+u\delta/40-\lambda\Delta_+.
\]
The second inequality uses $u\le8$.

For $u\ge8$ and $\delta\ge e^{-5u/4}$, instead take $d=e^{-5u/4}$.
Use $\Psi(\rho)\le H(\rho)\le(2u+1)e^{-2u}$ and $\log2>2/3$.
The two losses in \eqref{eq:marked-common-margin} total at most
\[
 8\lambda e^{-2u}+\frac{3\lambda}{4}(2u+1)e^{-3u/4}<2/5.
\]
Both terms decrease for $u\ge8$; the bound at $8$ follows from
$e^6>400$ and $e^8>2500$.
Since $13u/256+171/224-2/5>u/20$, the same bounds on $\omega$
and the loss coefficient give
$P\ge u+u\delta/40-\lambda\Delta_+$.

Finally, $\lambda<3u$ and $H(\rho)<1/3$ throughout $u\ge\log96$,
so $\lambda\Delta_+\le uZ_+$. The two ranges therefore satisfy
$P\ge u(1+\delta/40-Z_+)$, as required.
\end{proof}

\subsection{The local bound at the endpoint}

\begin{lemma}\label{app:endpoint-margins}
Put $u=\atanh\rho\ge8$ and $d(u)=e^{-5u/4}$.
For every Boolean source with $\delta(f)\le d(u)$,
$\Delta_\rho(f)\ge u(1-\rho)\delta(f)/2$.
\end{lemma}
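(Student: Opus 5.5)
The plan is to use the local comparison of Lemma~\ref{ub:local} together with concavity of $M_\rho$, and then evaluate $M_\rho(1-2d)$ at the single distance $d=d(u)=e^{-5u/4}$ by an explicit integral. Lemma~\ref{ub:local} holds for every Boolean source. It is applied to a coordinate with $|\widehat f(i)|=1-2\delta$ and gives $\Delta_\rho(f)\ge M_\rho(1-2\delta)$. By \eqref{eq:local-distance-conversion}, which rests on concavity and $M_\rho(1)=0$, it then suffices to prove
\[
 \frac{M_\rho(1-2d)}{d}\ge\frac{u(1-\rho)}2,\qquad d=e^{-5u/4},\ u\ge8.
\]
The dictator case $\delta=0$ is trivial.

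For this scalar bound I write $M_\rho(1-2d)=-\int_{1-2d}^1M_\rho'(a)\,da$, where
\[
 -M_\rho'(a)=(1-\rho^2)\rho\atanh(\rho a)-\rho^2H(\rho).
\]
The first term I treat through $\atanh(\rho a)=\tfrac12\log\frac{1+\rho a}{1-\rho a}\ge\tfrac12\log\frac{1+\rho(1-2d)}{1-\rho a}$. Then I substitute $s=1-\rho a$, which ranges over an interval of length $2\rho d$ ending at $s_0=1-\rho+2\rho d$. Note that $1-\rho\le2e^{-2u}\ll d$, so $s_0\approx2d$. Integrating $\tfrac12\log(1/s)$ exactly gives an average of about $\tfrac12(\log\frac1{2d}+1)$ over the interval. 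Since $\log(1/d)=5u/4$, the first term contributes
\[
 \frac1d\int(1-\rho^2)\rho\atanh(\rho a)\,da\approx2(1-\rho^2)\Bigl(\frac{5u}8+\frac12\Bigr)\approx(1-\rho)\Bigl(\frac{5u}2+2\Bigr),
\]
using $1-\rho^2=(1-\rho)(1+\rho)$.

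For the entropy term I do not use the crude bound $H(\rho)<(u+1)(1-\rho)$ of \eqref{eq:entropy-endpoint}. With it the leading terms cancel exactly to $u(1-\rho)/2$ and leave no room for errors. Instead I use the identity $H(\rho)=u(1-\rho)+\log(1+e^{-2u})$ together with $e^{-2u}=(1-\rho)(1+e^{-2u})/2$. This gives $H(\rho)\le(1-\rho)\bigl(u+\tfrac12+\tfrac12e^{-2u}\bigr)$, so the subtracted contribution $2\rho^2H(\rho)$ is at most $(1-\rho)(2u+1+e^{-2u})$. The total is then
\[
 \frac{M_\rho(1-2d)}d\ge(1-\rho)\Bigl(\frac u2+1-\text{errors}\Bigr).
\]
The margin $(1-\rho)$ has to absorb all error terms.

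The main obstacle is bookkeeping these errors with explicit constants. They come from four sources:
\begin{itemize}
\item replacing $1+\rho a$ by $2$, a relative error of $O(d)$;
\item the shift $s_0-2d=1-\rho-2(1-\rho)d=O(e^{-2u})$, which changes the logarithm by $O(e^{-3u/4})$;
\item the factor $\log\frac1{2}$ hidden in $\log\frac1{2d}$, which must be kept, since $\tfrac12\log2\cdot2<1$ still leaves a positive constant;
\item $1+\rho\ge2-2e^{-2u}$.
\end{itemize}
I would carry out the integral $\int_{s_0-2\rho d}^{s_0}\log(1/s)\,ds$ exactly as $[s-s\log s]$ and bound each error by a quantity like $u e^{-3u/4}$ or $e^{-u}$, all decreasing in $u$. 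A single evaluation at $u=8$ with $e^6>400$ and $e^8>2500$, as in the proof of Corollary~\ref{ub:join}, should then show that the errors total well below $1-\tfrac12\log2\cdot2\cdot$(const). This is the step where the constants must be checked carefully. If the margin proves too thin, the fallback is to keep the exact average $\tfrac12\log\frac{1+\rho a}{1-\rho a}$ rather than bounding $1+\rho a$ below.
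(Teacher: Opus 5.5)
Your proposal is correct and follows the paper's proof. Both arguments apply Lemma~\ref{ub:local} with the distance conversion \eqref{eq:local-distance-conversion} at $d=e^{-5u/4}$, then prove a scalar estimate whose leading terms give $(1-\rho)(u/2+1)$, with the entropy charge bounded by $H(\rho)\le p(2u+1+2p)$, $p=(1-\rho)/2$, essentially as you do. The only difference is in the bookkeeping: your integral equals $H(\rho(1-2d))-H(\rho)=H_{\rm b}(p+\rho d)-H_{\rm b}(p)$, and the paper bounds it more briefly using monotonicity, $H_{\rm b}(p+\rho d)\ge H_{\rm b}(d)$, together with $s[\log(1/s)+1-s]\le H_{\rm b}(s)\le s[\log(1/s)+1]$, so that all errors total at most $8ue^{-3u/4}<1$ for $u\ge8$.
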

\begin{proof}
Put $\delta=\delta(f)$. The dictator case is immediate.
For $0<\delta\le d(u)$, compare the entropy scales at
$d=e^{-5u/4}$ and $p=(1-\rho)/2$. Since
$d\le p+\rho d\le1/2$, we have
$H_{\rm b}(p+\rho d)\ge H_{\rm b}(d)$. Using
$s[\log(1/s)+1-s]\le H_{\rm b}(s)\le s[\log(1/s)+1]$
and $\log(1/p)\le2u+2p$ gives
\[
 \begin{split}
 \frac{M_\rho(1-2d)}{2dp}
 &\ge2(1-p)(5u/4+1-d)
       -(1+2p/d)(2u+1+2p)\\
 &\ge\frac u2+1-8u e^{-3u/4}>\frac u2.
 \end{split}
\]
Indeed, the leading terms give $u/2+1$. Since
$d,p,p/d\le e^{-3u/4}$ and $p<1/2$, their correction is at most
$(13u/2+10)e^{-3u/4}\le8u e^{-3u/4}$.
This decreases for $u\ge8$ and is less than $64/400<1$ at $8$.
Equation~\eqref{eq:local-distance-conversion} therefore gives
$\Delta_\rho(f)\ge u(1-\rho)\delta/2$ for $\delta\le d(u)$,
without a monotonicity assumption.
\end{proof}

\section{Certificates and reproducibility}
\label{sec:certificates}\label{app:certificate}
The checks certify strict scalar inequalities and coverage of their
domains. Low-correlation signs use rational series bounds; other
transcendental signs use outward-rounded Arb intervals enclosing the
exact values. A sign passes only when its whole enclosure has that
sign. Exact rational and algebraic arithmetic checks coverage, and
subdivisions retain both children unless infeasibility is proved.
Convexity justifies the supporting lines; their integrals require
positive denominators and the stated final signs. The compact inverse
profile is bounded analytically, and separate scripts check the finite
constants in Section~\ref{sec:radius-tail}.

\subsection{Low-correlation margin checks}\label{app:low-margins}

For $\alpha\le69/100$, $|\mu|\le61/100$, and
$3/5\le\rho\le457/500$, the quadratic Fourier calculation in
\cite[proofs of Proposition~3.15 and Lemma~A.2]{VuTranCK} gives
$\Delta_\rho(f)\ge cV(\rho)+\mu^2/100$, where
\[
 c=\frac{57}{76-49\rho^2},\qquad
 V(\rho)=(1-\rho)
 \left(1+\rho-\frac{193027}{250000}\rho^2\right)
 -\frac{76-49\rho^2}{57}H(\rho).
\]
The companion's proof shows that $V$ decreases on this interval.
The endpoint check $V(457/500)>1/50000$ therefore gives
$V(\rho)>1/50000$ throughout. Since $c>3/4$, this proves the
deficit bound in \eqref{st:low-energy-input}. At $\rho=3/5$,
the same Fourier estimate gives $A_{3/5}(f)<9/50-1/1000$.
The local anchor satisfies
$M_{457/500}(69/100)>(457/500)^2/1000$. For the large-mean branch,
$H(61/100)\le347/500-(61/100)^2/2-(61/100)^4/12<1/2-1/1000$,
using $\log2<347/500$ and the entropy series.
The base comparisons and propagation are checked by
\path{low_correlation_verify.py}; the stronger margins used here
are enforced by \path{check_stability_constants.py}.

\subsection{Intermediate-range margin checks}\label{app:middle-margins}

The intermediate source cover uses the rational parameters in
Table~\ref{tab:middle-parameters}. Every finite decimal denotes an exact rational number.

\begin{table}[ht]
\centering\small
\begin{tabular}{@{}ccc@{}}
\toprule
Band & $M$ & $A_*$\\
\midrule
$[.914,.95]$ & $.82523844$ & $.66996$\\
$[.95,.97]$ & $.787706016$ & $.72245$\\
$[.97,.974]$ & $.77592593$ & $.73625$\\
$[.974,.975]$ & $.772609873$ & $.73996$\\
$[.975,.98]$ & $.753070335$ & $.76043$\\
\bottomrule
\end{tabular}
\vspace{3mm}
\caption{Parameters for the intermediate correlation bands.}
\label{tab:middle-parameters}
\end{table}

The scalar verifier uses
$\chi_\rho(t)=-2H(t)/(1-t)+2\kappa_\rho(1+t-\rho^3)$,
for $0\le t<1$. At both correlation endpoints,
its certified supporting line has values below $-1/4000$ at
$t=0,1$. By concavity, $\chi_\rho$ lies below this supporting
line, whose values are below $-1/4000$ throughout $[0,1]$.
Hence $\chi_\rho(t)\le-1/4000$,
which is exactly \eqref{eq:middle-entropy-slack} for $t<1$.
The local endpoint signs and clock margins in
\eqref{eq:middle-local-clock-slack} are recomputed from the same
source data by \path{check_stability_constants.py}.

\subsection{The compact spectral parameters}\label{sec:compact-spectral}

For $45951/20000\le u\le8$, use the explicit rule of
\cite[Section~7.6]{VuTranCK}:
\[
 z=\log u+u/25-2/5,\quad t=49u/100+7/5,\quad
 \gamma=\tfrac12e^{49u/50-1/5}+e^{9u/20+1/4}.
\]
That section proves admissibility and the transfer condition
analytically; its normalized transfer margin exceeds $26/875$.
For the stronger stability margins, take the influence cutoff
$c(u)=1-2e^{-u}$ through $u=4$ and $c(u)=31/32$ thereafter.
Here $c(u)$ is the cutoff $\bar\alpha$ in
\eqref{st:spectral-cost}.
Both values are checked at $u=4$. Set
\[
 v_+(u)=u+\frac{2u+1}{4u+1}\log c(u).
\]
Since $c(u)>e^{-1/4}>e^{2-u}$, Lemma~\ref{as:explicit-inverse},
applied at $a=c(u)$, gives $\ell_u(c(u))\le v_+(u)$.
As $\ell_u$ is increasing, every retained influence $a\le c(u)$
therefore satisfies $\ell_u(a)\le v_+(u)$.
Moreover, $c(u)>e^{-1/4}$ and $u>9/4$ give
$v_+(u)>u-1/4>2u/3$, while the explicit parameter rule gives
$0<z<2u/3$.

Use the logarithmic upper bound from Lemma~\ref{as:convex-majorant} on
$[z,2u/3]$, with the tangent to $R$ at $z$, and the quadratic
bound on $[2u/3,v_+(u)]$, with the tangent at $v_+(u)$.
Each bound is convex in height and needs only its two endpoints.
At the top height, use the explicit upper bound for
$F(u)/F(v_+(u))$; both influence costs increase with this argument.
These choices define the four retained-height checks used in the
normalization argument of Appendix~\ref{app:stability-inputs}.

\subsection{The compact certificate}\label{sec:compact-certificate}

The file \path{compact_uniform_certificate.json} covers
$[45951/20000,8]$ by 24 closed rational intervals in $u$: eight
intervals through $4$, followed by 16 intervals of length $1/4$.
Since $45951/20000<\atanh(49/50)$, these intervals, together
with the stated influence cutoffs, cover the full compact range
required in Lemma~\ref{st:spectral-margins}.

The verifier \path{compact_uniform_verify.py} checks the four height
inequalities per interval at 256 bits, using fourth-order Taylor
enclosures. The independent audit \path{compact_uniform_audit.py}
uses separately derived unscaled formulas and 384-bit arithmetic over
the full extended height domain. Exact adjacency checks verify the
partition, and \path{check_stability_constants.py} extracts the
stability margins.

\subsection{Files and replay scope}

The repository for the certificates, verification code, and input data is \url{https://github.com/vukhacky/CK-stability}.
The repository's \path{README.md} gives installation
instructions; \texttt{python3 verify.py} checks the manifest and
replays all twelve numerical stages. Assertions must remain enabled.
No file from the companion package is required.


\begin{thebibliography}{18}
\setlength{\itemsep}{0.3em}

\bibitem[Chen et~al.(2026)]{IndependentCK2026}
Z.~Chen, A.~Gohari, A.~Javanmard, H.~Lin, V.~Mirrokni,
C.~Nair, and D.~P.~Woodruff.
\newblock A proof of the most informative {Boolean} function conjecture.
\newblock arXiv:2609.24931, 2026.

\bibitem[Courtade and Kumar(2014)]{CourtadeKumar2014}
T.~A. Courtade and G.~R. Kumar.
\newblock Which {Boolean} functions maximize mutual information on noisy
inputs?
\newblock \emph{IEEE Trans. Inf. Theory}, 60\penalty0
(8):\penalty0 4515--4525, 2014.

\bibitem[Falik and Samorodnitsky(2007)]{FalikSamorodnitsky2007}
D.~Falik and A.~Samorodnitsky.
\newblock Edge-isoperimetric inequalities and influences.
\newblock \emph{Combin. Probab. Comput.}, 16\penalty0
(5):\penalty0 693--712, 2007.

\bibitem[O'Donnell(2014)]{ODonnell2014}
R.~O'Donnell.
\newblock \emph{Analysis of {Boolean} Functions}.
\newblock Cambridge Univ. Press, 2014.

\bibitem[Samorodnitsky(2016)]{Samorodnitsky2016}
A.~Samorodnitsky.
\newblock On the entropy of a noisy function.
\newblock \emph{IEEE Trans. Inf. Theory}, 62\penalty0
(10):\penalty0 5446--5464, 2016.

\bibitem[Yu(2023)]{Yu2023}
L.~Yu.
\newblock On the {$\Phi$}-stability and related conjectures.
\newblock \emph{Probab. Theory Relat. Fields}, 186\penalty0
(3--4):\penalty0 1045--1080, 2023.

\bibitem[Yu(2026)]{Yu2026Local}
L.~Yu.
\newblock Local optimality of dictator functions with applications to
{Courtade--Kumar} and {Li--M\'edard} conjectures.
\newblock arXiv:2410.10147v5, 2026.

\bibitem[Vu and Tran(2026)]{VuTranCK}
V.~K.~Ky and T.~Tran.
\newblock Dictators are most informative.
\newblock arXiv:2609.24184, 2026.


\end{thebibliography}
\end{document}